\newtheorem{theorem}{Theorem}
\theoremstyle{plain}
\newtheorem{acknowledgement}{Acknowledgement}
\newtheorem{corollary}{Corollary}
\newtheorem{definition}{Definition}
\newtheorem{example}{Example}
\newtheorem{lemma}{Lemma}
\newtheorem{notation}{Notation}
\newtheorem{proposition}{Proposition}
\newtheorem{remark}{Remark}
\numberwithin{equation}{section}
\begin{document}
\title[Graphs, Zeta Functions, and Log-Coulomb Gases]{Graphs, local zeta functions, Log-Coulomb Gases, and phase transitions at
finite temperature}
\author[Z\'{u}\~{n}iga-Galindo]{W. A. Z\'{u}\~{n}iga-Galindo}
\address{University of Texas Rio Grande Valley\\
School of Mathematical \& Statistical Sciences\\
One West University Blvd\\
Brownsville, TX 78520, United States and Centro de Investigaci\'{o}n y de
Estudios Avanzados del Instituto Polit\'{e}cnico Nacional\\
Departamento de Matem\'{a}ticas, Unidad Quer\'{e}taro\\
Libramiento Norponiente \#2000, Fracc. Real de Juriquilla. Santiago de
Quer\'{e}taro, Qro. 76230\\
M\'{e}xico.}
\email{wilson.zunigagalindo@utrgv.edu, wazuniga@math.cinvestav.edu.mx}
\author[Zambrano-Luna]{B. A. Zambrano-Luna}
\address{Centro de Investigaci\'{o}n y de Estudios Avanzados del Instituto
Polit\'{e}cnico Nacional\\
Departamento de Matem\'{a}ticas, Av. Instituto Polit\'{e}cnico Nacional \#
2508, Col. San Pedro Zacatenco, CDMX. CP 07360 \\
M\'{e}xico.}
\email{bazambrano@math.cinvestav.mx}
\author[Le\'{o}n-Cardenal]{Edwin Le\'{o}n-Cardenal}
\address{CONACYT -- Centro de Investigaci\'{o}n en Matem\'{a}ticas, Unidad Zacatecas.
Quantum Ciudad del Conocimiento. Avenida Lasec, Andador Galileo Galilei,
Manzana 3 Lote 7. C.P. 98160. Zacatecas, ZAC. \\
M\'{e}xico.}
\email{edwin.leon@cimat.mx}
\subjclass[2000]{Primary 82B26, 11S40; Secondary 82B40.}
\keywords{Log-gases, phase transitions, local zeta functions.}

\begin{abstract}
We study a log-gas on a network (a finite, simple graph) confined in a bounded
subset of a local field (i.e. $\mathbb{R}$, $\mathbb{C}$, $\mathbb{Q}_{p}$ the
field of $p$-adic numbers). In this gas, a \ log-Coulomb interaction\ between
two charged particles occurs only when the sites of the particles are
connected by an edge of the network. The partition functions of such gases
turn out to be a particular class of multivariate local zeta functions
attached to the network and a positive test function which is determined by
the confining potential. The methods and results of the theory of local zeta
functions allow us to establish that the partition functions admit meromorphic
continuations in the parameter $\beta$ (the inverse of the absolute
temperature). We give conditions on the charge distributions and the confining
potential\ such that the meromorphic continuations of the partition functions
have a pole at a positive value $\beta_{UV}$, which implies the existence of
phase transitions at finite temperature. In the case of $p$-adic fields the
meromorphic continuations of the partition functions are rational functions
\ in the variable $p^{-\beta}$. We give an algorithm for computing such
rational functions. For this reason, we can consider the $p$-adic log-Coulomb
gases as exact solvable models. We expect that all these models for different
local fields share common properties, and that they can be described by a
uniform theory.

\end{abstract}
\maketitle

\section{Introduction}

In this article we study log-Coulomb gases on finite simple graphs confined in
bounded regions. The partition function of these gases are local zeta
functions (in the sense of Gel'fand, Atiyah, Igusa, Denef, Loeser, among
others). By using the theory of local zeta functions, we establish the
existence of phase transitions at finite temperature. \ The coordinates of the
sites having the charged particles can be taken from any local field
$\mathbb{K}$, for instance $\mathbb{R}$, $\mathbb{C}$, $\mathbb{Q}_{p}$.

An ultrametric space $(M,d)$ is a metric space $M$ with a distance satisfying
the strong triangle inequality $d(A,B)\leq\max\left\{  d\left(  A,C\right)
,d\left(  B,C\right)  \right\}  $ for any three points $A$, $B$, $C$ in $M$.
The field of $p$-adic numbers $\mathbb{Q}_{p}$ constitutes a central example
of an ultrametric space. The ultrametricity, which is the emergence of
ultrametric spaces in physical models, was discovered in the middle 1980s by
Parisi et al. in the context of the spin glass theory, see e.g. \cite{M-P-V},
\cite{R-T-V}. Ultrametric spaces constitute the right framework to formulate
models where hierarchy plays a central role. Ultrametric models have been
applied in many areas, including, quantum physics, $p$-adic string theory,
$p$-adic Feynman integrals, brain and mental states models, relaxation of
complex systems, evolutionary dynamics, cryptography and geophysics, among
other areas, see e.g. \cite{Anashin}, \cite{Av-4}-\cite{Av-5},
\cite{Bocardo-Gaspar et al}, \cite{D-K-K-V}, \cite{Khrenikov1}%
-\cite{Khrennikov2}, \cite{KKZuniga}-\cite{Mukhamedov-3}, \cite{Parisi}%
-\cite{Parisi-Sourlas}, \cite{V-V-Z}-\cite{Zuniga-Torba}, and the references therein.

The Ising models over ultrametric spaces have been studied intensively, see
e.g. \cite{DysonFreeman}, \cite{Gubser et al}, \cite{Khrennikov et al},
\cite{Khrennikov2}, \cite{Lerner-Missarov}, \cite{Mis}, \cite{Mukhamedov-1},
\cite{Mukhamedov-2}, \cite{Mukhamedov-3}, \cite{Sinai} and the references
therein. An important motivation comes from the hierarchical Ising model
introduced in \cite{DysonFreeman}. The hierarchical Hamiltonian introduced by
Dyson in \cite{DysonFreeman} can be naturally studied in $p$-adic spaces, see
e.g. \cite{Lerner-Missarov}, \cite{Gubser et al}. In \cite{Parisi-Sourlas},
see also \cite{Khrennikov-Kozyrev}, Parisi and Sourlas presented a $p$-adic
formulation of replica symmetry breaking. In this approach ultrametricity is a
natural consequence of the topology of the $p$-adic numbers. This work raises
the problem of knowing if it is possible to have a rigorous $p$-adic
formulation of the replication method. This requires, among other things, a
rigorous mathematical understanding of objects such as partition functions in
a $p$-adic framework. This is precisely the objective of the present work.
This article continues the investigation on $p$-adic Coulomb gases started in
\cite{Zuniga-Torba}.

The log-Coulomb gases in the Archimedean context has been studied extensively
in the case of complete graphs see e.g. \cite{Forrester}, \cite{Simon} and the
references therein. The case of arbitrary graphs seems completely new to the
best of our knowledge. Our results suggest that an adelic formulation of the
log-Coulomb gases seems feasible. On the other hand, in this article we study
log-Coulomb gases on finite simple graphs confined in the $p$-adic balls of
arbitrary dimension. A natural problem is to study these models in the
Archimedean framework and to compare them with the non-Archimedean
counterparts. To the best of our knowledge this type of systems has not been
study yet.

By a generalized Mehta integral, we mean an integral of the form%
\[
Z_{\varphi}(\boldsymbol{s})=\int\limits_{-\infty}^{\infty}\cdots
\int\limits_{-\infty}^{\infty}\varphi\left(  x_{1},\ldots,x_{N}\right)
\prod\limits_{1\leq i<j\leq N}\left\vert x_{i}-x_{j}\right\vert ^{s_{ij}}%
\prod\limits_{i=1}^{N}dx_{i},
\]
where $\varphi$ is a Schwartz function, and $\boldsymbol{s=}\left(
s_{ij}\right)  _{1\leq i<j\leq N}\in\mathbb{C}^{\frac{N(N-1)}{2}}$ with
$\operatorname{Re}\left(  s_{ij}\right)  >0$ for any $1\leq i<j\leq N$. The
original Mehta integral $F_{N}(\gamma)$ is exactly $F_{N}(\gamma)=\frac
{1}{\left(  2\pi\right)  ^{\frac{N}{2}}}Z_{\varphi}(\boldsymbol{s}%
)\mid_{s_{ij}=2\gamma}$, with $\varphi\left(  x_{1},\ldots,x_{N}\right)  =$
$e^{-\frac{1}{2}\sum_{i=1}^{N}x_{i}^{2}}$, and it is the partition function of
a $1$D log-Coulomb gas, see e.g. \cite{Forrester}, \cite{Forrester et al}. The
integral $Z_{\varphi}(\boldsymbol{s})$ is a particular case of a multivariate
local zeta function. These functions admit meromorphic continuations to the
whole $\mathbb{C}^{\frac{N(N-1)}{2}}$, see e.g. \cite{Loeser} . Nowadays,
there exists a uniform theory of local zeta functions over local fields of
characteristic zero, e.g. $\left(  \mathbb{R},\left\vert \cdot\right\vert
\right)  $, $\left(  \mathbb{C},\left\vert \cdot\right\vert \right)  $, and
the field of $p$-adic numbers $\left(  \mathbb{Q}_{p},\left\vert
\cdot\right\vert _{p}\right)  $, see \cite{Igusa-old}, \cite{Igusa}, see also
\cite{Denef}, \cite{DL1}, \cite{G-S}, \cite{Loeser}, \cite{Veys-Zuniga} and
the references therein. By using this theory, we can construct incarnations of
the integral $Z_{\varphi}(\boldsymbol{s})$ over $\mathbb{C}$ and
$\mathbb{Q}_{p}$, which admit meromorphic continuations to the whole
$\mathbb{C}^{\frac{N(N-1)}{2}}$. In addition, the possible poles of all these
functions can be described in a geometric way.

Given a local field $(\mathbb{K},\left\vert \cdot\right\vert _{\mathbb{K}})$
and a finite, simple graph $G$, we attach \ to them a $1$D log-Coulomb gas and
a local zeta function. By a gas configuration\ we\ mean a triple $\left(
\boldsymbol{x},\boldsymbol{e},G\right)  $, with $\boldsymbol{x}=\left(
x_{v}\right)  _{v\in V(G)}$, $\boldsymbol{e}=\left(  e_{v}\right)  _{v\in
V(G)}$, where $e_{v}\in\mathbb{R}$ is a charge located at the site $x_{v}%
\in\mathbb{K}$, and the interaction between the charges is determined by the
graph $G$. Given a vertex $u$ of $G$ ($u\in V(G)$), the charged particle at
the site $x_{u}$ can interact only with those particles located at sites
$x_{v}$ for which there exists an edge between $u$ and $v$ (we denote this
fact as $u\sim v$). The Hamiltonian is given by
\begin{equation}
H_{\mathbb{K}}(\boldsymbol{x};\mathbf{e},\beta,\Phi,G)=-%
{\textstyle\sum\limits_{\substack{u,v\in V\left(  G\right)  \\u\sim v}}}
\ln\left\vert x_{u}-x_{v}\right\vert _{\mathbb{K}}^{e_{u}e_{v}}+\frac{1}%
{\beta}P(\boldsymbol{x}), \label{Hamiltonian}%
\end{equation}
where $\beta=\frac{1}{k_{B}T}$ (with $k_{B}$ the Boltzmann constant, $T$ the
absolute temperature), $P:\mathbb{K}^{\left\vert V(G)\right\vert
}\mathbb{\rightarrow R}$ is a confining potential such that $\Phi\left(
\boldsymbol{x}\right)  =e^{-P(\boldsymbol{x})}$ is a test function, which
means that $P=+\infty$ outside of a compact subset.

The partition function attached to the Hamiltonian (\ref{Hamiltonian}) is
given by%
\begin{equation}
\mathcal{Z}_{G,\mathbb{K},\Phi,\mathbf{e}}\left(  \beta\right)  =%
{\displaystyle\int\limits_{\mathbb{K}^{\left\vert V(G)\right\vert }}}
\text{ }\Phi\left(  \boldsymbol{x}\right)
{\textstyle\prod\limits_{_{\substack{u,v\in V\left(  G\right)  \\u\sim v}}}}
\ \left\vert x_{u}-x_{v}\right\vert _{\mathbb{K}}^{e_{u}e_{v}\beta}%
{\textstyle\prod\limits_{v\in V\left(  G\right)  }}
dx_{v}. \label{Partition_function}%
\end{equation}
In order to study this integral, using geometric techniques, it is convenient
to extend $e_{u}e_{v}\beta$ to a complex variable $s\left(  u,v\right)  $, in
this way the partition function (\ref{Partition_function}) becomes a local
zeta function. Then the partition function is recovered from the local zeta
function taking $s\left(  u,v\right)  =e_{u}e_{v}\beta$.

The local zeta function attached to $G$, $\Phi$ is defined as
\[
Z_{\Phi}(\boldsymbol{s};G,\mathbb{K})=\int\limits_{\mathbb{K}^{\left\vert
V(G)\right\vert }}\text{ }\Phi\left(  \boldsymbol{x}\right)  \prod
\limits_{\substack{u,v\in V(G)\\u\sim v}}\left\vert x_{u}-x_{v}\right\vert
_{\mathbb{K}}^{s\left(  u,v\right)  }\prod\limits_{v\in V(G)}dx_{v}\text{,}%
\]
where $\boldsymbol{s}=\left(  s\left(  u,v\right)  \right)  $ for $u,v\in
V(G)$ for $u\sim v$, $s\left(  u,v\right)  $ is a complex variable attached to
the edge connecting the vertices $u$ and $v$, and $%
{\textstyle\prod\nolimits_{v\in V(G)}}
dx_{v}$ is a Haar measure of the locally compact group $(\mathbb{K}%
^{\left\vert V(G)\right\vert },+)$. The integral converges for
$\operatorname{Re}(s\left(  u,v\right)  )>0$ for any $\left(  u,v\right)
$.\ The partition function $\mathcal{Z}_{G,\mathbb{K},\Phi,\mathbf{e}}\left(
\beta\right)  $ of $H_{\mathbb{K}}(\boldsymbol{x};\mathbf{e},\beta,\Phi,G)$ is
related to the local zeta function of the graph by%
\[
\mathcal{Z}_{G,\mathbb{K},\Phi,\mathbf{e}}\left(  \beta\right)  =\left.
Z_{\Phi}(\boldsymbol{s};G,\mathbb{K})\right\vert _{s\left(  u,v\right)
=e_{u}e_{v}\beta}.
\]
The zeta function $Z_{\Phi}(\boldsymbol{s};G)$ admits a meromorphic
continuation to the whole complex space $\mathbb{C}^{\left\vert
E(G)\right\vert }$, see \cite[Th\'{e}or\`{e}me 1.1.4]{Loeser}.

For a charge configuration $\boldsymbol{e}=\left(  e_{v}\right)  _{v\in V(G)}$
satisfying that $e_{u}e_{v}>0$ for any $u\sim v$, the partition function
$\mathcal{Z}_{G,\mathbb{K},\Phi,\mathbf{e}}\left(  \beta\right)  $ is analytic
for $\beta>0$. \ If the sign of $e_{u}e_{v}$, for $u\sim v$, changes along the
graph, then the partition function becomes an integral of a `rational
function' on a compact subset, and in the general case, the analyticity for
$\beta>0$ does not hold anymore. The existence of a meromorphic continuation
for $\mathcal{Z}_{G,\mathbb{K},\Phi,\mathbf{e}}\left(  \beta\right)  $ having
positive poles, say at \ $\beta=\beta_{UV}>0$, implies that the function
$\ln\mathcal{Z}_{G,\mathbb{K},\Phi,\mathbf{e}}\left(  \beta\right)  $ has a
pole at $\beta=\beta_{UV}$, and thus any canonical free energy defined using
$\ln\mathcal{Z}_{G,\mathbb{K},\Phi,\mathbf{e}}\left(  \beta\right)  $ has a
pole at $\beta=\beta_{UV}$. Notice that the existence of such a pole does not
require to pass to the thermodynamic limit. Since the canonical energy is not
analytic around $\beta=\beta_{UV}$, this point is a phase-transition point. We
will say that $\mathcal{Z}_{G,\mathbb{K},\Phi,\mathbf{e}}\left(  \beta\right)
$ has a phase transition at temperature $\frac{1}{k_{B}\beta_{UV}}$. The
determination of the actual poles for $Z_{\Phi}(\boldsymbol{s};G,\mathbb{K})$
is a difficult open problem. If $\mathbb{K}$ is a $p$-adic field then
$\mathcal{Z}_{G,\mathbb{K},\Phi,\mathbf{e}}\left(  \beta\right)  $ admits a
meromorphic continuation as a rational function in the variables
$p^{-e_{u}e_{v}\beta}$, $u\sim v$. For this reason we can consider the
$p$-adic log-Coulomb gases as exact solvable models.

We establish the existence of phase transitions by showing the existence of a
convergence interval $\left(  0,\beta_{UV}\right)  $ for the integral
$\mathcal{Z}_{G,\mathbb{K},\Phi,\mathbf{e}}\left(  \beta\right)  $, such that
the meromorphic continuation of $\mathcal{Z}_{G,\mathbb{K},\Phi,\mathbf{e}%
}\left(  \beta\right)  $ has a pole at $\beta=\beta_{UV}$. We provide two
different types of criteria for the existence of such intervals. The first
type is specific for the $p$-adic case and requires that $\Phi$\ be the
characteristic function of the unit ball $\mathbb{Z}_{p}^{\left\vert
V(G)\right\vert }$, but this criterion works with arbitrary charge
distributions. Second type of criteria works on any local field of
characteristic zero, but it requires that the support of $\Phi$\ be
sufficiently small, and that the charge distribution \ be such that\ in
(\ref{Hamiltonian}) $e_{u}e_{v}=\pm1$ for any $u,v\in V\left(  G\right)  $. In
terms of phase transitions, the log-Coulomb gases studied here behave
similarly to the classical Ising model.

The above mentioned results were established by using the techniques developed
in (\cite{Veys-Zuniga}). In the $p$-adic setting, in the case in which
$\Phi\left(  \boldsymbol{x}\right)  $ is the characteristic function of the
$\left\vert V\left(  G\right)  \right\vert $-dimensional unit ball, the
corresponding partition functions (or local zeta functions) are rational
functions that can be computed explicitly using combinatorial techniques.

A $p$-adic number is a series of the form%
\begin{equation}
x=x_{-k}p^{-k}+x_{-k+1}p^{-k+1}+\ldots+x_{0}+x_{1}p+\ldots,\text{ with }%
x_{-k}\neq0\text{,} \label{p-adic-number}%
\end{equation}
where $p$ denotes a fixed prime number, and the $x_{j}$s \ are $p$-adic
digits, i.e. numbers in the set $\left\{  0,1,\ldots,p-1\right\}  $. There are
natural field operations, sum and multiplication, on series of the form
(\ref{p-adic-number}). The set of all possible $p$-adic numbers constitutes
the field of $p$-adic numbers $\mathbb{Q}_{p}$. There is also a natural norm
in $\mathbb{Q}_{p}$ defined as $\left\vert x\right\vert _{p}=p^{k}$, for a
nonzero $p$-adic number of the form (\ref{p-adic-number}). We extend the
$p$-adic norm to $\mathbb{Q}_{p}^{N}$, by taking $\left\Vert \left(
x_{1},\ldots,x_{N}\right)  \right\Vert _{p}=\max_{i}\left\vert x_{i}%
\right\vert _{p}$.

The Hamiltonian of the $N$-dimensional $p$-adic Coulomb gas is%
\[
H_{N}\left(  x_{1},\ldots,x_{N};\beta\right)  =\sum\limits_{1\leq i<j\leq
N}e_{i}e_{j}E_{\alpha}\left(  \left\Vert x_{i}-x_{j}\right\Vert _{p}\right)
+\frac{1}{\beta}P\left(  x_{1},\ldots,x_{N}\right)  ,
\]
where $e_{j}$ is the charge of a particle located at $x_{j}\in\mathbb{Q}%
_{p}^{N}$, and $P\left(  x_{1},\ldots,x_{N}\right)  $\ is a confining
potential. We assume that $P\left(  x_{1},\ldots,x_{N}\right)  =+\infty$
outside of an open compact subset. The Coulomb kernel $E_{\alpha}(\left\Vert
x\right\Vert _{p})$ is a fundamental solution of a `$p$-adic Poisson's
equation.' More precisely, if%
\[
E_{\alpha}(\left\Vert x\right\Vert _{p})=%
\begin{cases}
\dfrac{1-p^{-\alpha}}{1-p^{\alpha-N}}||x||_{p}^{\alpha-N}, & \text{if }%
\alpha\neq N\\
\dfrac{1-p^{N}}{p^{N}\ln p}\ln||x||_{p}, & \text{if }\alpha=N,
\end{cases}
\]
then $\boldsymbol{D}^{\alpha}E_{\alpha}=\delta$, where $\boldsymbol{D}%
^{\alpha}$, $\alpha>0$, is the $N$-dimensional Taibleson operator which is a
pseudodifferential operator defined as $\mathcal{F}\left(  \boldsymbol{D}%
^{\alpha}\varphi\right)  =||\xi||_{p}^{\alpha}\mathcal{F}\varphi$, where
$\mathcal{F}$ denotes the Fourier transform, see\ \cite[Theorem 13]%
{Rodriguez-Zuniga-2010} and \cite[Chapter 5]{Zuniga-LNM-2016}. The study of
$p$-adic Coulomb gases was initiated in \cite{Zuniga-Torba}, where some
probabilistic aspects attached to Coulomb gases, involving the kernel
$||x||_{p}^{\alpha-N}$,\ $N>\alpha$, were studied.

In this article we study $1$D $p$-adic log-Coulomb gases, under the assumption
that $e^{\frac{-1}{\beta}P}$ is the characteristic function of the $\left\vert
V(G)\right\vert $-dimensional unit ball $\mathbb{Z}_{p}^{\left\vert
V(G)\right\vert }$. In this case, the local zeta function attached to $G$ is
defined as
\[
Z(\boldsymbol{s};G)=\int\limits_{\mathbb{Z}_{p}^{\left\vert V\left(  G\right)
\right\vert }}\prod\limits_{\substack{u,v\in V(G)\\u\sim v}}\left\vert
x_{u}-x_{v}\right\vert _{p}^{s\left(  u,v\right)  }\prod\limits_{v\in
V(G)}dx_{v}\text{,}%
\]
where $s\left(  u,v\right)  $ is a complex variable attached to the edge
connecting the vertices $u$ and $v$. The partition function $\mathcal{Z}%
_{G,p,\mathbf{e}}\left(  \beta\right)  $\ of $H_{p}(\boldsymbol{x}%
;\mathbf{e},\beta,G)$ is related to the local zeta function of the graph by
$\mathcal{Z}_{G,p,\mathbf{e}}\left(  \beta\right)  =\left.  Z(\boldsymbol{s}%
;G)\right\vert _{s\left(  u,v\right)  =e_{u}e_{v}\beta}$.

Section \ref{Section_3} is dedicated to the study of the function
$Z(\boldsymbol{s};G)$. This function admits a meromorphic continuation as a
rational function in the variables $p^{-s\left(  u,v\right)  }$, see
Proposition \ref{Prop1}. We provide a recursive algorithm for computing
$Z(\boldsymbol{s};G)$. The algorithm uses vertex colorings and chromatic
polynomials, see Proposition \ref{Prop2}. This algorithm allows us to describe
the possible poles of $Z(\boldsymbol{s};G)$ in terms of the subgraphs of $G$,
see Theorem \ref{Theorem1} and Corollary \ref{Cor1}.\ 

In Section \ref{Section_4}, we give conditions on the distribution of charges
that guarantee the convergence of the integral $\mathcal{Z}_{G,p,\mathbf{e}%
}\left(  \beta\right)  $ in an interval $\left(  \beta_{IR},\beta_{UV}\right)
$, see Proposition \ref{Prop3}. We also give conditions so that the
meromorphic continuation of $\mathcal{Z}_{G,p,\mathbf{e}}\left(  \beta\right)
$ has a pole at $\beta=\beta_{UV}$, see Proposition \ref{Theorem4}. This
result allows us to give criteria \ for the existence of phase transitions at
finite temperature. In Section \ref{Section_5A}, we study the thermodynamic
limit\ for a log-Coulomb gas attached to a star graph $S_{M}$, confined in the
$1$-dimensional ball $B_{k}$ of radius $p^{k}$, when $M\rightarrow\infty$,
$k\rightarrow\infty$, and $\frac{M}{p^{k}}=\rho$ is constant. Assuming
a\ neutral charge distribution satisfying $e_{v}=\pm1$ for any $v\in V(G)$, we
show that the dimensionless free energy per particle $\beta\mathfrak{f}$\ has
a singularity at $\beta=1$, i.e. the gas has a phase transition at temperature
$k_{B}$. We also compute the grand-canonical partition function for this gas.

There exists a large family of zeta functions attached to finite graphs, which
can be considered as discrete analogues of the Riemann zeta function, see
\cite{Terras} and the references therein. There are also zeta functions
attached to infinite graphs, see e.g. \cite{Clair et al}, \cite{Grigorchuk et
al}, \cite{Guido et al}, and attached to hypergraphs \cite{Kang et al}. From
this perspective our graph zeta function is a `new' mathematical object. On
the other hand,\ our graph zeta functions are related to $p$-adic Feynman
integrals. These integrals were studied by Lerner and Missarov in the context
of quantum field theory, \cite{Lerner-Missarov}, \cite{Lerner-1}, see also
\cite{Dorjevic et al}, \cite{D-K-K-V}, \cite{Parisi}, and the references
therein. In \cite[Theorem 1]{Lerner-1}, under a condition on all the connected
subgraphs of $G$, it was established the convergence of $Z(\boldsymbol{s};G)$,
and a recursive formula was given. Our Theorem \ref{Theorem1} does not require
these conditions.

The connections between zeta functions of number fields and statistical
mechanics, especially phase transitions, have received great attention due to
the influence of the work of Connes, see e.g. \cite{Connes-1}%
-\cite{Connes-Marcolli}, see also \cite{Knauf}. To the best of our knowledge,
the connection between phase transitions and local zeta functions is new. In
\cite{Sinclair} some aspects of the partition function for $p$-adic
log-Coulomb gases attached to the complete graph were studied.

In Section \ref{Section_5} we review the basic aspects of the theory of local
zeta functions for rational functions, on local fields of characteristic zero,
developed in \cite{Veys-Zuniga}. By using this theory, we give a criterion for
the existence of phase transitions at finite temperature for a $1$D
log-Coulomb gas with Hamiltonian (\ref{Hamiltonian}), under the supposition
that the function $\Phi$ is supported on\ a sufficiently small neighborhood of
a point, and that the charge distribution $\boldsymbol{e}=\left\{
e_{v}\right\}  _{v\in V(G)}$ satisfies $\left\{  e_{v}e_{u};v,u\in V(G)\text{,
}u\sim v\right\}  =\left\{  +1,-1\right\}  $, \ see Theorem \ref{Theorem5}.

\section{\label{Section_2}Basic ideas on $p$-adic analysis}

In this section we collect some basic results about $p$-adic analysis that
will be used in the article. For an in-depth review of the $p$-adic analysis
the reader may consult \cite{A-K-S}, \cite{Taibleson}, \cite{V-V-Z}.

\subsection{The field of $p$-adic numbers}

Along this article $p$ will denote a prime number. The field of $p-$adic
numbers $\mathbb{Q}_{p}$ is defined as the completion of the field of rational
numbers $\mathbb{Q}$ with respect to the $p-$adic norm $|\cdot|_{p}$, which is
defined as
\[
\left\vert x\right\vert _{p}=
\begin{cases}
0, & \text{if } x=0\\
p^{-\gamma}, & \text{if } x=p^{\gamma}\frac{a}{b},
\end{cases}
\]
where $a$ and $b$ are integers coprime with $p$. The integer $\gamma=:
\operatorname{ord}(x) $, with $\operatorname{ord}(0):=+\infty$, is called
the\textit{\ }$p-$\textit{adic order of} $x$.

Any $p-$adic number $x\neq0$ has the form $x=p^{\operatorname{ord}(x)}%
\sum_{j=0}^{\infty}x_{j}p^{j}$, where $x_{j}\in\{0,\dots,p-1\}$ and $x_{0}%
\neq0$.

\subsection{Topology of $\mathbb{Q}_{p}^{N}$}

We extend the $p-$adic norm to $\mathbb{Q}_{p}^{N}$ by taking
\[
||x||_{p}:=\max_{1\leq i\leq N}|x_{i}|_{p},\qquad\text{for }x=(x_{1}%
,\dots,x_{N})\in\mathbb{Q}_{p}^{N}.
\]
We define $\operatorname{ord}(x)=\min_{1\leq i\leq N}\{\operatorname{ord}%
(x_{i})\}$, then $||x||_{p}=p^{-\operatorname{ord}(x)}$. The metric space
$\left(  \mathbb{Q}_{p}^{N},||\cdot||_{p}\right)  $ is a separable complete
ultrametric space. Ultrametricity refers to the fact that the norm
$||\cdot||_{p}$ satisfies $||x+y||_{p}\leq\max\left\{  ||x||_{p}%
,||y||_{p}\right\}  $. Furthermore, if $||x||_{p}\neq||y||_{p}$, then
$||x+y||_{p}=\max\left\{  ||x||_{p},||y||_{p}\right\}  $.

For $r\in\mathbb{Z}$, denote by $B_{r}^{N}(a)=\{x\in\mathbb{Q}_{p}%
^{N};||x-a||_{p}\leq p^{r}\}$ \textit{the ball of radius }$p^{r}$ \textit{with
center at} $a=(a_{1},\dots,a_{N})\in\mathbb{Q}_{p}^{N}$, and take $B_{r}%
^{N}:=B_{r}^{N}(0)$. Note that $B_{r}^{N}(a)=B_{r}(a_{1})\times\cdots\times
B_{r}(a_{N})$, where $B_{r}(a_{i}):=\{x_{i}\in\mathbb{Q}_{p};|x_{i}-a_{i}%
|_{p}\leq p^{r}\}$ is the one-dimensional ball of radius $p^{r}$ with center
at $a_{i}\in\mathbb{Q}_{p}$. The ball $B_{0}^{N}$ equals to the product of $N$
copies of $B_{0}=\mathbb{Z}_{p}$, \textit{the ring of }$p-$\textit{adic
integers of }$\mathbb{Q}_{p}$.

\subsection{Test functions}

A complex-valued function $\varphi$ defined on $\mathbb{Q}_{p}^{N}$ is called
\textit{locally constant} if for any $x\in\mathbb{Q}_{p}^{N}$ there exist an
integer $l(x)\in\mathbb{Z}$ such that $\varphi(x+x^{\prime})=\varphi(x)\qquad
$for $x^{\prime}\in B_{l(x)}^{N}$. A function $\varphi:\mathbb{Q}_{p}%
^{N}\rightarrow\mathbb{C}$ is called a \textit{Bruhat-Schwartz function,} or a
\textit{test function,} if it is locally constant with compact support. The
$\mathbb{C}$-vector space of Bruhat-Schwartz functions is denoted by
$\mathcal{D}:=\mathcal{D}(\mathbb{Q}_{p}^{N})$.

\subsection{Integration and change of variables}

We denote by $d^{N}x$ a Haar measure of the topological group $(\mathbb{Q}%
_{p}^{N},+)$ normalized by the condition $\int_{B_{0}^{N}}d^{N}x=1$.

A function $h:U\rightarrow\mathbb{Q}_{p}$ is said to be \textit{analytic} on
an open subset $U\subset\mathbb{Q}_{p}^{N}$, if there exists a convergent
power series $\sum_{i}a_{i}x^{i}$ for $x\in\widetilde{U}\subset U$ , with
$\widetilde{U}$ open, such that $h\left(  x\right)  =\sum_{i}a_{i}x^{i}$ for
$x\in\widetilde{U}$, with $x^{i}=x_{1}^{i_{1}}\cdots x_{N}^{i_{N}}$,
$i=\left(  i_{1},\ldots,i_{N}\right)  $. In this case, $\frac{\partial
}{\partial x_{l}}h\left(  x\right)  =\sum_{i}a_{i}\frac{\partial}{\partial
x_{l}}\left(  x^{i}\right)  $ is a convergent power series. Let $U$, $V$ be
open subsets of $\mathbb{Q}_{p}^{N}$. A mapping $\sigma:U\rightarrow V$,
$\sigma=\left(  \sigma_{1},\ldots,\sigma_{N}\right)  $ is called analytic if
each $\sigma_{i}$ is analytic.

Let $\varphi:V$ $\rightarrow\mathbb{C}$ be a continuous function with compact
support, and let $\sigma:U\rightarrow V$ $\ $be an analytic mapping. Then
\begin{equation}%
{\textstyle\int\limits_{V}}
\varphi\left(  y\right)  d^{N}y=%
{\textstyle\int\limits_{U}}
\varphi\left(  \sigma(x)\right)  \left\vert Jac(\sigma(x))\right\vert
_{p}d^{N}x\text{,} \label{Change-of_var}%
\end{equation}
where $Jac(\sigma(z))=\det\left[  \frac{\partial\sigma_{i}}{\partial x_{j}%
}\left(  z\right)  \right]  _{\substack{1\leq i\leq N\\1\leq j\leq N}}$, see
e.g. \cite[Section 10.1.2]{Bourbaki}.

\section{\label{Section_3}Zeta functions for graphs}

Along this article by a graph, we mean a finite, simple graph, i.e. a\ graph
with no loops and no multiple edges, see e.g. \cite[Definition 1.2.4]%
{Balakrishnan et al}.

Let $G$ be a graph. We denote by $V:=V(G)$ its set of vertices and by
$E:=E(G)$ its set of edges. If $E(G)\neq\emptyset$, we denote by $i_{G}$ the
incidence relation on $G$, i.e. a mapping from the set of edges to the set of
pairs of vertices, where the corresponding two vertices are necessarily
distinct. We use the notation $i_{G}\left(  l\right)  =\left\{  u,v\right\}  $
or the notation $u\sim v$. To each vertex $v\in V$ we attach a $p$-adic
variable $x_{v}$, and to each edge $l\in E$ we attach a complex variable
$s\left(  l\right)  $. We also use the notation $s\left(  u,v\right)  $ if
$u\sim v$. We set $\boldsymbol{x}:=\left\{  x_{v}\right\}  _{v\in V}$,
$\boldsymbol{s}:=\left\{  s\left(  l\right)  \right\}  _{l\in E}$.

Given $l\in E$, with $i_{G}\left(  l\right)  =\left\{  u,v\right\}  $, we set
\[
F_{l}\left(  x_{u},x_{v},s\left(  l\right)  \right)  :=\left\vert x_{u}%
-x_{v}\right\vert _{p}^{s\left(  l\right)  }%
\]
and
\begin{equation}
F_{G}\left(  \boldsymbol{x},\boldsymbol{s}\right)  :=\prod\limits_{l\in
E}F_{l}\left(  x_{u},x_{v},s\left(  l\right)  \right)  =\prod
\limits_{\substack{u,v\in V\\u\sim v}}\left\vert x_{u}-x_{v}\right\vert
_{p}^{s\left(  u,v\right)  }. \label{Function_F_G}%
\end{equation}

\begin{remark}
\label{Nota0}(i) If $V(G)\neq\emptyset$ and $E(G)=\emptyset$, then $G$
consists of a finite set of vertices without edges connecting them, thus
incidence relation is not defined. In this case we set $F_{G}\left(
\boldsymbol{x},\boldsymbol{s}\right)  :=1$. Due to technical reasons, we
consider the empty set as a graph, in this case $F_{\varnothing}\left(
\boldsymbol{x},\boldsymbol{s}\right)  :=1$.
\end{remark}

\begin{notation}
(i) For a finite subset $A$, we denote by $\left\vert A\right\vert $ its cardinality.

\noindent(ii) We denote by $\mathcal{D}_{sym}\left(  \mathbb{Q}_{p}%
^{N}\right)  $ the $\mathbb{C}$-vector space of symmetric test functions, i.e.
all the complex-valued test functions satisfying $\varphi\left(  x_{1}%
,\ldots,x_{N}\right)  =\varphi\left(  x_{\pi\left(  1\right)  },\ldots
,x_{\pi\left(  N\right)  }\right)  $ for any permutation $\pi$ of $\left\{
1,2,\ldots,N\right\}  $.
\end{notation}

Let $G$ and $H$ be graphs. By a graph isomorphism $\sigma:G\rightarrow H$, we
mean a pair of mappings $\left\{  \sigma_{E},\sigma_{V}\right\}  $, where
$\sigma_{V}:V(G)\rightarrow V(H)$, $\sigma_{E}:E(G)\rightarrow E(H)$ are
bijections, with the property that $i_{G}\left(  l\right)  =\left\{
u,v\right\}  $ if and only if $i_{H}\left(  \sigma_{E}\left(  l\right)
\right)  =\left\{  \sigma_{V}\left(  u\right)  ,\sigma_{V}\left(  v\right)
\right\}  $. In the case of simple graphs, $\sigma_{E}$ is completely
determined by $\sigma_{V}$. For the sake of simplicity, we will denote the
pair $\left\{  \sigma_{E},\sigma_{V}\right\}  $ as $\sigma$, see e.g.
\cite[Sections 1.2.9, 1.2.10]{Balakrishnan et al}.

We denote by Aut$\left(  G\right)  $ the automorphism group of $G$. Let
$\sigma:G\rightarrow H$ be a graph isomorphism. Assume that the cardinality of
$\left\vert V(G)\right\vert =\left\vert V(H)\right\vert =N$.\ Let $x_{u}$,
$u\in V(G)$, be \ $p$-adic variables as before. Then the mapping%
\begin{equation}%
\begin{array}
[c]{llll}%
\sigma^{\ast}: & \mathbb{Q}_{p}^{N} & \rightarrow & \mathbb{Q}_{p}^{N}\\
& x_{v} & \rightarrow & x_{\sigma\left(  v\right)  }%
\end{array}
\label{Sigma_change_of_var}%
\end{equation}
is a $p$-adic analytic isomorphism that preserves the Haar measure of
$\mathbb{Q}_{p}^{N}$, see (\ref{Change-of_var}).

\begin{definition}
Given $\varphi\in\mathcal{D}_{sym}\left(  \mathbb{Q}_{p}^{\left\vert V\left(
G\right)  \right\vert }\right)  $, the $p$-adic zeta function attached to
$\left(  G,\varphi\right)  $ is defined \ as%
\[
Z_{\varphi}(\boldsymbol{s};G)=\int\limits_{\mathbb{Q}_{p}^{\left\vert V\left(
G\right)  \right\vert }}\varphi\left(  \boldsymbol{x}\right)  F_{G}\left(
\boldsymbol{x},\boldsymbol{s}\right)  \prod\limits_{v\in V(G)}dx_{v}\text{,}%
\]
for $\operatorname{Re}(s\left(  l\right)  )>0$ for every $l\in E$, where $%
{\textstyle\prod\nolimits_{v\in V\left(  G\right)  }}
dx_{v}$ denotes the normalized Haar measure on $\left(  \mathbb{Q}%
_{p}^{\left\vert V\left(  G\right)  \right\vert },+\right)  $. If $\varphi$ is
the characteristic function of $\mathbb{Z}_{p}^{\left\vert V\left(  G\right)
\right\vert }$, we use the notation $Z(\boldsymbol{s};G)$.
\end{definition}

\begin{lemma}
\label{Lemma1}Let $G$ and $H$ be graphs. If $\sigma:G\rightarrow H$ is a graph
isomorphism, then
\[
Z_{\varphi}(\left\{  s\left(  l\right)  \right\}  _{l\in E(G)};G)=Z_{\varphi
}(\left\{  s\left(  l\right)  \right\}  _{l\in E(H)};H).
\]
Furthermore, for any $\sigma=\left(  \sigma_{V},\sigma_{E}\right)  \in
$Aut$\left(  G\right)  $, it holds true that
\begin{equation}
Z(\left\{  s\left(  l\right)  \right\}  _{l\in E(G)};G)=Z(\left\{  s\left(
\sigma_{E}\left(  l\right)  \right)  \right\}  _{l\in E(G)};G),
\label{Eq_symmetry}%
\end{equation}
where the integrals exist.
\end{lemma}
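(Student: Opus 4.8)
The plan is to realize both identities as instances of the change-of-variables formula \eqref{Change-of_var}, using the $p$-adic analytic isomorphism $\sigma^{\ast}$ defined in \eqref{Sigma_change_of_var}. First I would unwind the definitions: the integrand $\varphi(\boldsymbol{x})F_{G}(\boldsymbol{x},\boldsymbol{s})$ is a product of $\varphi$ (symmetric) and of factors $|x_{u}-x_{v}|_{p}^{s(l)}$, one for each edge $l\in E(G)$ with $i_{G}(l)=\{u,v\}$. Under the coordinate permutation $\sigma^{\ast}:x_{v}\mapsto x_{\sigma_{V}(v)}$, the factor attached to $l$ is carried to $|x_{\sigma_{V}(u)}-x_{\sigma_{V}(v)}|_{p}^{s(l)}$; since $\sigma$ is a graph isomorphism, $\{\sigma_{V}(u),\sigma_{V}(v)\}=i_{H}(\sigma_{E}(l))$, so this is exactly the factor of $F_{H}$ attached to the edge $\sigma_{E}(l)$, with exponent $s(l)$. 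Because $\sigma_{E}$ is a bijection of $E(G)$ onto $E(H)$, running over all $l$ shows $F_{G}(\boldsymbol{x},\boldsymbol{s})$ pulls back under $\sigma^{\ast}$ to $F_{H}$ with the relabeled exponents; and $\varphi\in\mathcal{D}_{sym}$ is invariant under the coordinate permutation.

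Next I would invoke the stated fact that $\sigma^{\ast}$ is a measure-preserving $p$-adic analytic isomorphism of $\mathbb{Q}_{p}^{N}$, i.e. $|Jac(\sigma^{\ast})|_{p}=1$ (it is a linear permutation of coordinates, so its Jacobian is $\pm1$). Applying \eqref{Change-of_var} with this $\sigma^{\ast}$ and the continuous compactly supported integrand $\varphi\, F_{G}$ gives
\[
\int_{\mathbb{Q}_{p}^{N}}\varphi(\boldsymbol{y})F_{G}(\boldsymbol{y},\boldsymbol{s})\,d^{N}y
=\int_{\mathbb{Q}_{p}^{N}}\varphi(\sigma^{\ast}(\boldsymbol{x}))F_{G}(\sigma^{\ast}(\boldsymbol{x}),\boldsymbol{s})\,d^{N}x,
\]
and the right-hand side equals $Z_{\varphi}(\{s(l)\}_{l\in E(H)};H)$ by the computation above (here I identify $Z_\varphi(\cdot;H)$, whose integration variables are indexed by $V(H)$, with this integral after the obvious relabeling). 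Strictly, the integrand is only continuous off the thin set where some $x_u=x_v$, but for $\operatorname{Re}(s(l))>0$ the integral is absolutely convergent and one may apply the formula on the complement of that measure-zero set, or approximate $|t|_p^{s}$ by test functions; I would note this briefly rather than belabor it. This proves the first assertion.

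For the second assertion, take $H=G$ and $\sigma=(\sigma_V,\sigma_E)\in\mathrm{Aut}(G)$, and let $\varphi$ be the characteristic function of $\mathbb{Z}_p^{N}$, which is certainly symmetric and moreover $\sigma^{\ast}$-invariant since $\sigma^{\ast}$ permutes the $N$ copies of $\mathbb{Z}_p$. The same change of variables now yields $Z(\{s(l)\}_{l\in E(G)};G)=Z(\{s(\sigma_E(l))\}_{l\in E(G)};G)$, which is \eqref{Eq_symmetry}. The only mild subtlety — the ``main obstacle,'' such as it is — is bookkeeping: one must keep the edge-relabeling by $\sigma_E$ and the vertex-relabeling by $\sigma_V$ consistent, and be careful that in the automorphism case the exponents get permuted by $\sigma_E$ while the domain is literally unchanged. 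Everything else is the standard observation that a permutation of coordinates is a volume-preserving analytic automorphism of $\mathbb{Q}_p^N$.
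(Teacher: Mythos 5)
Your proposal is correct and follows essentially the same route as the paper: change variables via the measure-preserving coordinate permutation $\sigma^{\ast}$ from (\ref{Sigma_change_of_var}), use the symmetry of $\varphi$ and the bijection $\sigma_{E}$ to match the edge factors, and specialize to $H=G$ with $\varphi=1_{\mathbb{Z}_p^{N}}$ for (\ref{Eq_symmetry}). The extra remark about continuity is harmless but not needed, since for $\operatorname{Re}(s(l))>0$ each factor $\left\vert x_{u}-x_{v}\right\vert _{p}^{s(l)}$ is already continuous on all of $\mathbb{Q}_{p}^{N}$.
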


\begin{proof}
By using that
\[
Z_{\varphi}(\boldsymbol{s};G)=\int\limits_{\mathbb{Q}_{p}^{\left\vert V\left(
G\right)  \right\vert }}\varphi\left(  \left\{  x_{v}\right\}  _{v\in
V(G)}\right)  \prod\limits_{\substack{u,v\in V(G)\\u\sim v}}\left\vert
x_{u}-x_{v}\right\vert _{p}^{s\left(  u,v\right)  }\prod\limits_{v\in
V(G)}dx_{v}\text{,}%
\]
and changing variables as $\sigma^{\ast}:$ $\mathbb{Q}_{p}^{N}\rightarrow
\mathbb{Q}_{p}^{N}$, $x_{v}\mapsto x_{\sigma(v)}$, see
(\ref{Sigma_change_of_var}), we have
\[
\varphi\left(  \left\{  x_{v}\right\}  _{v\in V(G)}\right)  =\varphi\left(
\left\{  x_{\sigma\left(  v\right)  }\right\}  _{v\in V(G)}\right)
=\varphi\left(  \left\{  x_{v^{\prime}}\right\}  _{v^{\prime}\in V(H)}\right)
,
\]
because the list $\left\{  x_{v^{\prime}}\right\}  _{v^{\prime}\in V(H)}$ is a
permutation of the list $\left\{  x_{v}\right\}  _{v\in V(G)}$. In addition,
\begin{align*}
\prod\limits_{\substack{u,v\in V(G)\\u\sim v}}\left\vert x_{u}-x_{v}%
\right\vert _{p}^{s\left(  u,v\right)  }  &  =\prod\limits_{\substack{\sigma
\left(  u\right)  ,\sigma\left(  v\right)  \\u,v\in V(G)\\u\sim v}}\left\vert
x_{\sigma\left(  u\right)  }-x_{\sigma\left(  v\right)  }\right\vert
_{p}^{s\left(  \sigma\left(  u\right)  ,\sigma\left(  v\right)  \right)  }\\
&  =\prod\limits_{\substack{u^{\prime},v^{\prime}\in V(H)\\u^{\prime}\sim
v^{\prime}}}\left\vert x_{u^{\prime}}-x_{v^{\prime}}\right\vert _{p}%
^{s(u^{\prime},v^{\prime})},
\end{align*}
and by using that $\sigma^{\ast}$ preserves the Haar measure,%
\[
\prod\limits_{v\in V(G)}dx_{v}=\prod\limits_{v\in V(G)}dx_{\sigma\left(
v\right)  }=\prod\limits_{v^{\prime}\in V(H)}dx_{v^{\prime}}.
\]
Consequently $Z_{\varphi}(\left\{  s\left(  l\right)  \right\}  _{l\in
E(G)};G)=Z_{\varphi}(\left\{  s\left(  l\right)  \right\}  _{l\in E(H)};H)$.
\end{proof}

\begin{remark}
We use the notation $G=G_{1}\#\cdots\#G_{k}$ to mean that $G_{1},\cdots,G_{k}$
are all the distinct connected components of $G$. Then $F_{G}\left(
\boldsymbol{x},\boldsymbol{s}\right)  =%
{\textstyle\prod\nolimits_{i=1}^{k}}
F_{G_{i}}\left(  \boldsymbol{x},\boldsymbol{s}\right)  $ and
\[
Z(\boldsymbol{s};G)=%
{\textstyle\prod\nolimits_{i=1}^{k}}
Z(\boldsymbol{s};G_{i}).
\]
Notice that $Z(\boldsymbol{s};G_{i})=1$, if $G_{i}$ consists of only one vertex.
\end{remark}

The zeta functions $Z_{\varphi}(\boldsymbol{s};G)$ are a special type of
multivariate Igusa zeta functions. These functions were studied in
\cite{Loeser}, in particular, the following result holds true:

\begin{proposition}
[{F. Loeser \cite[Th\'{e}or\`{e}me 1.1.4]{Loeser}}]\label{Prop1}The zeta
function $Z_{\varphi}(\boldsymbol{s};G)$\ admits a meromorphic continuation to
$\mathbb{C}^{\left\vert E\left(  G\right)  \right\vert }$ as a rational
function in the variables $p^{-s\left(  l\right)  }$, $l\in E\left(  G\right)
$, more precisely,%
\begin{equation}
Z_{\varphi}(\boldsymbol{s};G)=\frac{P_{\varphi}(\boldsymbol{s})}%
{\prod\limits_{i\in T}\left(  1-p^{-N_{0}^{i}-\sum_{l\in E\left(  G\right)
}N_{l}^{i}s(l)}\right)  }, \label{Formula_1}%
\end{equation}
where $T$ is a finite set, the $N_{0}^{i}$, $N_{l}^{i}$ are non-negative
integers, and $P_{\varphi}(\boldsymbol{s})$ is a polynomial in the variables
$\left\{  p^{-s\left(  l\right)  }\right\}  _{l\in E\left(  G\right)  }$.
\end{proposition}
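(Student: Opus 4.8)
The plan is to reduce $Z_{\varphi}(\boldsymbol{s};G)$, by an embedded resolution of singularities, to a finite sum of monomial integrals over $p$-adic polydiscs, each of which sums to an explicit geometric series; reassembling these series produces the rational function (\ref{Formula_1}).

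Since $\varphi$ has compact support, after an initial linear rescaling of the variables I may assume $\mathrm{supp}(\varphi)\subseteq\mathbb{Z}_{p}^{\left\vert V(G)\right\vert }$. Write $f_{l}(\boldsymbol{x})=x_{u}-x_{v}$ for the edge $l=\{u,v\}$, so that $F_{G}(\boldsymbol{x},\boldsymbol{s})=\prod_{l\in E}|f_{l}(\boldsymbol{x})|_{p}^{s(l)}$. By Hironaka's theorem (a log-principalization of the divisor $\sum_{l\in E}\{f_{l}=0\}$, valid in characteristic zero, in particular over $\mathbb{Q}_{p}$), there is a $p$-adic analytic manifold $Y$ of dimension $\left\vert V(G)\right\vert$, a proper analytic map $h\colon Y\to\mathbb{Z}_{p}^{\left\vert V(G)\right\vert }$ which is an analytic isomorphism away from $h^{-1}\big(\bigcup_{l}\{f_{l}=0\}\big)$, and a finite atlas of $Y$ such that in each chart, with coordinates $y=(y_{1},\dots,y_{\left\vert V(G)\right\vert })$ ranging over a polydisc inside $\mathbb{Z}_{p}^{\left\vert V(G)\right\vert }$, one has $f_{l}(h(y))=\varepsilon_{l}(y)\prod_{j}y_{j}^{N_{l,j}}$ and $\mathrm{Jac}(h)(y)=\eta(y)\prod_{j}y_{j}^{N_{0,j}}$, with $\varepsilon_{l},\eta$ analytic units and $N_{l,j},N_{0,j}\in\mathbb{Z}_{\ge 0}$ the numerical data of the exceptional components. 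Applying the change-of-variables formula (\ref{Change-of_var}) turns $Z_{\varphi}(\boldsymbol{s};G)$ into a finite sum, over the charts, of integrals
\[
\int (\varphi\circ h)(y)\,\Big(\prod_{l}|\varepsilon_{l}(y)|_{p}^{s(l)}\Big)|\eta(y)|_{p}\,\prod_{j}|y_{j}|_{p}^{\,N_{0,j}+\sum_{l}N_{l,j}s(l)}\;dy
\]
over a compact polydisc.

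Next I would exploit $p$-adic local constancy: on each chart $\varphi\circ h$, $|\varepsilon_{l}|_{p}$ and $|\eta|_{p}$ are locally constant, so subdividing each polydisc into finitely many smaller polydiscs $\prod_{j}(a_{j}+p^{c_{j}}\mathbb{Z}_{p})$ makes these factors constant on each piece, and the integral on a piece factors as a constant times a product of one-variable integrals. On a piece with $a_{j}=0$ one gets, writing $t=N_{0,j}+\sum_{l}N_{l,j}s(l)$,
\[
\int_{p^{c_{j}}\mathbb{Z}_{p}}|y_{j}|_{p}^{\,t}\,dy_{j}=(1-p^{-1})\sum_{m\ge c_{j}}p^{-m(t+1)}=\frac{(1-p^{-1})\,p^{-c_{j}(t+1)}}{1-p^{-(t+1)}},
\]
the series converging exactly when $\operatorname{Re}(t+1)>0$; on a piece with $a_{j}\neq0$ the factor $|y_{j}|_{p}$ is simply a constant. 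Since all $N_{0,j},N_{l,j}\ge0$, every such $t$ satisfies $\operatorname{Re}(t+1)\ge1>0$ on the region $\operatorname{Re}(s(l))>0$, which re-establishes convergence there and shows that on that region $Z_{\varphi}(\boldsymbol{s};G)$ equals a finite sum of products of expressions of the above form. Because the polydiscs lie inside $\mathbb{Z}_{p}^{\left\vert V(G)\right\vert }$ the exponents $c_{j}$ are $\ge0$, so each numerator $p^{-c_{j}(t+1)}$ is a monomial in the $p^{-s(l)}$; putting everything over the common denominator $\prod_{j}\big(1-p^{-(N_{0,j}+1)-\sum_{l}N_{l,j}s(l)}\big)$ yields exactly the shape (\ref{Formula_1}), with $N_{0}^{i}:=N_{0,j}+1$, $N_{l}^{i}:=N_{l,j}$ and numerator a polynomial in the $p^{-s(l)}$. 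This rational function furnishes the meromorphic continuation to all of $\mathbb{C}^{\left\vert E(G)\right\vert }$.

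The crux is the simultaneous monomialization: one needs a single model $Y$ on which the total transform of \emph{every} $f_{l}$ is supported on one fixed simple normal crossings divisor -- so that all the $f_{l}$ become monomials in the \emph{same} local coordinates -- and on which $\mathrm{Jac}(h)$ is monomial as well; this is a log-principalization of $\prod_{l\in E}f_{l}$ (equivalently a log resolution of its zero divisor), and invoking it correctly, rather than resolving each $f_{l}$ separately, is what makes the bookkeeping leading to (\ref{Formula_1}) go through. The remaining steps -- change of variables, subdivision into polydiscs, summation of geometric series -- are routine once the resolution is in hand; the only point requiring care is tracking that $N_{0,j},N_{l,j}\ge0$ and $c_{j}\ge0$, which guarantees both convergence throughout $\operatorname{Re}(s(l))>0$ and that the numerator is a genuine polynomial in the $p^{-s(l)}$.
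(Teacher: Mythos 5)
Your argument is correct in substance and is essentially the proof behind the cited result: the paper itself offers no proof of Proposition \ref{Prop1} (it quotes Loeser's Th\'eor\`eme 1.1.4), and Loeser's argument is exactly the route you take --- a simultaneous log-principalization of $\prod_{l}(x_u-x_v)$ via Hironaka, monomialization of each $f_l$ and of the Jacobian in common chart coordinates, reduction by local constancy to monomial integrals over polydiscs, and summation of geometric series.

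One caveat about your opening reduction: the rescaling $x=p^{-k}y$ that places $\mathrm{supp}(\varphi)$ inside $\mathbb{Z}_p^{\left\vert V(G)\right\vert}$ multiplies the integral by $p^{k\left\vert V(G)\right\vert}\prod_{l}p^{k\,s(l)}$, so for the original $\varphi$ your final expression has a \emph{Laurent} polynomial in the $p^{-s(l)}$ as numerator, not a genuine polynomial; this is not repairable in general (already for $K_2$ and $\varphi$ the characteristic function of $(p^{-1}\mathbb{Z}_p)^2$ one has $Z_\varphi(s)=p^{2+s}(1-p^{-1})/(1-p^{-1-s})$, which blows up as $\operatorname{Re}(s)\to+\infty$ and hence cannot have the form (\ref{Formula_1}) with polynomial numerator and $N_0^i,N_l^i\geq 0$), so the discrepancy lies in the statement as transcribed rather than in your method. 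Your proof as written does establish the meromorphic continuation and rationality in the $p^{-s(l)}$ in general, and it gives the precise normal form (\ref{Formula_1}) whenever $\mathrm{supp}(\varphi)\subseteq\mathbb{Z}_p^{\left\vert V(G)\right\vert}$, which covers the only case the paper actually uses ($\varphi$ the characteristic function of $\mathbb{Z}_p^{\left\vert V(G)\right\vert}$), where no rescaling is needed.
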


\begin{corollary}
\label{Cor1_A}The following functional equations hold true:%
\[
\frac{P_{\varphi}(\left\{  s\left(  l\right)  \right\}  _{l\in E(G)})}%
{\prod\limits_{i\in T}\left(  1-p^{-N_{0}^{i}-\sum_{l\in E\left(  G\right)
}N_{l}^{i}s(l)}\right)  }=\frac{P_{\varphi}(\left\{  s\left(  \sigma
_{E}\left(  l\right)  \right)  \right\}  _{l\in E(G)})}{\prod\limits_{i\in
T}\left(  1-p^{-N_{0}^{i}-\sum_{\sigma_{E}\left(  l\right)  \in E\left(
G\right)  }N_{\sigma_{E}\left(  l\right)  }^{i}s(\sigma_{E}\left(  l\right)
)}\right)  },
\]
for any $\sigma=\left(  \sigma_{V},\sigma_{E}\right)  \in$Aut$\left(
G\right)  $.
\end{corollary}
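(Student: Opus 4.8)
The plan is to obtain the corollary by feeding the explicit rational form of Proposition \ref{Prop1} into the symmetry relation \eqref{Eq_symmetry} of Lemma \ref{Lemma1} and then invoking the identity principle for meromorphic functions. First I would recall that, by \eqref{Formula_1}, on the tube domain $\{\operatorname{Re}(s(l))>0:l\in E(G)\}$ one has
\[
Z_{\varphi}(\boldsymbol{s};G)=\frac{P_{\varphi}(\boldsymbol{s})}{\prod_{i\in T}\left(1-p^{-N_{0}^{i}-\sum_{l\in E(G)}N_{l}^{i}s(l)}\right)},
\]
and that both sides extend to the same rational function in the variables $\{p^{-s(l)}\}_{l\in E(G)}$ on all of $\mathbb{C}^{\left\vert E(G)\right\vert }$, again by Proposition \ref{Prop1}.

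Next, fix $\sigma=(\sigma_{V},\sigma_{E})\in\mathrm{Aut}(G)$ and consider the coordinate substitution $\sigma_{E}^{\ast}\colon\boldsymbol{s}\mapsto\sigma_{E}^{\ast}\boldsymbol{s}$ determined by $(\sigma_{E}^{\ast}\boldsymbol{s})(l)=s(\sigma_{E}(l))$ for $l\in E(G)$. Since $\sigma_{E}$ is a bijection of the finite set $E(G)$, this is just a permutation of the coordinates of $\mathbb{C}^{\left\vert E(G)\right\vert }$; in particular it permutes the monomials $p^{-s(l)}$ among themselves, so precomposing the rational function $Z_{\varphi}(\cdot;G)$ with $\sigma_{E}^{\ast}$ again produces a rational function in $\{p^{-s(l)}\}_{l\in E(G)}$, and substituting $\sigma_{E}^{\ast}\boldsymbol{s}$ into the displayed formula yields exactly the right-hand side of the asserted functional equation (after reindexing the summation so that $\sigma_{E}(l)$ ranges over $E(G)$ with the corresponding relabeling of the exponents). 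On the other hand, Lemma \ref{Lemma1}, equation \eqref{Eq_symmetry}, gives $Z_{\varphi}(\boldsymbol{s};G)=Z_{\varphi}(\sigma_{E}^{\ast}\boldsymbol{s};G)$ on the region of convergence: although \eqref{Eq_symmetry} is literally stated for the characteristic-function case $Z$, its proof is simply the measure-preserving relabeling $x_{v}\mapsto x_{\sigma(v)}$ of the $p$-adic integration variables, which uses only that $\varphi$ is symmetric, hence goes through verbatim for every $\varphi\in\mathcal{D}_{sym}(\mathbb{Q}_{p}^{\left\vert V(G)\right\vert })$.

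Finally I would equate the two rational expressions for $Z_{\varphi}(\boldsymbol{s};G)$ and $Z_{\varphi}(\sigma_{E}^{\ast}\boldsymbol{s};G)$: they agree on the nonempty open set $\{\operatorname{Re}(s(l))>0\}$, and both are meromorphic (indeed rational in the $p^{-s(l)}$) on $\mathbb{C}^{\left\vert E(G)\right\vert }$, so by analytic continuation they agree identically, which is precisely the claimed identity. The argument is essentially bookkeeping; the only point that requires a little care is the one noted above, namely verifying that precomposition with the coordinate permutation $\sigma_{E}^{\ast}$ stays within the class of rational functions in $\{p^{-s(l)}\}$ so that the identity principle is applicable, together with the (routine) observation that the proof of Lemma \ref{Lemma1} is insensitive to replacing $\mathbf{1}_{\mathbb{Z}_{p}^{\left\vert V(G)\right\vert }}$ by an arbitrary symmetric test function, so that \eqref{Eq_symmetry} is available for $Z_{\varphi}$ and not merely for $Z$.
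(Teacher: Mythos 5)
Your proposal is correct and follows essentially the same route as the paper: combine the symmetry relation \eqref{Eq_symmetry} with the rational form \eqref{Formula_1} of Proposition \ref{Prop1} on the common tube of convergence and conclude by the identity principle for the meromorphic (rational in $p^{-s(l)}$) continuations. Your additional remark that the change-of-variables proof of Lemma \ref{Lemma1} uses only the symmetry of $\varphi$, so that \eqref{Eq_symmetry} is available for $Z_{\varphi}$ and not just for $Z$, is a point the paper leaves implicit and is handled correctly.
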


\begin{proof}
The results follows from (\ref{Eq_symmetry}) by using the fact
(\ref{Formula_1}) gives an equality between functions in an open set
containing $\left\{  \operatorname{Re}(s(l))>0;l\in E(G)\right\}  $.
\end{proof}

\begin{example}
Let $K_{2}$ be the complete graph with two vertices, $v_{0}$, $v_{1}$. We
denote the corresponding edge as $l$. Then $F_{K_{2}}\left(  \boldsymbol{x}%
,\boldsymbol{s}\right)  =\left\vert x_{v_{0}}-x_{v_{1}}\right\vert
_{p}^{s\left(  l\right)  }$ and
\[
Z(\boldsymbol{s};K_{2})=\int_{\mathbb{Z}_{p}^{2}}\left\vert x_{v_{0}}%
-x_{v_{1}}\right\vert _{p}^{s\left(  l\right)  }dx_{v_{0}}dx_{v_{1}}%
=\int_{\mathbb{Z}_{p}}\left\{  \int_{\mathbb{Z}_{p}}\left\vert x_{v_{0}%
}-x_{v_{1}}\right\vert _{p}^{s\left(  l\right)  }dx_{v_{0}}\right\}
dx_{v_{1}}.
\]
By changing variables as $y=x_{v_{0}}-x_{v_{1}}$, $z=x_{v_{1}}$, we have
\[
Z(\boldsymbol{s};K_{2})=\int_{\mathbb{Z}_{p}}\left\{  \int_{\mathbb{Z}_{p}%
}\left\vert y\right\vert _{p}^{s\left(  l\right)  }dy\right\}  dz=\int
_{\mathbb{Z}_{p}}\left\vert y\right\vert _{p}^{s\left(  l\right)  }%
dy=\frac{1-p^{-1}}{1-p^{-1-s(l)}}.
\]

\end{example}

\begin{example}
\label{Example_Star}We denote by $S_{N}$ the star graph with $N$ vertices
labeled as $V(S_{N})=\left\{  1,\ldots,N\right\}  $, where the vertex $1$ is
the center of the star, i.e.
\[
E(S_{N})=\left\{  \left\{  1,2\right\}  ,\cdots,\left\{  1,l\right\}
,\cdots,\left\{  1,N\right\}  \right\}  .
\]
Then $F_{S_{N}}\left(  \boldsymbol{x},\boldsymbol{s}\right)  =\prod
\limits_{i=2}^{N}\left\vert x_{1}-x_{i}\right\vert _{p}^{s_{i}}$ and%
\[
Z(\boldsymbol{s};S_{N})=\int_{\mathbb{Z}_{p}}\left\{  \int_{\mathbb{Z}%
_{p}^{N-1}}\prod\limits_{i=2}^{N}\left\vert x_{1}-x_{i}\right\vert _{p}%
^{s_{i}}\prod\limits_{i=2}^{N}dx_{i}\right\}  dx_{1}.
\]
By changing variables as $z_{1}=x_{1}$, $z_{i}=x_{1}-x_{i}$ for $i=2,\ldots
,N$, we obtain that%
\[
Z(\boldsymbol{s};S_{N})=\int_{\mathbb{Z}_{p}^{N-1}}\prod\limits_{i=2}%
^{N}\left\vert z_{i}\right\vert _{p}^{s_{i}}\prod\limits_{i=2}^{N}dz_{i}%
=\prod\limits_{i=2}^{N}\int_{\mathbb{Z}_{p}}\left\vert z_{i}\right\vert
_{p}^{s_{i}}dz_{i}=\frac{\left(  1-p^{-1}\right)  ^{N-1}}{\prod\limits_{i=2}%
^{N}\left(  1-p^{-1-s_{i}}\right)  }.
\]

\end{example}

\begin{example}
\label{Example_Tree}Let $T_{N}$ be a finite connected tree with $N$ vertices.
Then
\[
Z(\boldsymbol{s},T_{N})=\frac{(1-p^{-1})^{N-1}}{\prod_{\{u,v\}\in E(T_{N}%
)}1-p^{-1-s(u,v)}}.
\]

We recall that a tree is an undirected graph in which any two vertices are
connected by exactly one path. We fixed $r\in V(T_{N})$ and for $r\in
V(T_{N})$ we denote by $l_{r}(v)$ the length of path from $r$ to $v$. We now
set $l_{r}(T_{N}):=\max_{r\in V(T)}l_{r}(v)$. If $l_{r}(T_{N})=1$, then
$T_{N}$ is a star graph with $N$ vertices. The announced formula is establihed
by induction on $l_{r}(T_{N})$. The case $l_{r}(T_{N})=1$ was already
established. Assume that $l_{r}(T_{N})\geq2$. Then there exists $u^{\prime}\in
V(T_{N})\setminus\{r\}$ with $l_{r}(u^{\prime})=l_{r}(T_{N})$. We fix a such
$u^{\prime}$, then there exists a unique path from $u^{\prime}$ to $r$,
and\ consequently a unique $v^{\prime}\in V(T_{N})$ with $u^{\prime}\sim
v^{\prime}$. We denote by $T_{N-1}^{\prime}$ the tree obtained from $T_{N}$ by
deleting the edge $u^{\prime}\sim v^{\prime}$. Notice that $T_{N-1}^{\prime}$
has $N-1$ vertices. Then%
\begin{multline*}
Z(\boldsymbol{s},T_{N})=\\%
{\displaystyle\int\limits_{\mathbb{Z}_{p}^{|V(T_{N})|}}}
\left(  \prod_{\substack{u,v\in\left(  V(T_{N-1}^{\prime})\smallsetminus
\left\{  u^{\prime}\right\}  \right)  \\u\sim v}}|x_{u}-x_{v}|_{p}%
^{s(u,v)}\right)  |x_{u^{\prime}}-x_{v^{\prime}}|_{p}^{s(u^{\prime},v^{\prime
})}dx_{u^{\prime}}\prod_{v\in\left(  V(T_{N-1}^{\prime})\smallsetminus\left\{
u^{\prime}\right\}  \right)  }dx_{v}.
\end{multline*}
We now change variables as $x_{u}\mapsto x_{u}$ if $u\neq u^{\prime}$, and
$x_{u}\mapsto z_{u}+x_{v^{\prime}}$ if $u\neq u^{\prime}$, in the above
integral:
\begin{multline*}
Z(\boldsymbol{s},T_{N})=\\%
{\displaystyle\int\limits_{\mathbb{Z}_{p}^{|V(T_{N-1}^{\prime})|}}}
\text{ }%
{\displaystyle\int\limits_{\mathbb{Z}_{p}}}
\left(  \prod_{\substack{u,v\in\left(  V(T_{N-1}^{\prime})\smallsetminus
\left\{  u^{\prime}\right\}  \right)  \\u\sim v}}|x_{u}-x_{v}|_{p}%
^{s(u,v)}\right)  |z_{u}|_{p}^{s(u^{\prime},v^{\prime})}dz_{u}\prod
_{v\in\left(  V(T_{N-1}^{\prime})\smallsetminus\left\{  u^{\prime}\right\}
\right)  }dx_{v}\\
=Z(\boldsymbol{s},T_{N-1}^{\prime})%
{\displaystyle\int\limits_{\mathbb{Z}_{p}}}
|z_{u}|_{p}^{s(u^{\prime},v^{\prime})}dz_{u}=Z(\boldsymbol{s},T_{N-1}^{\prime
})\left(  \frac{1-p^{-1}}{1-p^{-1-s(u^{\prime},v^{\prime})}}\right)  .
\end{multline*}
Thus, by induction hypothesis,
\begin{multline*}
Z(\boldsymbol{s},T_{N})=\left(  \frac{1-p^{-1}}{1-p^{-1-s(u^{\prime}%
,v^{\prime})}}\right)  \left(  \frac{(1-p^{-1})^{|V(T_{N-1}^{\prime})|-1}%
}{\prod_{\{u,v\}\in E(T_{N-1}^{\prime})}1-p^{-1-s(u,v)}}\right) \\
=\frac{(1-p^{-1})^{N}}{\prod_{\{u,v\}\in E(T)}1-p^{-1-s(u,v)}}.
\end{multline*}

\end{example}

\begin{example}
\label{example_3}Let $L_{N}$ denote the linear graph consisting of $N$
vertices labeled as $V(L_{N})=\left\{  1,\ldots,N\right\}  $, and edges
$E(L_{N})=\left\{  \left\{  1,2\right\}  ,\cdots,\left\{  l-1,l\right\}
,\cdots,\left\{  N-1,N\right\}  \right\}  $. Then $F_{L_{N}}\left(
\boldsymbol{x},\boldsymbol{s}\right)  =\prod\limits_{i=2}^{N}\left\vert
x_{i-1}-x_{i}\right\vert _{p}^{s_{i}}$ and%
\[
Z(\boldsymbol{s};L_{N})=\int_{\mathbb{Z}_{p}^{N}}\prod\limits_{i=2}%
^{N}\left\vert x_{i-1}-x_{i}\right\vert _{p}^{s_{i}}\prod\limits_{i=2}%
^{N}dx_{i}.
\]
By changing variables as $z_{1}=x_{1}$, $z_{i}=x_{i-1}-x_{i}$ for
$i=2,\ldots,N$ and using the fact that this transformation preserves the
normalized Haar measure of $\mathbb{Z}_{p}^{N}$, we obtain that%
\[
Z(\boldsymbol{s};L_{N})=\prod\limits_{i=2}^{N}\int_{\mathbb{Z}_{p}}\left\vert
z_{i}\right\vert _{p}^{s_{i}}dz_{i}=\frac{\left(  1-p^{-1}\right)  ^{N-1}%
}{\prod\limits_{i=2}^{N}\left(  1-p^{-1-s_{i}}\right)  }=Z(\boldsymbol{s}%
;S_{N}).
\]

\end{example}

\begin{remark}
The assertion%
\[
\text{if }Z(\boldsymbol{s};G)\neq Z(\boldsymbol{s};K)\text{, then }G\text{ is
not isomorphic to }K\text{ }%
\]
is true, cf. Lemma \ref{Lemma1}, but Examples \ref{Example_Star},
\ref{example_3}\ show that the assertion
\[
\text{if }Z(\boldsymbol{s};G)=Z(\boldsymbol{s};K)\text{, then }G\text{ is
isomorphic to }K\text{ }%
\]
is false.
\end{remark}

\subsection{Vertex Colorings and Chromatic Functions}

We recall that a graph $H$ is called a \textit{subgraph} of $G$ if
$V(H)\subset V(G)$, $E(H)\subset E(G)$. If $E(H)\neq\emptyset$, $i_{H}$ is the
restriction of $i_{G}$ to $E(H)$. If $E(H)=\emptyset$, $H$ consists of a
subset of vertices of $G$ without edges, and thus $i_{H}$ is not defined.

\begin{definition}
\label{Def_Sub_generated}Let $I$ be a non-empty subset of $V(G)$. We denote by
$G_{I}$ (or $G\left[  I\right]  $) the \textit{subgraph induced by} $I$, which
is the subgraph defined as $V(G_{I})=I$,
\[
E(G_{I})=\left\{  l\in E(G);i_{G}\left(  l\right)  =\left\{  v,v^{\prime
}\right\}  \text{ for some }v,v^{\prime}\in I\right\}  ,
\]
and $i_{G_{I}}=i_{G}\mid_{E(G_{I})}$. If $I=\emptyset$, by definition
$G_{I}=\emptyset$.

Suppose that $G_{I}=G_{I}^{(1)}\#\cdots\#G_{I}^{\left(  m\right)  }$. If
$G_{I}^{(j)}=\left\{  v\right\}  $, we say that $v$ is an \textit{isolated
vertex} of $G_{I}$. We denote by $G_{I}^{\text{iso}}$ the set of all the
isolated vertices of $G_{I}$. Then%
\[
G_{I}=G_{I}^{\text{red}}%
{\textstyle\bigsqcup}
G_{I}^{\text{iso}},
\]
where $G_{I}^{\text{red}}:=G_{I}^{(i_{1})}\#\cdots\#G_{I}^{\left(
i_{l}\right)  }$ and $\left\vert G_{I}^{\left(  i_{k}\right)  }\right\vert >1$
for $k=1,\ldots,l$. We call $G_{I}^{\text{red}}$ the \textit{reduced subgraph
of }$G_{I}$. We adopt the convention that if $I=\varnothing$, then
$G_{I}^{\text{red}}=G_{I}^{\text{iso}}=\varnothing$.
\end{definition}

\subsubsection{Colorings and Chromatic Functions}

In this section we color graphs using $p$ colors, more precisely, we attach to
every element of $\left\{  0,1,\ldots,p-1\right\}  $ (which we identify with
an element of $\mathbb{F}_{p}$) a color.

\begin{definition}
\label{Def_Vertex_coloring}A vertex coloring of $G$ is a mapping
$C:V(G)\rightarrow\mathbb{F}_{p}$. If $v$ is a vertex of $G$, then $C(v)$ is
its color. We denote by $Colors(G)$, the set of all possible vertex-colorings
of $G$.
\end{definition}

Notice that any coloring $C$ is given by a vector $\boldsymbol{a}=\left(
a_{v}\right)  _{v\in V(G)}\in\mathbb{F}_{p}^{\left\vert V\left(  G\right)
\right\vert }$ with $C(v)=a_{v}$ for $v\in V$. We will identify $C$ with
$\boldsymbol{a}$. Our notion of vertex coloring\ is completely different from
the classical one which requires \ that adjacent vertices of $G$ receive
distinct colors of $\mathbb{F}_{p}$, see e.g. \cite[Section 7.2]{Balakrishnan
et al}.

\begin{definition}
\label{Def_colorig_graph}Given a pair $\left(  G,C\right)  $, we attach to it
a colored graph $G^{C}$ defined as follows: $V(G^{C})=V(G)$,
\[
E(G^{C})=\left\{  l\in E(G);C(u)=C(v)\text{ where }i_{G}(l)=\left\{
u,v\right\}  \right\}
\]
and $i_{G^{C}}=i_{G}\mid_{E(G^{C})}$.
\end{definition}

We note that if $G_{1}^{C},\cdots,G_{r}^{C}$, with $r=r(C)$, are all the
connected components of $G^{C}$, then $C\mid_{G_{k}^{C}}$ is constant for
$k=1,\ldots,r$. If $C$ is identified with $\boldsymbol{a}$ we use the notation
$G^{\boldsymbol{a}}$. Definition \ref{Def_colorig_graph} tell us how to color
the edges of a graph if we have already assigned colors to the vertices of the
graph. To an edge having its two vertices colored with the same color we
assign the color of its vertices, in other case, we discard the edge.

\begin{definition}
We set $Colored(G):=\left\{  G^{C};C\in Colors(G)\right\}  $, and
$Subgraphs(G,\left\vert G\right\vert )$ to be the set of all graphs $H$ such
that $V(H)=V(G)$, $E(H)\subset E(G)$, and if $E(H)\neq\emptyset$, $i_{H}$ is
the restriction of $i_{G}$ to $E(H)$. We define
\[
\mathfrak{F}:Colored(G)\rightarrow Subgraph(G,\left\vert G\right\vert )
\]
as follows: $\mathfrak{F}\left(  G^{C}\right)  =H$ if and only if
$V(H)=V(G^{C})$, $E(H)=E(G^{C})$ $\ $and $i_{H}=i_{G^{C}}$. We set
$Subgraph_{\mathfrak{F}}(G,\left\vert G\right\vert )=\mathfrak{F}\left(
Colored(G)\right)  $.
\end{definition}

The family $Colored(G)$ is formed by all the possible colored versions of $G$,
the operation `forgetting the coloring' $\mathfrak{F}$ assigns to an element
of $Colored(G)$ a subgraph of $G$ having the same vertices as $G$. Any graph
in $Subgraphs(G,\left\vert G\right\vert )$ is obtained from $G$ by deleting
one or more edges, `but keeping'\ the corresponding vertices.

\begin{definition}
We define $Indgraphs(G)$ to be the set of all connected graphs $H$ such that
there exists a coloring $C$, with $G^{C}=G_{1}^{C}\#\cdots\#G_{r}^{C}$, and
$H=G_{i}^{C}$ for exactly one index $i$.
\end{definition}

By Definition \ref{Def_Sub_generated}, we have%
\[
Indgraphs(G)=\left\{  G\left[  I\right]  ;\varnothing\neq I\subset V(G)\text{
and }G\left[  I\right]  \text{ is connected}\right\}  ,
\]
where $G\left[  I\right]  $ denotes the subgraph induced by $I$.

\subsubsection{The Chromatic Functions}

\begin{definition}
Given $H$ in $Subgraphs(G,\left\vert G\right\vert )$, we define its chromatic
function as%
\[
\mathcal{C}(p;H)=\left\vert \left\{  G^{C}\in Colored(G);\mathfrak{F}\left(
G^{C}\right)  =H\right\}  \right\vert .
\]

\end{definition}

Notice that if $G$ is connected, then $\mathcal{C}(p;G)=p$. Indeed, if we use
at least two colors then $G^{C}$ has at least two connected components, and
thus $\mathcal{F}(G^{C})\neq G$. So we can use only constant colorings to have
$\mathcal{F}(G^{C})=G$.

Given $u$, $v\in V(G)$, we denote by $d(u,v)$ the length of the shortest path
in $G$ joining $u$ and $v$. Given $H$, $W$ subgraphs of $G$, we set
\[
d(H,W)=\min_{u\in V(H)\text{, }v\in V(W)}d(u,v)\in\mathbb{N}.
\]

\begin{remark}
Suppose that $H=H_{1}\#\cdots\#H_{r}$. The condition $\mathfrak{F}\left(
G^{C}\right)  =H$ implies \ that $\left.  C\right\vert _{H_{i}}=a_{i}%
\in\mathbb{F}_{p}$ for $i=1,\ldots,l$. Now if $d(H_{i},H_{j})=1$, then
$a_{i}\neq a_{j}$, i.e. $a_{i}\neq a_{j}$ if $d(H_{i},H_{j})=1$. If
$d(H_{i},H_{j})\geq2$, the colors $a_{i}$, $a_{j}$ may be equal. We now define%
\[
D_{1}(H):=D_{1}=\left\{  \left\{  H_{i},H_{j}\right\}  ;H_{i},H_{j}\text{ are
connected components of }H\text{, }d(H_{i},H_{j})=1\right\}  ,
\]
and
\[
D_{2}(H):=D_{2}=\left\{  \left\{  H_{i},H_{j}\right\}  ;H_{i},H_{j}\text{ are
connected components of }H\text{, }d(H_{i},H_{j})\geq2\right\}  .
\]
We set $\Pi_{1}:A\times B\rightarrow A$, respectively $\Pi_{2}:A\times
B\rightarrow B$, for the canonical projections, and define $\widetilde{D}%
=\Pi_{1}D_{2}\cup\Pi_{2}D_{2}$. Any coloring $C$ satisfying $\mathfrak{F}%
\left(  G^{C}\right)  =H$ is determined by a set conditions of the following
form. There exists a partition $\mathcal{P}\left(  \widetilde{D}\right)
=\left\{  \widetilde{D}_{1},\ldots,\widetilde{D}_{k}\right\}  $, with
$\left\vert \widetilde{D}_{i}\right\vert \geq1$ for $i=1,\ldots,k$, such that%
\begin{equation}
\left\{  C(H_{i})\neq C(H_{j})\text{ for }d\left(  H_{i},H_{j}\right)
=1;\right.  \label{Coloring_conditions_A}%
\end{equation}%
\begin{equation}
\left\{
\begin{array}
[c]{l}%
C(H_{i})=C(H_{j})=b_{l}\in\mathbb{F}_{p}\text{, for any }\left\{  H_{i}%
,H_{j}\right\}  \in\widetilde{D}_{i},\\
\text{with }b_{l}\neq b_{m}\text{ if }l\neq m,\text{ for }l,m\in\left\{
1,\ldots,k\right\}  .
\end{array}
\right.  \label{Coloring_conditions_B}%
\end{equation}
The set of conditions (\ref{Coloring_conditions_A}%
)-(\ref{Coloring_conditions_B}) defines a relative closed subset of the affine
space $\mathbb{F}_{p}^{M}$, for a suitable $M$, \ and the solution set of
these conditions \ corresponds to the colorings defined by conditions
(\ref{Coloring_conditions_A})-(\ref{Coloring_conditions_B}).
\end{remark}

\begin{example}
In this example, we compute the chromatic function $\mathcal{C}(p;H)$, where
$H$ is in $Subgraphs(G,\left\vert G\right\vert )$, with $G$ and $H$ as follows:

\begin{figure}[pth]
\hskip 4cm \epsfxsize=6cm \epsfbox{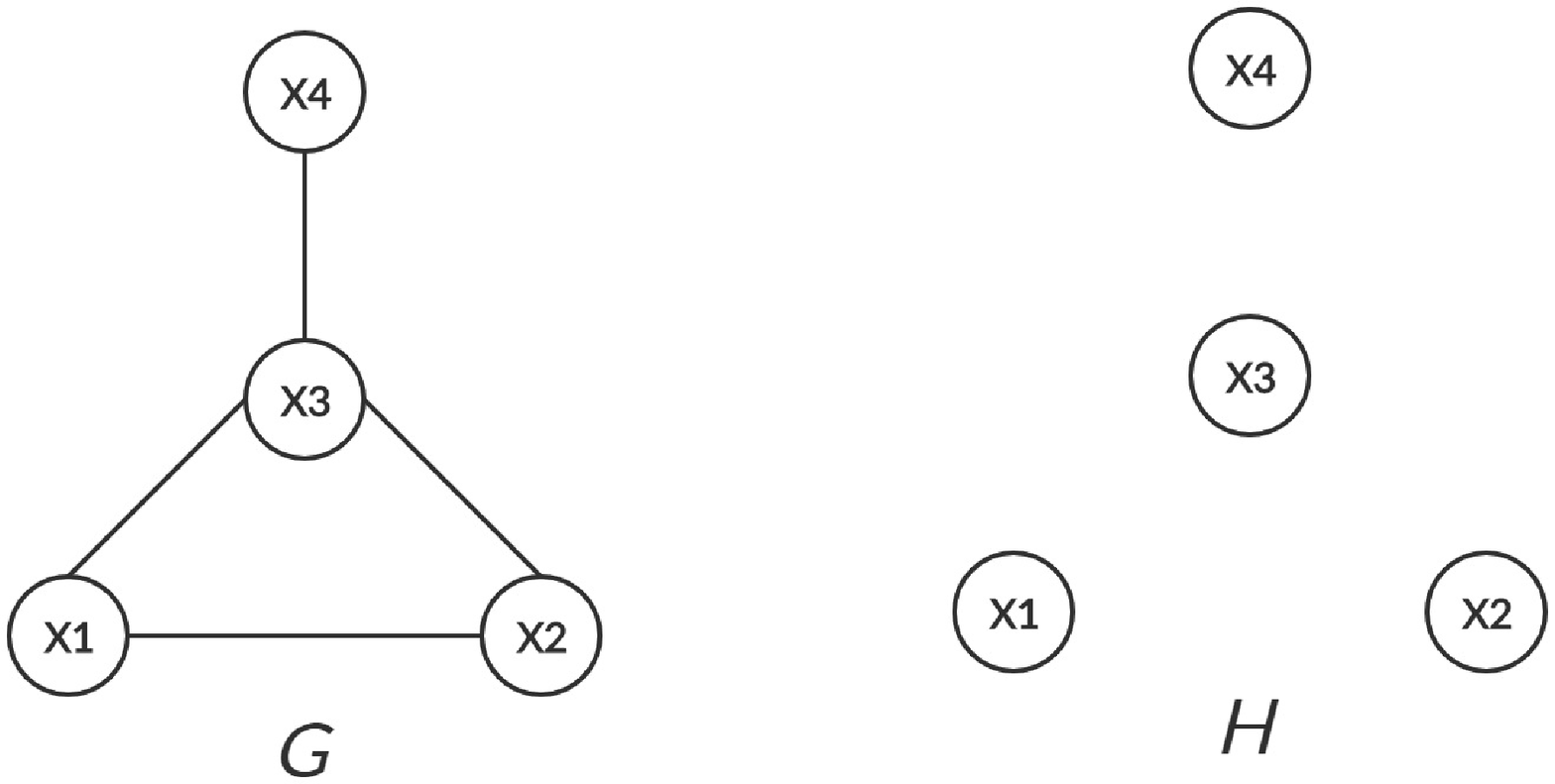}\end{figure}

In this case $H=H_{1}\#\cdots\#H_{4}$, where $H_{i}=\left\{  x_{i}\right\}  $
is the vertex $x_{i}$, for $i=1,2,3,4$. Set $C(H_{i})=a_{i}$, for $i=1,2,3,4$.
There are three different types of conditions (colorings) \ coming from
$\mathfrak{F}(G^{C})=H$:%
\begin{equation}
\left\{
\begin{array}
[c]{l}%
a_{1}\neq a_{2}\text{, }a_{1}\neq a_{3}\text{, }a_{2}\neq a_{3}\text{, }%
a_{3}\neq a_{4};\\
\\
a_{1}\neq a_{4}\text{, }a_{2}\neq a_{4}\text{; }%
\end{array}
\right.  \label{Coloring_1}%
\end{equation}%
\begin{equation}
\left\{
\begin{array}
[c]{l}%
a_{1}\neq a_{2}\text{, }a_{1}\neq a_{3}\text{, }a_{2}\neq a_{3}\text{, }%
a_{3}\neq a_{4};\\
\\
a_{1}=a_{4}\text{. }%
\end{array}
\right.  \label{Coloring_2}%
\end{equation}%
\begin{equation}
\left\{
\begin{array}
[c]{l}%
a_{1}\neq a_{2}\text{, }a_{1}\neq a_{3}\text{, }a_{2}\neq a_{3}\text{, }%
a_{3}\neq a_{4};\\
\\
a_{2}=a_{4}\text{. }%
\end{array}
\right.  \label{Coloring_3}%
\end{equation}
Consequently
\[
C(p,H)=p(p-1)(p-2)(p-3)+2p(p-1)(p-2),
\]
for any prime number $p$.

We now explain the connection between chromatic functions and the computation
of certain $p$-adic integrals. Set
\[
F_{G}\left(  \boldsymbol{x},\boldsymbol{s}\right)  =\left\vert x_{1}%
-x_{2}\right\vert _{p}^{s_{12}}\left\vert x_{1}-x_{3}\right\vert _{p}^{s_{13}%
}\left\vert x_{2}-x_{3}\right\vert _{p}^{s_{23}}\left\vert x_{3}%
-x_{4}\right\vert _{p}^{s_{24}},
\]
and
\[
I(\boldsymbol{s},\boldsymbol{a})=\int\limits_{\boldsymbol{a}+p\mathbb{Z}%
_{p}^{4}}F_{G}(\boldsymbol{x},\boldsymbol{s})\prod\limits_{i=1}^{4}dx_{i},
\]
where $\boldsymbol{a}=\left(  a_{1},a_{2},a_{3},a_{4}\right)  \in
\mathbb{F}_{p}^{4}$. Assume that $\boldsymbol{a}$ is a coloring of one the
types (\ref{Coloring_1})-(\ref{Coloring_3}), i.e. $\boldsymbol{a}$ is a
solution of exactly one of the conditions systems (\ref{Coloring_1}%
)-(\ref{Coloring_3}), then \ by using that
\begin{gather*}
\left\vert a_{1}-a_{2}-p\left(  x_{1}-x_{2}\right)  \right\vert _{p}^{s_{12}%
}\left\vert a_{1}-a_{3}-p\left(  x_{1}-x_{3}\right)  \right\vert _{p}^{s_{13}%
}\left\vert a_{2}-a_{3}-p\left(  x_{2}-x_{3}\right)  \right\vert _{p}^{s_{23}%
}\times\\
\left\vert a_{3}-a_{4}-p\left(  x_{3}-x_{4}\right)  \right\vert _{p}^{s_{24}%
}=1\text{, for any }x_{1},x_{2},x_{3},x_{4},
\end{gather*}
we have $I(\boldsymbol{s},\boldsymbol{a})=p^{-4}$. Now notice that%
\[
\left\vert \left\{  \boldsymbol{a}\in\mathbb{F}_{p}^{4};I(\boldsymbol{s}%
,\boldsymbol{a})=p^{-4}\right\}  \right\vert =C(p,H)\text{ for any prime
number }p\text{.}%
\]

\end{example}

\begin{remark}
We review the classical definitions of vertex colorings and chromatic
polynomial. Let $G$ be a graph and let $k$ be a positive integer. A proper
$k$-coloring of the vertices of $G$ is a function $f:V(G)\rightarrow\left\{
0,\ldots,k-1\right\}  $ \ such that $f^{-1}\left(  j\right)  $ is an
independent set, i.e. for any $u$,$v$ $\in f^{-1}\left(  j\right)  $ there is
no edge in $E(G)$ joining them. Let $\mathcal{P}(k;G)$ denotes the number of
vertex $k$-colorings of $G$. There exists a polynomial $\mathcal{P}(x;G)$ (the
chromatic polynomial of $G$), with integer coefficients, satisfying
$\mathcal{P}(x;G)\mid_{x=k}=\mathcal{P}(k;G)$ for any positive integer $k$,
see e.g. \cite[Proposition 9.2]{Biggs}. The chromatic number $\chi\left(
G\right)  $\ of $G$ is the positive integer defined as $\chi\left(  G\right)
=\min\left\{  k\in\mathbb{N\smallsetminus}\left\{  0\right\}  ;\mathcal{P}%
(k;G)>0\right\}  $.
\end{remark}

\begin{definition}
\label{Definition_8}Let $H$ be a subgraph in $Subgraphs(G,\left\vert
G\right\vert )$, such that $H=H_{1}\#\cdots\#H_{r}$, where the $\ H_{i}$s are
the different connected components of $H$. We attach to $H$ the graph
$G_{H}^{\ast}$ defined as follows:%
\[
V(G_{H}^{\ast})=\left\{  H_{1},\cdots,H_{r}\right\}  \text{, and \ }%
E(G_{H}^{\ast})=\left\{  \left\{  H_{i},H_{j}\right\}  ;d\left(  H_{i}%
,H_{j}\right)  =1\right\}  .
\]

\end{definition}

\begin{proposition}
\label{Prop2}For any graph $G$ and any $H$ in $Subgraphs(G,\left\vert
G\right\vert )$, $\mathcal{C}(p;H)=\mathcal{P}(x;G_{H}^{\ast})\mid_{x=p}$.
\end{proposition}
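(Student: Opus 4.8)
The plan is to set up a bijection between the colorings $C:V(G)\to\mathbb{F}_{p}$ with $\mathfrak{F}(G^{C})=H$ — whose number is, by definition, $\mathcal{C}(p;H)$ — and the proper $p$-colorings of the graph $G_{H}^{\ast}$ attached to $H$ in Definition~\ref{Definition_8}. Once such a bijection is in hand, the identity $\mathcal{C}(p;H)=\mathcal{P}(x;G_{H}^{\ast})\mid_{x=p}$ is immediate, since $\mathcal{P}(x;G_{H}^{\ast})\mid_{x=p}$ is precisely the number of proper $p$-colorings of $G_{H}^{\ast}$ by the defining property of the chromatic polynomial recalled above (cf.\ \cite[Proposition~9.2]{Biggs}).

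First I would unwind the condition $\mathfrak{F}(G^{C})=H$. Write $H=H_{1}\#\cdots\#H_{r}$ for the connected components. Since $V(G^{C})=V(G)=V(H)$, the condition $\mathfrak{F}(G^{C})=H$ is equivalent to $E(G^{C})=E(H)$, which by Definition~\ref{Def_colorig_graph} splits into: (i) every edge of $E(H)$ has its two endpoints of the same $C$-color, and (ii) every edge of $E(G)\smallsetminus E(H)$ has its two endpoints of distinct $C$-colors. As each $H_{i}$ is connected in $H$, condition (i) amounts to $C$ being constant on each vertex set $V(H_{i})$; writing $a_{i}\in\mathbb{F}_{p}$ for that common value, such a $C$ is exactly the datum of a map $\boldsymbol{a}:\{H_{1},\dots,H_{r}\}=V(G_{H}^{\ast})\to\mathbb{F}_{p}$.

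Next I would rewrite condition (ii) in terms of $G_{H}^{\ast}$. The crucial combinatorial observation is that, for $i\neq j$, there exists an edge of $G$ lying in $E(G)\smallsetminus E(H)$ and joining $V(H_{i})$ to $V(H_{j})$ if and only if $d(H_{i},H_{j})=1$, i.e.\ if and only if $\{H_{i},H_{j}\}\in E(G_{H}^{\ast})$: an edge of $G$ between $V(H_{i})$ and $V(H_{j})$ forces $d(H_{i},H_{j})\leq 1$, while $d(H_{i},H_{j})\geq 1$ because $V(H_{i})\cap V(H_{j})=\varnothing$ for $i\neq j$; and conversely an edge of $G$ realizing $d(H_{i},H_{j})=1$ cannot lie in $E(H)$, for otherwise its two endpoints would belong to a single component of $H$. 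Granting this, condition (ii) restricted to edges running between distinct components of $H$ reads exactly: $a_{i}\neq a_{j}$ whenever $\{H_{i},H_{j}\}\in E(G_{H}^{\ast})$, that is, $\boldsymbol{a}$ is a proper $p$-coloring of $G_{H}^{\ast}$. Together with the encoding of the previous paragraph this yields the bijection, whence $\mathcal{C}(p;H)=\mathcal{P}(p;G_{H}^{\ast})=\mathcal{P}(x;G_{H}^{\ast})\mid_{x=p}$.

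The step I expect to be the real obstacle is condition (ii) for an edge of $E(G)\smallsetminus E(H)$ whose \emph{two} endpoints lie inside one component $V(H_{i})$: such an edge makes (i) and (ii) jointly unsatisfiable, so the left-hand side is $0$, while nothing forces the right-hand side to vanish. This degeneracy does not occur for the subgraphs actually produced by the coloring construction, i.e.\ for $H\in\mathfrak{F}(Colored(G))=Subgraph_{\mathfrak{F}}(G,|G|)$ — for which $\mathcal{C}(p;H)\geq 1$ automatically — nor, more generally, whenever each $H_{i}$ coincides with the subgraph of $G$ induced by $V(H_{i})$ (in particular when $H=G^{C}$). I would therefore run the argument above under this hypothesis, which covers all the subgraphs $H$ actually entering the recursive algorithm of Section~\ref{Section_3}; outside it, one reads $\mathcal{C}(p;H)=0$ and states the identity for $H\in Subgraph_{\mathfrak{F}}(G,|G|)$ only. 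Apart from this point, the argument is a direct unwinding of Definitions~\ref{Def_colorig_graph} and~\ref{Definition_8} together with the classical chromatic-polynomial fact.
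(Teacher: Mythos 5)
Your argument is correct and is essentially the paper's own proof: both establish the same bijection between the colorings $C$ with $\mathfrak{F}(G^{C})=H$ and the proper $p$-colorings of $G_{H}^{\ast}$, using that such a $C$ must be constant on each component $H_{i}$ and that $\{H_{i},H_{j}\}\in E(G_{H}^{\ast})$ exactly when some edge of $G$ joins the two components. Your closing caveat is a genuine point and not an artifact of your write-up: for an $H\in Subgraphs(G,\left\vert G\right\vert )$ whose components are not induced subgraphs of $G$ (e.g.\ $G=K_{3}$ and $H$ a spanning path) one has $\mathcal{C}(p;H)=0$ while $\mathcal{P}(x;G_{H}^{\ast})\mid_{x=p}=p$, and the paper's proof silently uses induced-ness at the step where an edge $l\in E(G^{C})$ with both endpoints in $V(H_{i})$ is declared to lie in $E(H_{i})$; restricting the statement to $H\in Subgraph_{\mathfrak{F}}(G,\left\vert G\right\vert )$ (or to $H$ with induced components), as you do, is harmless since that is all Theorem \ref{Theorem1} requires.
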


\begin{proof}
We assume that $H=H_{1}\#\cdots\#H_{r}$ as in Definition \ref{Definition_8}.
The result follows by establishing a bijection between the following two
sets:
\[
A\left(  G^{C},H\right)  :=\left\{  C\in Colors(G);\mathfrak{F}\left(
G^{C}\right)  =H\right\}  ,
\]%
\[
B\left(  G_{H}^{\ast}\right)  :=\left\{  p\text{-colorings of }G_{H}^{\ast
}\right\}  .
\]
Given a coloring $C\in A\left(  G^{C},H\right)  $, we define%
\[%
\begin{array}
[c]{llll}%
C^{\ast}: & V(G_{H}^{\ast}) & \rightarrow & \left\{  0,\ldots,p-1\right\} \\
& H_{i} & \rightarrow & C\left(  H_{i}\right)  .
\end{array}
\]
Now, if $C_{1}$, $C_{2}\in A\left(  G^{C},H\right)  $ and $C_{1}\neq C_{2}$,
then there exists $j\in\left\{  1,\ldots,r\right\}  $ such that $\left.
C_{1}\right\vert _{H_{j}}\neq\left.  C_{2}\right\vert _{H_{j}}$ which implies
that $C_{1}^{\ast}\neq C_{2}^{\ast}$.

Given a $p$-coloring $C^{\ast}$ of $G_{H}^{\ast}$, we define
\[%
\begin{array}
[c]{llll}%
C: & V(G) & \rightarrow & \left\{  0,\ldots,p-1\right\} \\
& v & \rightarrow & C^{\ast}(H_{i}),
\end{array}
\]
for any $v\in H_{i}$. Then $C\in A\left(  G^{C},H\right)  $. Indeed, by the
definition of $C$, $G^{C}=H_{1}\#\cdots\#H_{r}=H$, with $\left.  C\right\vert
_{H_{i}}=a_{i}\in\mathbb{F}_{p}$ for $i=1,\ldots,r$. Then $V(G^{C})=V(H)$.
Additionally, an edge $l\in E\left(  G^{C}\right)  $ is and edge of $G$, say
$i_{G}(l)=\left\{  u,v\right\}  $, satisfying $C(u)=C(v)$. Then $u$, $v\in
V(H_{i})$, and $l\in E(H_{i})$, i.e. $E(G^{C})\subset E(H)$. Conversely, given
$l\in E(H_{i})$, with $i_{H}(l)=\left\{  u,v\right\}  $, we have
$C(u)=C(v)=C^{\ast}(H_{i})$, and thus $l\in V(G^{C})$.
\end{proof}

\subsection{Rationality and recursive formulas}

\begin{theorem}
\label{Theorem1}Let $G$ be a connected graph. Then, for any prime number $p$,
$Z\left(  \boldsymbol{s};G\right)  $ satisfies:

\noindent(i)
\[
Z(\boldsymbol{s};G)=\frac{%
{\displaystyle\sum\limits_{\substack{_{H\in Subgraphs_{\mathcal{F}%
}(G,\left\vert G\right\vert )}\\H\neq G}}}
p^{-\left\vert V\left(  G\right)  \right\vert -\sum_{l\in E\left(  H\right)
}s\left(  l\right)  }\mathcal{C}(p;H)Z\left(  \boldsymbol{s};H\right)
}{1-p^{1-\left\vert V\left(  G\right)  \right\vert -\sum_{l\in E\left(
G\right)  }s\left(  l\right)  }}.
\]

\noindent(ii) $Z\left(  \boldsymbol{s};G\right)  $ admits a meromorphic
continuation to $\mathbb{C}^{\left\vert E\left(  G\right)  \right\vert }$ as a
rational function of $\left\{  p^{-s\left(  l\right)  };l\in E\left(
G\right)  \right\}  $. More precisely,%
\begin{equation}
Z\left(  \boldsymbol{s};G\right)  =\frac{M\left(  \left\{  p^{-s\left(
l\right)  };l\in E\left(  G\right)  \right\}  \right)  }{\prod
\limits_{\substack{H\in Indgraphs(G)\\\left\vert V\left(  H\right)
\right\vert \geq2}}\left(  1-p^{1-\left\vert V\left(  H\right)  \right\vert
-\sum_{l\in E\left(  H\right)  }s\left(  l\right)  }\right)  },
\label{Eq_Zeta_Num}%
\end{equation}
where $M\left(  \left\{  p^{-s\left(  l\right)  };l\in E\left(  G\right)
\right\}  \right)  $ denotes a polynomial with rational coefficients in the
variables $\left\{  p^{-s\left(  l\right)  }\right\}  _{l\in E\left(
G\right)  }$.
\end{theorem}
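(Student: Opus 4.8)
The plan is to prove part (i) by a direct partition-and-integrate argument, dissecting the domain $\mathbb{Z}_p^{|V(G)|}$ according to the reductions modulo $p$ of the coordinates, and then to deduce part (ii) from (i) by induction on $|V(G)|$. For part (i): writing each $x_v = a_v + p\,y_v$ with $a_v \in \{0,\dots,p-1\}$ and $y_v \in \mathbb{Z}_p$, the domain decomposes into $p^{|V(G)|}$ cosets indexed by colorings $\boldsymbol{a}=C\in Colors(G)$, and on each coset $|x_u-x_v|_p = |a_u - a_v - p(y_u-y_v)|_p$, which equals $1$ whenever $a_u\neq a_v$ and equals $p^{-1}|y_u-y_v|_p$ whenever $a_u = a_v$. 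Thus, for a fixed coloring $C$ with colored graph $G^C$ and $H=\mathfrak{F}(G^C)$, the integral over the coset is $p^{-|V(G)|}\prod_{l\in E(H)} p^{-s(l)} \cdot Z(\boldsymbol{s};H)$ after the measure-preserving change of variables $y_v\mapsto y_v$ (using that the normalized Haar measure of $p\mathbb{Z}_p^{|V(G)|}$ is $p^{-|V(G)|}$ times that of $\mathbb{Z}_p^{|V(G)|}$, together with a relabeling that aligns the $y$-integral with $Z(\boldsymbol{s};H)$ — here one must note that $H$ lives on all of $V(G)$, and its isolated vertices contribute trivial factors, so indeed the $y$-integral is exactly $Z(\boldsymbol{s};H)$). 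Grouping the cosets according to the value of $\mathfrak{F}(G^C)$ — there are precisely $\mathcal{C}(p;H)$ colorings giving a prescribed $H$ — yields
\[
Z(\boldsymbol{s};G)=\sum_{H\in Subgraphs_{\mathcal{F}}(G,|G|)} p^{-|V(G)|-\sum_{l\in E(H)} s(l)}\,\mathcal{C}(p;H)\,Z(\boldsymbol{s};H).
\]
Since $G$ is connected, the unique $H$ equal to $G$ arises only from the $p$ constant colorings, contributing $p\cdot p^{-|V(G)|-\sum_{l\in E(G)}s(l)} Z(\boldsymbol{s};G)$; isolating this term and solving for $Z(\boldsymbol{s};G)$ gives the stated recursive formula in (i).

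For part (ii), I would induct on $|V(G)|$. The base case $|V(G)|=1$ gives $Z(\boldsymbol{s};G)=1$, and the product over $Indgraphs(G)$ with $|V(H)|\ge 2$ is empty. For the inductive step, apply (i): the denominator $1-p^{1-|V(G)|-\sum_{l\in E(G)}s(l)}$ is exactly the factor indexed by $H=G$ in the product of (ii) (note $G\in Indgraphs(G)$ since $G$ is connected). Each $H$ appearing in the numerator sum of (i) has $H\neq G$, hence at least one connected component with fewer than $|V(G)|$ vertices; after discarding isolated vertices, $Z(\boldsymbol{s};H)=\prod_j Z(\boldsymbol{s};H^{(j)})$ over the nontrivial connected components $H^{(j)}$ of $H$, each of which has $|V(H^{(j)})|<|V(G)|$, so the induction hypothesis applies to each factor. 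One then checks that every $Indgraphs(H^{(j)})$ is a subset of $Indgraphs(G)$ (an induced connected subgraph of a connected component of a subgraph of $G$ is an induced connected subgraph of $G$), so all denominators produced lie among the factors in the product of (ii); clearing to a common denominator — the full product over $Indgraphs(G)$ with $|V(H)|\ge2$ — leaves a polynomial numerator with rational coefficients in $\{p^{-s(l)}\}$, as the prefactors $p^{-|V(G)|-\sum_{l\in E(H)}s(l)}\mathcal{C}(p;H)$ are themselves polynomials in these variables (using Proposition \ref{Prop2} to see $\mathcal{C}(p;H)$ is an integer). This establishes the rational form and pins down the denominator.

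The main obstacle I anticipate is the bookkeeping in the coset integral: one must carefully justify that, for a coloring $C$ with $H=\mathfrak{F}(G^C)$, the inner $p\mathbb{Z}_p^{|V(G)|}$-integral genuinely equals $p^{-|V(G)|}\prod_{l\in E(H)}p^{-s(l)}\,Z(\boldsymbol{s};H)$ — in particular that the factor $\prod_{l\in E(H)} p^{-s(l)}$ pulls out cleanly (it does, because $|a_u-a_v-p(y_u-y_v)|_p = p^{-1}|y_u-y_v|_p$ precisely when $a_u=a_v$, and $|a_u-a_v|_p=1$ otherwise), and that after the substitution the remaining integral over the $y$-variables is $Z(\boldsymbol{s};H)$ with $H$ regarded as a graph on vertex set $V(G)$ (the isolated-vertex factors being $1$). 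A secondary point requiring care in (ii) is verifying the containment $Indgraphs(H^{(j)})\subseteq Indgraphs(G)$ and that no extraneous denominator factors are introduced when passing from $Z(\boldsymbol{s};H)$ back to $G$; this is where the precise definition of $Indgraphs$ via induced connected subgraphs (rather than arbitrary subgraphs) is essential.
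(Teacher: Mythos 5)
Your proposal is correct and follows essentially the same route as the paper: part (i) via the coset decomposition of $\mathbb{Z}_p^{\left\vert V(G)\right\vert }$ indexed by colorings, grouping cosets by $\mathfrak{F}(G^{C})=H$ with multiplicity $\mathcal{C}(p;H)$ and isolating the $H=G$ term, and part (ii) by iterating (i) together with the factorization of $Z(\boldsymbol{s};H)$ over connected components. Your induction on $\left\vert V(G)\right\vert$ with the check that $Indgraphs(H^{(j)})\subseteq Indgraphs(G)$ is just a more explicit organization of the paper's recursive argument.
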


\begin{proof}
(i) We attach to $\boldsymbol{a}=\left\{  a_{v}\right\}  _{v\in V(G)}%
\in\mathbb{F}_{p}^{\left\vert V(G)\right\vert }$ a color $C$ defined as
$C(v)=a_{v}$, for $v\in V(G)$. We set%
\[
I\left(  s;\boldsymbol{a}\right)  :=\int\limits_{\boldsymbol{a+}%
p\mathbb{Z}_{p}^{^{\left\vert V(G)\right\vert }}}F_{G}\left(  \boldsymbol{x}%
,\boldsymbol{s}\right)  \prod\limits_{v\in V(G)}dx_{v},
\]
then
\[
Z\left(  \boldsymbol{s};G\right)  =\sum\limits_{\boldsymbol{a}\in
\mathbb{F}_{p}^{\left\vert V(G)\right\vert }}I\left(  s;\boldsymbol{a}\right)
\text{.}%
\]
Now
\[
I\left(  s;\boldsymbol{a}\right)  =p^{-\left\vert V(G)\right\vert }%
{\displaystyle\int\limits_{\mathbb{Z}_{p}^{^{\left\vert V(G)\right\vert }}}}
F_{G}\left(  \boldsymbol{a}+p\boldsymbol{x},\boldsymbol{s}\right)
\prod\limits_{v\in V(G)}dx_{v},
\]
where
\begin{align*}
F_{G}\left(  \boldsymbol{a}+p\boldsymbol{x},\boldsymbol{s}\right)   &
=\prod\limits_{\substack{l\in E\left(  G\right)  \\i_{G}(l)=\left\{
v,u\right\}  }}\left\vert a_{v}-a_{u}+px_{v}-px_{u}\right\vert _{p}^{s\left(
l\right)  }\\
&  =\prod\limits_{\substack{l\in E\left(  G\right)  \\i_{G}(l)=\left\{
v,u\right\}  }}\left\{
\begin{array}
[c]{lll}%
1 & \text{if} & C(v)\neq C(u)\\
&  & \\
p^{-s\left(  l\right)  }\left\vert x_{v}-x_{u}\right\vert _{p}^{s\left(
l\right)  } & \text{if} & C(v)=C(u).
\end{array}
\right.
\end{align*}
By attaching to $I\left(  s;\boldsymbol{a}\right)  $ the colored graph
$G^{C}=\left(  G^{C}\right)  _{\text{red}}\#\left(  G^{C}\right)
^{\text{iso}}$, and using $G_{\text{red}}^{C}=\left(  G^{C}\right)
_{\text{red}}$ by simplicity, we have
\[
F_{G}\left(  \boldsymbol{a}+p\boldsymbol{x},\boldsymbol{s}\right)
=p^{-\sum_{l\in E\left(  G_{\text{red}}^{C}\right)  }s\left(  l\right)  }%
\prod\limits_{\substack{l\in E\left(  G_{\text{red}}^{C}\right)
\\i_{G}(l)=\left\{  v,u\right\}  }}\left\vert x_{v}-x_{u}\right\vert
_{p}^{s\left(  l\right)  },
\]
and%
\[
I\left(  \boldsymbol{s};\boldsymbol{a}\right)  =p^{-\left\vert V\left(
G\right)  \right\vert -\sum_{l\in E\left(  G_{\text{red}}^{C}\right)
}s\left(  l\right)  }Z\left(  \left\{  s\left(  l\right)  \right\}  _{l\in
E\left(  G_{\text{red}}^{C}\right)  }\text{,}\left\{  x_{v}\right\}  _{v\in
V\left(  G_{\text{red}}^{C}\right)  }\right)  .
\]
Therefore
\[
Z\left(  \boldsymbol{s};G\right)  =%
{\displaystyle\sum\limits_{_{G^{C}\text{, }C\in Colors(G)}}}
p^{-\left\vert V\left(  G\right)  \right\vert -\sum_{l\in E\left(
G_{\text{red}}^{C}\right)  }s\left(  l\right)  }Z\left(  \boldsymbol{s}%
;G_{\text{red}}^{C}\right)  .
\]
By fixing a graph $H$ in $Subgraphs_{\mathcal{F}}(G,\left\vert G\right\vert
)$, we have
\begin{gather}%
{\displaystyle\sum\limits_{_{\mathcal{F}\left(  G^{C}\right)  =H}}}
p^{-\left\vert V\left(  G\right)  \right\vert -\sum_{l\in E\left(
G_{\text{red}}^{C}\right)  }s\left(  l\right)  }Z\left(  \boldsymbol{s}%
;G_{\text{red}}^{C}\right)  =\label{Zeta_Formula_4}\\
p^{-\left\vert V\left(  G\right)  \right\vert -\sum_{l\in E\left(  H\right)
}s(l)}\mathcal{C}(p;H)Z\left(  \boldsymbol{s};H\right)  ,\nonumber
\end{gather}
and consequently%
\begin{equation}
Z\left(  \boldsymbol{s};G\right)  =%
{\displaystyle\sum\limits_{_{H\in Subgraphs_{\mathcal{F}}(G,\left\vert
G\right\vert )}}}
p^{-\left\vert V\left(  G\right)  \right\vert -\sum_{l\in E\left(  H\right)
}s\left(  l\right)  }\mathcal{C}(p;H)Z\left(  \boldsymbol{s};H\right)
\label{Zeta_Formula_2}%
\end{equation}
By taking $H=G$, $\mathcal{C}(p;H)=p$, in (\ref{Zeta_Formula_4}), we get
\[%
{\displaystyle\sum\limits_{_{\mathcal{F}\left(  G^{C}\right)  =G}}}
p^{-\left\vert V\left(  G\right)  \right\vert -\sum_{l\in E\left(  G\right)
}s\left(  l\right)  }Z\left(  s;G^{C}\right)  =p^{1-\left\vert V\left(
G\right)  \right\vert -\sum_{l\in E\left(  G\right)  }s\left(  l\right)
}Z\left(  \boldsymbol{s};G\right)
\]
and thus from (\ref{Zeta_Formula_2}),\
\begin{equation}
Z(\boldsymbol{s};G)=\frac{%
{\displaystyle\sum\limits_{\substack{_{H\in Subgraphs_{\mathcal{F}%
}(G,\left\vert G\right\vert )}\\H\neq G}}}
p^{-\left\vert V\left(  G\right)  \right\vert -\sum_{l\in E\left(  H\right)
}s\left(  l\right)  }\mathcal{C}(p;H)Z\left(  \boldsymbol{s};H\right)
}{1-p^{1-\left\vert V\left(  G\right)  \right\vert -\sum_{l\in E\left(
G\right)  }s\left(  l\right)  }}. \label{Zeta_Formula_3}%
\end{equation}
Now, taking $H=H_{1}\#\cdots\#H_{r(H)}\#H^{\text{iso}}$, where the $H_{i}$s
are different graphs in $Indgraphs(H)$, we have
\begin{equation}
Z\left(  \boldsymbol{s};H\right)  =\prod\nolimits_{j=1}^{r(H)}Z(\boldsymbol{s}%
;H_{j}). \label{Zeta_Formula_3A}%
\end{equation}
By using recursively (\ref{Zeta_Formula_3})-(\ref{Zeta_Formula_3A}), and the
formula for $Z(\boldsymbol{s};K_{2})$, we obtain (\ref{Eq_Zeta_Num}). Notice
that at the beginning of any iteration of the formulas (\ref{Zeta_Formula_3}%
)-(\ref{Zeta_Formula_3A}), with $\left\vert H_{j}\right\vert \geq2$ for
$j=1,\ldots,r(H)$, we have%
\[
\prod\nolimits_{j=1}^{r(H)}Z(\boldsymbol{s};H_{j})=\frac{A(\boldsymbol{s}%
;H_{1},\ldots,H_{r(H)})}{\prod\nolimits_{j=1}^{r(H)}\left(  1-p^{1-\left\vert
V\left(  H_{j}\right)  \right\vert -\sum_{l\in E\left(  H_{j}\right)
}s\left(  l\right)  }\right)  },
\]
where all the factors in the denominator are different since $H_{j}\cap
H_{i}=\varnothing$ if $j\neq i$.
\end{proof}

\begin{corollary}
\label{Cor1}(i) Set $s\left(  l\right)  =\gamma\in\mathbb{C}$ for any $l\in
E(G)$, and define $\mathcal{Z}_{G,p}\left(  \gamma\right)  :=\left.  Z\left(
\boldsymbol{s};G\right)  \right\vert _{s\left(  l\right)  =\gamma}$. Then the
integral $\mathcal{Z}_{G,p}\left(  \gamma\right)  $ converges for
\[
\operatorname{Re}(\gamma)\geq\max_{_{\substack{H\in Indgraphs(G)\\\left\vert
V\left(  H\right)  \right\vert \geq2}}}\frac{1-\left\vert V(H)\right\vert
}{\left\vert E(H)\right\vert }=:\gamma_{0}.
\]
More generally, for $G$ and $p$ fixed, $\mathcal{Z}_{G,p}\left(
\gamma\right)  $ \ is an analytic function in $\gamma$ for $\operatorname{Re}%
(\gamma)\geq\gamma_{0}.$

\noindent(ii) Let $G=K_{N}$ be the complete graph with $N$ vertices. Then
$\mathcal{Z}_{G,p}\left(  \gamma\right)  $ \ is an analytic function in
$\gamma$ for $\operatorname{Re}(\gamma)\geq\frac{-2}{N}.$

\noindent(iii) Let $M\left(  \left\{  p^{-s\left(  l\right)  };l\in E\left(
G\right)  \right\}  \right)  $ be the \ polynomial defined in
(\ref{Eq_Zeta_Num}). Then the following functional equations hold true:
\[
M\left(  \left\{  p^{-s\left(  l\right)  };l\in E\left(  G\right)  \right\}
\right)  =M\left(  \left\{  p^{-s\left(  \sigma_{E}\left(  l\right)  \right)
};l\in E\left(  G\right)  \right\}  \right)
\]
for any $\sigma=\left(  \sigma_{V},\sigma_{E}\right)  \in$Aut$\left(
G\right)  $.
\end{corollary}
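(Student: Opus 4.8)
The plan is to derive all three parts directly from the rational-function formula \eqref{Eq_Zeta_Num} of Theorem \ref{Theorem1}, since that theorem already packages the zeta function as a polynomial in $\{p^{-s(l)}\}$ over an explicit product of factors indexed by the induced connected subgraphs of $G$ with at least two vertices. For part (i), I would substitute $s(l)=\gamma$ uniformly in \eqref{Eq_Zeta_Num}. Each denominator factor then becomes $1-p^{1-|V(H)|-|E(H)|\gamma}$, which vanishes precisely when $1-|V(H)|-|E(H)|\operatorname{Re}(\gamma)=0$, i.e. when $\operatorname{Re}(\gamma)=\frac{1-|V(H)|}{|E(H)|}$ (and the imaginary part is a multiple of $\frac{2\pi}{|E(H)|\ln p}$). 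Since $|E(H)|\ge 1$ for every $H$ in the product, the largest such critical real part over all relevant $H$ is exactly $\gamma_0=\max_{H}\frac{1-|V(H)|}{|E(H)|}$. Thus for $\operatorname{Re}(\gamma)\ge\gamma_0$ — more precisely for $\operatorname{Re}(\gamma)$ strictly exceeding this value, and also for $\operatorname{Re}(\gamma)=\gamma_0$ once one checks that the numerator absorbs any boundary vanishing, or alternatively by arguing directly that the defining integral converges there — the meromorphic continuation has no poles, so $\mathcal{Z}_{G,p}(\gamma)$ is analytic. For the convergence of the integral itself at $\operatorname{Re}(\gamma)\ge\gamma_0$ I would give the standard estimate: partition $\mathbb{Z}_p^{|V(G)|}$ by the cosets modulo $p^n$ and bound the contribution of each scale, the geometric series converging exactly in this range; this is the same computation that underlies Theorem \ref{Theorem1}(i).

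For part (ii), I would simply specialize: for $G=K_N$ the induced connected subgraphs with $\ge 2$ vertices are exactly the complete graphs $K_m$ for $2\le m\le N$, with $|V(K_m)|=m$ and $|E(K_m)|=\binom{m}{2}=\frac{m(m-1)}{2}$. Hence
\[
\gamma_0=\max_{2\le m\le N}\frac{1-m}{\frac{m(m-1)}{2}}=\max_{2\le m\le N}\frac{-2}{m}=\frac{-2}{N},
\]
the maximum being attained at $m=N$ since $-2/m$ is increasing in $m$. Then part (ii) is immediate from part (i).

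For part (iii), I would use the automorphism symmetry \eqref{Eq_symmetry} of Lemma \ref{Lemma1}, which states $Z(\{s(l)\};G)=Z(\{s(\sigma_E(l))\};G)$ for $\sigma\in\operatorname{Aut}(G)$. Since $\sigma$ permutes $\operatorname{Indgraphs}(G)$ (an automorphism maps induced connected subgraphs to induced connected subgraphs isomorphically, preserving vertex and edge counts), the denominator in \eqref{Eq_Zeta_Num} is invariant under relabeling $s(l)\mapsto s(\sigma_E(l))$ up to reordering the factors of the product, hence literally invariant as a function. Comparing the two expressions for $Z(\boldsymbol{s};G)$ therefore forces $M(\{p^{-s(l)}\})=M(\{p^{-s(\sigma_E(l))}\})$ as rational functions; since both sides agree on the open set $\{\operatorname{Re}(s(l))>0\}$ where \eqref{Eq_Zeta_Num} represents the convergent integral, and a polynomial identity valid on a nonempty open set is an identity, the functional equation holds.

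The only genuinely delicate point is the boundary case $\operatorname{Re}(\gamma)=\gamma_0$ in part (i): a priori a denominator factor could vanish there while the numerator does not, which would make $\mathcal{Z}_{G,p}$ fail to be analytic at that point on the line. I expect this to be handled either by the direct integral-convergence estimate (the geometric series for the critical scales converges at the boundary because it is governed by $|E(H)|\ge 1$ and the exponent is nonnegative there in a way that still sums), or by noting that at a putative boundary pole the residue computation via Theorem \ref{Theorem1}(i) produces a cancellation forced by the recursion. This boundary analysis is the main obstacle; everything else is bookkeeping with \eqref{Eq_Zeta_Num} and the symmetry lemma.
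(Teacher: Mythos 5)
Your argument follows the paper's own route: parts (i) and (ii) are read off from the rational form \eqref{Eq_Zeta_Num} in Theorem \ref{Theorem1}(ii) together with the geometric-series estimate, the computation $\gamma_{0}=\max_{2\leq m\leq N}\frac{-2}{m}=\frac{-2}{N}$ for $K_{N}$ is identical, and part (iii) combines the symmetry \eqref{Eq_symmetry} of Lemma \ref{Lemma1} (i.e.\ Corollary \ref{Cor1_A}) with the observation that an automorphism permutes $\{H\in Indgraphs(G);\,|V(H)|\geq 2\}$, which is exactly the paper's proof.

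One caution about the boundary case you flag: neither of your proposed rescues can work, because the statement is only true with strict inequality. Take $G=K_{2}$, so $\gamma_{0}=-1$; then $\mathcal{Z}_{K_{2},p}(\gamma)=\int_{\mathbb{Z}_{p}}|y|_{p}^{\gamma}\,dy$ diverges for every $\gamma$ with $\operatorname{Re}(\gamma)=-1$ (the terms of the geometric series have modulus $1$), and the meromorphic continuation $\frac{1-p^{-1}}{1-p^{-1-\gamma}}$ has a genuine pole at $\gamma=-1$, so there is no cancellation from the numerator and no residue miracle from the recursion. The ``$\geq\gamma_{0}$'' in the statement should be read as $\operatorname{Re}(\gamma)>\gamma_{0}$; the paper's one-line proof via the geometric series covers only this strict regime, and your proof is complete and matches it once you drop the boundary claim rather than try to repair it.
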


\begin{proof}
(i) It follows directly from Theorem \ref{Theorem1}-(ii), by using the
properties of the geometric series. (ii) It follows from the fact that any
induced subgraph $H$ of $K_{N}$ is complete, say $H=K_{l}$, $\left\vert
V(H)\right\vert =l$, $\left\vert E(H)\right\vert =\frac{l\left(  l-1\right)
}{2}$ for $l=2,\ldots,N$. Then
\[
\gamma_{0}=\max_{2\leq l\leq N}\frac{-2}{l}=\frac{-2}{N}.
\]
(iii) It follows from Theorem \ref{Theorem1}-(ii) and Corollary \ref{Cor1_A}
by using the fact that any isomorphism of $G$ induces a permutation on the set
$\left\{  H\in Indgraphs(G);\left\vert V\left(  H\right)  \right\vert
\geq2\right\}  $.
\end{proof}

\begin{corollary}
\label{Cor2}(i) Let $G_{I}$ be an Indgraph of $G$ generated by $I\subset
V(G)$. Then%
\[
Z(\boldsymbol{s};G_{I})=Z(\boldsymbol{s};G)\mid_{\substack{s(l)=0\\l\notin
E(G_{I})}}.
\]
\noindent(ii) If $\lim_{s(l)\rightarrow a_{l}}Z(\boldsymbol{s};G_{I})=\infty$,
then
\[
\lim_{_{\substack{s(l)\rightarrow a_{l}\\l\in E(G_{I})}}}\lim
_{_{\substack{s(l)\rightarrow0\\l\notin E(G_{I})}}}Z(\boldsymbol{s}%
;G)=\infty.
\]
\noindent(iii) Let $l_{0}\in E(G)$ and let $K_{2}$ be the corresponding
induced graph. Then%
\[
\lim_{_{\substack{s(l)\rightarrow0\\l\in E(G_{I})\smallsetminus\left\{
l_{0}\right\}  }}}\lim_{s(l_{0})\rightarrow\text{ }-1}Z(\boldsymbol{s}%
;G)=\infty.
\]

\end{corollary}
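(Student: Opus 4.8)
The plan is to prove (i) directly from the integral representation of $Z(\boldsymbol{s};G)$ and then obtain (ii) and (iii) as formal consequences, invoking the explicit rational form of Theorem~\ref{Theorem1}(ii) only to give meaning to the substitution $s(l)=0$. For (i), split $E(G)=E(G_I)\sqcup E'$ with $E'=E(G)\setminus E(G_I)$ and fix $s(l)$, $l\in E(G_I)$, with $\operatorname{Re}(s(l))>0$. On the open region where moreover $\operatorname{Re}(s(l))>0$ for all $l\in E'$, the integral
\[
Z(\boldsymbol{s};G)=\int\limits_{\mathbb{Z}_p^{\left\vert V(G)\right\vert }}\ \prod_{l\in E(G)}\left\vert x_u-x_v\right\vert_p^{s(l)}\ \prod_{v\in V(G)}dx_v
\]
converges absolutely, with integrand bounded by $\prod_{l\in E(G_I)}\left\vert x_u-x_v\right\vert_p^{\operatorname{Re}(s(l))}\le 1$, integrable on the compact set $\mathbb{Z}_p^{\left\vert V(G)\right\vert }$. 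Letting $\operatorname{Re}(s(l))\downarrow 0$ for $l\in E'$, the factors $\left\vert x_u-x_v\right\vert_p^{s(l)}$ with $l\in E'$ tend pointwise to $1$ off the null set where $x_u=x_v$ for some $\{u,v\}=i_G(l)$, $l\in E'$; dominated convergence yields $\int_{\mathbb{Z}_p^{\left\vert V(G)\right\vert }}\prod_{l\in E(G_I)}\left\vert x_u-x_v\right\vert_p^{s(l)}\prod_v dx_v$. Since every edge of $G_I$ has both endpoints in $I$, Fubini integrates out the variables $x_v$ with $v\in V(G)\setminus I$, each contributing $\int_{\mathbb{Z}_p}dx_v=1$, and what remains is precisely $Z(\boldsymbol{s};G_I)$.

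It then remains to identify this limit with the value of the meromorphic continuation of $Z(\boldsymbol{s};G)$ on the subspace $\{s(l)=0:l\in E'\}$. By Theorem~\ref{Theorem1}(ii) the polar locus of $Z(\boldsymbol{s};G)$ lies in the union of the hyperplanes $\sum_{l\in E(H)}s(l)=1-\left\vert V(H)\right\vert$ over $H\in Indgraphs(G)$ with $\left\vert V(H)\right\vert\ge 2$. The intersection of such a hyperplane with $\{s(l)=0:l\in E'\}$ is the proper affine subset $\{\sum_{l\in E(H)\cap E(G_I)}s(l)=1-\left\vert V(H)\right\vert\}$, never the whole subspace, since $\left\vert V(H)\right\vert\ge 2$ excludes the trivial equation $0=0$. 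Hence the continuation is regular at generic points of $\{s(l)=0:l\in E'\}$, so it is continuous there and agrees with the dominated-convergence limit above. As both $Z(\boldsymbol{s};G)\mid_{s(l)=0,\,l\notin E(G_I)}$ and $Z(\boldsymbol{s};G_I)$ are rational functions in $\{p^{-s(l)}\}_{l\in E(G_I)}$ coinciding on a nonempty open set, they are equal, proving (i).

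Parts (ii) and (iii) follow at once. The graph $G_I$ has edge set $E(G_I)$, so the hypothesis of (ii) is a limit over $s(l)\to a_l$, $l\in E(G_I)$; substituting $Z(\boldsymbol{s};G_I)=\lim_{s(l)\to 0,\ l\notin E(G_I)}Z(\boldsymbol{s};G)$ from (i) gives
\[
\lim_{\substack{s(l)\to a_l\\ l\in E(G_I)}}\ \lim_{\substack{s(l)\to 0\\ l\notin E(G_I)}}Z(\boldsymbol{s};G)=\lim_{\substack{s(l)\to a_l\\ l\in E(G_I)}}Z(\boldsymbol{s};G_I)=\infty.
\]
For (iii) put $I=i_G(l_0)$; as $G$ is simple, $G_I$ is the $K_2$ carrying the single edge $l_0$, and the earlier computation $Z(\boldsymbol{s};K_2)=(1-p^{-1})/(1-p^{-1-s(l_0)})$ shows $Z(\boldsymbol{s};G_I)\to\infty$ as $s(l_0)\to -1$; apply (ii). Equivalently, this $K_2$ belongs to $Indgraphs(G)$ with two vertices, so $1-p^{-1-s(l_0)}$ is one of the denominator factors of $Z(\boldsymbol{s};G)$ in Theorem~\ref{Theorem1}(ii), and since the restriction of $Z(\boldsymbol{s};G)$ to $s(l)=0$, $l\ne l_0$, equals $(1-p^{-1})/(1-p^{-1-s(l_0)})$, whose numerator does not vanish, the hyperplane $s(l_0)=-1$ is a genuine pole hyperplane, so the iterated limit blows up in either order.

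The delicate point is the passage in the middle paragraph: beyond the trivial interchange of $\lim_{s(l)\to 0}$ with the integral (dominated by $1$), one must know that the meromorphic continuation is regular along the coordinate subspace $\{s(l)=0:l\notin E(G_I)\}$, so that the restriction appearing in the statement is well defined and equals the limit. This regularity is exactly what the explicit denominator in Theorem~\ref{Theorem1}(ii) provides; without it a separate analyticity argument would be needed. Everything else—Fubini to discard the vertices outside $I$, the identity principle for rational functions, and the deductions of (ii) and (iii)—is routine bookkeeping.
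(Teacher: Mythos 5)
Your proof is correct and reaches the paper's conclusions by a somewhat more explicit route. The paper disposes of (i) in one line by citing the recursion of Theorem~\ref{Theorem1}(i) (substituting $s(l)=0$ for $l\notin E(G_I)$ there, or directly in the integral, immediately produces $Z(\boldsymbol{s};G_I)$), whereas you argue from the integral representation itself --- domination by $1$ on the compact set $\mathbb{Z}_p^{\left\vert V(G)\right\vert}$, Fubini to integrate out the vertices outside $I$ --- and then use the denominator structure of Theorem~\ref{Theorem1}(ii) to check that the coordinate subspace $\{s(l)=0:\ l\notin E(G_I)\}$ is not contained in the polar locus, so that the restriction of the meromorphic continuation is well defined and, by the identity principle for rational functions in the $p^{-s(l)}$, coincides with $Z(\boldsymbol{s};G_I)$. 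Making this analytic point explicit is the main added value of your write-up, since the paper leaves it implicit. Parts (ii) and (iii) are treated in both arguments as formal consequences of (i) together with $Z(\boldsymbol{s};K_2)=(1-p^{-1})/(1-p^{-1-s(l_0)})$; you additionally note that near the point $s(l_0)=-1$, $s(l)=0$ for $l\neq l_0$, the only vanishing denominator factor is $1-p^{-1-s(l_0)}$ while the restricted numerator equals $1-p^{-1}\neq 0$, so the blow-up occurs in either order of the iterated limits. That remark is worth keeping, because the order of limits written in (iii) is the reverse of the one that (ii) literally supplies (and the index set $E(G_I)\smallsetminus\{l_0\}$ in (iii) is evidently a slip for $E(G)\smallsetminus\{l_0\}$), a point the paper's one-line deduction glosses over.
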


\begin{proof}
(i) It follows from Theorem \ref{Theorem1}-(i). (ii) It follows from (i).
(iii) It follows from (ii) by using the formula for $Z(s,K_{2})$.
\end{proof}

\section{\label{Section_4}Phase transitions at finite temperature I}

\subsection{Log-Coulomb gases on graphs}

Let $G$ be a graph \ as before. Consider a log-Coulomb gas consisting of
$\left\vert V(G)\right\vert $ charges, $e_{v}\in\mathbb{R}$ for $v\in V(G)$,
which are located at $x_{v}\in\mathbb{Z}_{p}$ for $v\in V(G)$. We set as in
the introduction $\boldsymbol{x}=\left\{  x_{v}\right\}  _{v\in V(G)}%
\in\mathbb{Z}_{p}^{\left\vert V(G)\right\vert }$, $\mathbf{e}_{G}=\left\{
e_{v}\right\}  _{v\in V(G)}\in\mathbb{R}^{\left\vert V(G)\right\vert }$. The
Hamiltonian of the gas is%
\[
H_{p}(\boldsymbol{x};\mathbf{e},\beta,G)=-%
{\textstyle\sum\limits_{\substack{u,v\in V\left(  G\right)  \\u\sim v}}}
\ln\left\vert x_{u}-x_{v}\right\vert _{p}^{e_{u}e_{v}}+\frac{1}{\beta
}P(\boldsymbol{x}),
\]
where the confining potential is given by
\[
P(\boldsymbol{x})=\left\{
\begin{array}
[c]{cc}%
0 & \text{if }\left\{  x_{v}\right\}  _{v\in V(G)}\in\mathbb{Z}_{p}%
^{\left\vert V(G)\right\vert }\\
& \\
+\infty & \text{otherwise.}%
\end{array}
\right.
\]
The interaction between the two charged particles located at $x_{u}$ and
$x_{v}$ is only possible when $u\sim v$. This condition can be naturally
reformulated saying that the potential $V$ creates a potential well, supported
in $\mathbb{Z}_{p}^{\left\vert V(G)\right\vert }$, whose geometry corresponds
to the graph $G$.

The partition function of this gas is given by%
\[
\mathcal{Z}_{G,p,\mathbf{e}}\left(  \beta\right)  =%
{\displaystyle\int\limits_{\mathbb{Z}_{p}^{\left\vert V(G)\right\vert }}}
\text{ }%
{\textstyle\prod\limits_{_{\substack{u,v\in V\left(  G\right)  \\u\sim v}}}}
\text{ \ }\left\vert x_{u}-x_{v}\right\vert _{p}^{e_{u}e_{v}\beta}%
{\textstyle\prod\limits_{v\in V\left(  G\right)  }}
dx_{v}=\left.  Z_{\varphi}(\boldsymbol{s};G)\right\vert _{s(u,v)=e_{u}%
e_{v}\beta},
\]
where $\varphi$ is the characteristic function of $\mathbb{Z}_{p}^{\left\vert
V(G)\right\vert }$. The statistical mechanics of the gas is described by the
corresponding Gibbs \ measure:%
\begin{align*}
d\mathbb{P}_{G,\beta,p,\mathbf{e}}(\boldsymbol{x})  &  =\frac{e^{-\beta
H_{p}(\boldsymbol{x};\mathbf{e},\beta,G)}}{\mathcal{Z}_{G,p,\mathbf{e}}\left(
\beta\right)  }%
{\textstyle\prod\limits_{v\in V\left(  G\right)  }}
dx_{v}\\
&  =\frac{%
{\textstyle\prod\limits_{_{\substack{u,v\in V\left(  G\right)  \\u\sim v}}}}
\text{ \ }\left\vert x_{u}-x_{v}\right\vert _{p}^{e_{i}e_{j}\beta}%
}{\mathcal{Z}_{G,p,\mathbf{e}}\left(  \beta\right)  }1_{\mathbb{Z}%
_{p}^{\left\vert V(G)\right\vert }}(\left\{  x_{v}\right\}  _{v\in V(G)})%
{\textstyle\prod\limits_{v\in V\left(  G\right)  }}
dx_{v}.
\end{align*}
The probability measure $\mathbb{P}_{G,\beta,p,\mathbf{e}}(\boldsymbol{x})$
gives the probability of finding the particles at $\boldsymbol{x}$ at
temperature $\frac{1}{k_{B}\beta}$ given the charge distribution $\mathbf{e}$.

\subsection{Phase transitions I}

For $G,p,\mathbf{e}$ fixed, the partition function $\mathcal{Z}%
_{G,p,\mathbf{e}}\left(  \beta\right)  $ is a rational function in $p^{-\beta
}$ due to Theorem \ref{Theorem1}. The problem of determining the convergence
region for $\mathcal{Z}_{G,p,\mathbf{e}}\left(  \beta\right)  $ in terms of
the poles\ of the meromorphic continuation of $\mathcal{Z}_{G,p,\mathbf{e}%
}\left(  \beta\right)  $ is highly non-trivial. The integral $\mathcal{Z}%
_{G,p,\mathbf{e}}\left(  \beta\right)  $ converges when the following
conditions hold true:
\begin{equation}
1-\left\vert V(H)\right\vert -\sum\limits_{\substack{u,v\in V(H)\\u\sim
v}}e_{u}e_{v}\beta<0\text{ for }H\in Indgraphs(G)\text{, }\left\vert V\left(
H\right)  \right\vert \geq2\text{,} \label{Poles_1}%
\end{equation}
see Theorem \ref{Theorem1}-(ii). For $H\in Indgraphs(G)$, $\left\vert V\left(
H\right)  \right\vert \geq2$, we define%
\[
Char_{+}(H)=\sum\limits_{\substack{u,v\in V(H)\\u\sim v;\text{ }e_{u}e_{v}%
>0}}e_{u}e_{v};\text{ \ }Char_{-}(H)=\sum\limits_{\substack{u,v\in V(H)\\u\sim
v;\text{ }e_{u}e_{v}<0}}e_{u}e_{v}.
\]%
\[
Indgraphs_{-}(G):=\left\{  Indgraphs(G);Char_{+}(H)+Char_{-}(H)<0\right\}  ;
\]
and%
\[
Indgraphs_{+}(G):=\left\{  Indgraphs(G);Char_{+}(H)+Char_{-}(H)>0\right\}  .
\]
With this notation, we rewrite (\ref{Poles_1}) as%
\begin{multline*}
1-\left\vert V(H)\right\vert -\left\{  Char_{-}(H)+Char_{+}(H)\right\}
\beta<0\text{ for }H\in Indgraphs(G)\text{, }\left\vert V\left(  H\right)
\right\vert \geq2\text{,}\\
\text{ and }Char_{+}(H)+Char_{-}(H)\neq0.
\end{multline*}
Then the integral $\mathcal{Z}_{G,p,\mathbf{e}}\left(  \beta\right)  $
converges if%
\[
\left\{
\begin{array}
[c]{lll}%
\beta<\beta_{+}(H):=\frac{\left\vert V(H)\right\vert -1}{\left\vert
Char_{-}(H)+Char_{+}(H)\right\vert } & \text{for } &
\begin{array}
[c]{l}%
H\in Indgraphs_{-}(G)\text{, }\left\vert V\left(  H\right)  \right\vert
\geq2\text{,}\\
\text{ and }Char_{+}(H)+Char_{-}(H)\neq0;
\end{array}
\\
&  & \\
\beta>\beta_{-}(H):=\frac{1-\left\vert V(H)\right\vert }{Char_{+}%
(H)+Char_{-}(H)} & \text{for } &
\begin{array}
[c]{l}%
H\in Indgraphs_{+}(G)\text{, }\left\vert V\left(  H\right)  \right\vert
\geq2\text{,}\\
\text{and }Char_{+}(H)+Char_{-}(H)\neq0.
\end{array}
\end{array}
\right.
\]
If $Indgraphs_{-}(G)\neq\emptyset$ and $Indgraphs_{+}(G)\neq\emptyset$, \ we
set
\[
\beta_{UV}:=\min_{H\in Indgraphs_{-}(G)}\beta_{+}(H)\text{ \ and }\beta
_{IR}:=\max_{H\in Indgraphs_{+}(G)}\beta_{-}(H).
\]
If $Indgraphs_{-}(G)\neq\emptyset$ and $Indgraphs_{+}(G)=\emptyset,$ \ we set
\[
\beta_{UV}:=\min_{H\in Indgraphs_{-}(G)}\beta_{+}(H)\text{ \ and }\beta
_{IR}:=-\infty.
\]
If $Indgraphs_{-}(G)=\emptyset$ and $Indgraphs_{+}(G)\neq\emptyset,$ \ we set
\[
\text{ }\beta_{UV}:=+\infty\text{ and }\beta_{IR}:=\max_{H\in Indgraphs_{+}%
(G)}\beta_{-}(H).\text{ \ }%
\]
In this way we obtain the following result:

\begin{proposition}
\label{Prop3}With the above notation, the integral $\mathcal{Z}%
_{G,p,\mathbf{e}}\left(  \beta\right)  $ converges for \ $\beta_{IR}%
<\beta<\beta_{UV}$.
\end{proposition}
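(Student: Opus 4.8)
The plan is to derive everything from the recursion already obtained inside the proof of Theorem~\ref{Theorem1}(i), now specialized at $s(u,v)=e_ue_v\beta$, together with an induction on $\left\vert V(G)\right\vert$. Since the integrand $\prod_{u\sim v}\left\vert x_u-x_v\right\vert_p^{e_ue_v\beta}$ is non-negative, ``$\mathcal{Z}_{G,p,\mathbf{e}}(\beta)$ converges'' just means that this integral is finite in $[0,+\infty]$; and the steps in the proof of Theorem~\ref{Theorem1}(i) --- the partition $\mathbb{Z}_p^{\left\vert V(G)\right\vert}=\bigsqcup_{\boldsymbol{a}\in\mathbb{F}_p^{\left\vert V(G)\right\vert}}\bigl(\boldsymbol{a}+p\mathbb{Z}_p^{\left\vert V(G)\right\vert}\bigr)$, the dilations $x_v\mapsto a_v+px_v$ on each piece, and the recombination of the pieces by additivity of the integral --- are all legitimate for non-negative integrands, hence remain valid as identities in $[0,+\infty]$. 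In this way one obtains, exactly as in the proof of Theorem~\ref{Theorem1}(i) after isolating the $p$ constant colorings,
\[
\Bigl(1-p^{\,1-\left\vert V(G)\right\vert-\beta\sum_{\substack{u,v\in V(G)\\u\sim v}}e_ue_v}\Bigr)\mathcal{Z}_{G,p,\mathbf{e}}(\beta)=\sum_{\substack{C\in Colors(G)\\C\text{ non-constant}}}p^{\,-\left\vert V(G)\right\vert-\beta\sum_{\substack{u,v\in V(G_{\text{red}}^{C})\\u\sim v}}e_ue_v}\,\mathcal{Z}_{G_{\text{red}}^{C},p,\mathbf{e}}(\beta),
\]
where each $\mathcal{Z}_{G_{\text{red}}^{C},p,\mathbf{e}}(\beta)$ factors as a product of $\mathcal{Z}_{H_i,p,\mathbf{e}}(\beta)$ over the connected components $H_i$ of $G_{\text{red}}^{C}$. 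Because $G$ is connected and $C$ is non-constant, a path witnessing a color change shows $G^{C}\neq G$, and more precisely every connected component of $G^{C}$ equals $G[I]$ for some $I\subsetneq V(G)$; hence each such $H_i$ is a connected induced subgraph of $G$ with $\left\vert V(H_i)\right\vert<\left\vert V(G)\right\vert$ (single-vertex components are isolated and contribute the factor $1$, so they may be discarded), and therefore $Indgraphs(H_i)\subseteq Indgraphs(G)$.

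Next I would run the induction on $\left\vert V(G)\right\vert$. As base case take $G$ a single vertex, where $\mathcal{Z}_{G,p,\mathbf{e}}(\beta)=1$; the graph $K_2$ is then covered by the inductive step, in which the recursion reproduces $\mathcal{Z}_{K_2,p,\mathbf{e}}(\beta)=(1-p^{-1})/(1-p^{-1-e_{v_0}e_{v_1}\beta})$, finite exactly when (\ref{Poles_1}) holds for $K_2$. For the inductive step, assume $\beta_{IR}<\beta<\beta_{UV}$; as spelled out just before the statement, this is equivalent to (\ref{Poles_1}) holding for every $H\in Indgraphs(G)$ with $\left\vert V(H)\right\vert\geq 2$. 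Then for each component $H_i$ on the right-hand side of the displayed identity, (\ref{Poles_1}) holds for every element of $Indgraphs(H_i)\subseteq Indgraphs(G)$, so by the induction hypothesis $\mathcal{Z}_{H_i,p,\mathbf{e}}(\beta)<\infty$, and hence the whole right-hand side is finite. Finally, the instance $H=G$ of (\ref{Poles_1}) reads $1-\left\vert V(G)\right\vert-\beta\sum_{u\sim v}e_ue_v<0$, so the coefficient $1-p^{\,1-\left\vert V(G)\right\vert-\beta\sum_{u\sim v}e_ue_v}$ on the left-hand side is a strictly positive real number; dividing through by it yields $\mathcal{Z}_{G,p,\mathbf{e}}(\beta)<\infty$, completing the induction.

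The substantive points are of bookkeeping type: that the identities from the proof of Theorem~\ref{Theorem1}(i) persist once the integrals are allowed to take the value $+\infty$ (which is immediate because all integrands are non-negative), and the observation that non-constant colorings of a connected graph yield only proper induced subgraphs as connected components --- which is exactly what forces $\left\vert V(\cdot)\right\vert$ to strictly decrease and makes the induction close. I do not anticipate a genuine obstacle. Note that only the ``if'' direction is claimed, so one need not verify that every proper connected induced subgraph of $G$ actually arises as such a component; that finer fact would be needed only for the converse, namely that $\mathcal{Z}_{G,p,\mathbf{e}}(\beta)$ diverges outside $(\beta_{IR},\beta_{UV})$.
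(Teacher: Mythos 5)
Your route is the paper's own: reduce the proposition to the inequalities (\ref{Poles_1}) for the connected induced subgraphs and extract those from the recursion in the proof of Theorem \ref{Theorem1} (this is what the paper's ``properties of the geometric series'' remark in Corollary \ref{Cor1} amounts to), and your bookkeeping is fine --- non-constant colorings of a connected graph produce only proper connected induced subgraphs as components, $Indgraphs(H_i)\subseteq Indgraphs(G)$, and $\beta_{IR}<\beta<\beta_{UV}$ is indeed equivalent to (\ref{Poles_1}). The gap is at the closing step. What non-negativity of the integrand legitimizes is the purely additive identity in $[0,+\infty]$,
\[
\mathcal{Z}_{G,p,\mathbf{e}}(\beta)=c\,\mathcal{Z}_{G,p,\mathbf{e}}(\beta)+R,\qquad c:=p^{\,1-\left\vert V(G)\right\vert-\beta\sum_{u\sim v}e_{u}e_{v}},
\]
with $R$ your sum over non-constant colorings; it does \emph{not} give the subtracted form $\bigl(1-c\bigr)\mathcal{Z}_{G,p,\mathbf{e}}(\beta)=R$, because subtraction is undefined at $+\infty$, and in the paper's proof of Theorem \ref{Theorem1} that isolation step is performed in a region where convergence is already known. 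The additive identity by itself proves nothing: if $\mathcal{Z}_{G,p,\mathbf{e}}(\beta)=+\infty$ it is satisfied for any $0<c<1$ and finite $R$, so ``dividing through by $1-c$'' is precisely the point where you assume the finiteness you are trying to prove.

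The repair is to iterate the self-similar (constant-coloring) part instead of isolating it. For $m\geq0$ let $D_{m}\subset\mathbb{Z}_p^{\left\vert V(G)\right\vert}$ be the set where all coordinates are congruent modulo $p^{m}$; then $\mathbb{Z}_p^{\left\vert V(G)\right\vert}=\bigsqcup_{m\geq0}\left(  D_{m}\smallsetminus D_{m+1}\right)$ up to the small diagonal $\bigcap_{m}D_{m}$, which has measure zero since $\left\vert V(G)\right\vert\geq2$. Writing $x_{v}=a+p^{m}y_{v}$ on each of the $p^{m}$ blocks of $D_{m}$ gives $\int_{D_{m}\smallsetminus D_{m+1}}F=c^{m}R$, where $R=\int_{D_{0}\smallsetminus D_{1}}F$ is exactly your right-hand side; hence $\mathcal{Z}_{G,p,\mathbf{e}}(\beta)=R\sum_{m\geq0}c^{m}$ as an identity in $[0,+\infty]$. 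Now your induction closes correctly: $R<\infty$ by the inductive hypothesis applied to the components $H_{i}$, and $c<1$ is the instance $H=G$ of (\ref{Poles_1}), so $\mathcal{Z}_{G,p,\mathbf{e}}(\beta)=R/(1-c)<\infty$. With this replacement of the division step, your argument is correct and essentially coincides with the paper's.
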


In order to decide wether or not $\mathcal{Z}_{G,p,\mathbf{e}}\left(
\beta\right)  $ converges for $\beta=\beta_{UV}$, we require an additional
condition. If the meromorphic continuation of $\mathcal{Z}_{G,p,\mathbf{e}%
}\left(  \beta\right)  $ has a pole at $\beta=\beta_{UV}$, then the integral
$\mathcal{Z}_{G,p,\mathbf{e}}\left(  \beta\right)  $ does not converge for
$\beta\geq\beta_{UV}$. \ 

\begin{remark}
Notice that Corollary \ref{Cor2} is not useful to determine phase transitions
points of $\mathcal{Z}_{G,p,\mathbf{e}}\left(  \beta\right)  $.
\end{remark}

If there exists $H\in Indgraphs(G)$ such that $\beta_{+}(H)>0$ and
$\mathcal{Z}_{G,p,\mathbf{e}}\left(  \beta\right)  $ has a pole at
$\beta=\beta_{+}(H)$, then by Proposition \ref{Prop3}, $\mathcal{Z}%
_{G,p,\mathbf{e}}\left(  \beta\right)  $ has a pole at the temperature
$\beta_{UV}$. Notice that $\beta_{UV}$ is not necessarily equal to $\beta
_{+}(H)$, since $\mathcal{Z}_{G,p,\mathbf{e}}\left(  \beta\right)  $ may have
other positive poles. In conclusion we have the following criteria:

\begin{proposition}
\label{Theorem4}With the above notation, and fixing $G,p,\mathbf{e}$. If there
exists $H\in Indgraphs(G)$ such that $\beta_{+}(H)>0$ and $\mathcal{Z}%
_{G,p,\mathbf{e}}\left(  \beta\right)  $ has a pole at $\beta=\beta_{+}(H)$,
then $\mathcal{Z}_{G,p,\mathbf{e}}\left(  \beta\right)  $ has a phase
transition at the temperature $\frac{1}{k_{B}\beta_{UV}}$.
\end{proposition}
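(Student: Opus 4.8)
The plan is to show that the meromorphic continuation of $\mathcal{Z}_{G,p,\mathbf{e}}(\beta)$ — a rational function of $p^{-\beta}$ by Theorem \ref{Theorem1} — has a pole at $\beta=\beta_{UV}$. Granting this, $\ln\mathcal{Z}_{G,p,\mathbf{e}}$, hence any canonical free energy built from it, is real-analytic on $(\beta_{IR},\beta_{UV})$, the interval on which the defining integral converges by Proposition \ref{Prop3}, but is singular at $\beta_{UV}$; this is exactly a phase transition at temperature $\frac{1}{k_{B}\beta_{UV}}$. First one records that $\beta_{UV}$ makes sense and is positive: $\beta_{+}(H)>0$ is only defined when $H\in Indgraphs_{-}(G)$ with $\left\vert V(H)\right\vert\geq2$, so $Indgraphs_{-}(G)\neq\varnothing$, whence $0<\beta_{UV}\leq\beta_{+}(H)<\infty$; since also $\beta_{IR}<0$ (or $-\infty$), the interval $(\beta_{IR},\beta_{UV})$ is non-empty.

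The engine is the non-negativity of the integrand. Writing $\phi(\boldsymbol{x})=\sum_{u\sim v}e_{u}e_{v}\ln\left\vert x_{u}-x_{v}\right\vert_{p}$, one has $\mathcal{Z}_{G,p,\mathbf{e}}(\beta)=\int_{\mathbb{Z}_{p}^{\left\vert V(G)\right\vert}}e^{\beta\phi(\boldsymbol{x})}\prod_{v}dx_{v}$, which is log-convex in real $\beta$ by H\"{o}lder's inequality, so its domain of convergence is an interval on which it is bounded over compact subsets. By Proposition \ref{Prop3} this domain contains $(\beta_{IR},\beta_{UV})$; and since a log-convex, everywhere non-negative integral cannot tend to $+\infty$ as $\beta$ increases toward a point of its domain, a pole of the continuation cannot be a limit of the domain from inside, so the hypothesis that $\mathcal{Z}_{G,p,\mathbf{e}}$ has a pole at $\beta_{+}(H)$ forces $\beta_{+}(H)$ out of the domain. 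Thus the domain of convergence is $(\beta_{IR},\beta^{\ast})$ with $\beta_{UV}\leq\beta^{\ast}\leq\beta_{+}(H)<\infty$.

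It remains to show $\beta^{\ast}=\beta_{UV}$, i.e.\ that the integral already diverges for $\beta\geq\beta_{UV}$. Fix $H_{0}\in Indgraphs_{-}(G)$ with $\beta_{+}(H_{0})=\beta_{UV}$, chosen minimal with respect to inclusion, and bound $\mathcal{Z}_{G,p,\mathbf{e}}(\beta)$ from below by the contribution of the configurations in which $\{x_{v}\}_{v\in V(H_{0})}$ lie in a common ball of radius exactly $p^{-k}$ while the remaining coordinates are pinned in fixed, pairwise distinct residue classes mod $p$, so that every edge not internal to $H_{0}$ contributes $1$; rescaling $x_{v}=x_{v_{0}}+p^{k}z_{v}$ inside $H_{0}$ and applying the $p$-adic change of variables $(\ref{Change-of_var})$, the $k$-th such shell contributes a fixed positive constant times $p^{(1-\left\vert V(H_{0})\right\vert-(Char_{-}(H_{0})+Char_{+}(H_{0}))\beta)\,k}$, whose exponent is $\geq0$ precisely for $\beta\geq\beta_{UV}$, so summation over $k\geq1$ gives $\mathcal{Z}_{G,p,\mathbf{e}}(\beta)=+\infty$. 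Hence $\beta^{\ast}=\beta_{UV}$; finally, if the rational function $\mathcal{Z}_{G,p,\mathbf{e}}$ were analytic at $\beta_{UV}$ then Fatou's lemma along $\beta\nearrow\beta_{UV}$ would give $\int e^{\beta_{UV}\phi}\leq\liminf_{\beta\nearrow\beta_{UV}}\mathcal{Z}_{G,p,\mathbf{e}}(\beta)<\infty$, contradicting the divergence just shown, so $\mathcal{Z}_{G,p,\mathbf{e}}$ has a pole at $\beta_{UV}$, which finishes the proof.

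The main obstacle is exactly the divergence of the integral at $\beta_{UV}$, equivalently that no cancellation inside the rational function $\mathcal{Z}_{G,p,\mathbf{e}}$ enlarges the convergence interval past $(\beta_{IR},\beta_{UV})$: one cannot read the pole at $\beta_{UV}$ off the formula of Theorem \ref{Theorem1}(ii), because a denominator factor attached to an induced subgraph may be cancelled by the numerator — this already happens for the ``top'' factor $1-p^{1-N-s_{2}-\cdots-s_{N}}$ of the star $S_{N}$ in Example \ref{Example_Star}. Non-negativity of the integrand (the shell lower bound, or the Fatou step) is what bypasses this; the role of the hypothesis ``$\mathcal{Z}_{G,p,\mathbf{e}}$ has a pole at $\beta_{+}(H)$'' is to certify a priori that the abscissa of convergence is finite, so that — against the candidate-pole list of Theorem \ref{Theorem1}(ii) — it must equal $\beta_{UV}$; this route also covers the case (small $p$ relative to a chromatic parameter of $H_{0}$) in which the shell constant is not manifestly positive and one instead invokes a Landau--Pringsheim argument for the finite endpoint of the domain of convergence.
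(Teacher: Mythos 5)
Your proposal is correct in substance, but it follows a genuinely different --- and in fact stronger --- route than the paper. The paper's own justification is essentially the short discussion preceding the proposition: Proposition \ref{Prop3} gives convergence (hence analyticity of the free energy) on $(\beta_{IR},\beta_{UV})$, and the hypothesis that the continuation has a pole at the positive value $\beta_{+}(H)$ is invoked directly, together with the remark that a pole forces divergence of the integral beyond it; no lower bound on the integral is ever proved there. You instead prove the divergence: the shell decomposition in which the coordinates indexed by a minimal $H_{0}\in Indgraphs_{-}(G)$ with $\beta_{+}(H_{0})=\beta_{UV}$ cluster at scale $p^{-k}$ yields, after the change of variables (\ref{Change-of_var}), a contribution per shell proportional to $p^{k\left(1-\left\vert V(H_{0})\right\vert-(Char_{+}(H_{0})+Char_{-}(H_{0}))\beta\right)}$, so $\mathcal{Z}_{G,p,\mathbf{e}}(\beta)=+\infty$ for $\beta\geq\beta_{UV}$; combined with positivity of the integrand (your Fatou step, or equivalently Pringsheim applied to $\mathcal{Z}_{G,p,\mathbf{e}}(\beta)=\sum_{n}a_{n}p^{-n\beta}$ with $a_{n}\geq0$), this locates a genuine pole of the rational continuation at $\beta_{UV}$ itself. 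This buys more than the paper's argument: it shows the pole at $\beta_{UV}$ is essentially unconditional (the hypothesis only serves to certify $Indgraphs_{-}(G)\neq\emptyset$ and a finite abscissa of convergence), and it correctly addresses the cancellation issue in (\ref{Eq_Zeta_Num}) that the paper sidesteps by assumption.

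The one soft spot is the pinning of the coordinates outside $H_{0}$ in pairwise distinct residue classes mod $p$: when $p$ is smaller than the number of coordinates to pin (plus the class of the $H_{0}$-cluster) that region is empty, and your fallback ``Landau--Pringsheim argument'' is only gestured at. The gap is easily closed without any case distinction: fix distinct points of $\mathbb{Z}_{p}$, one for each outside vertex and one for the center of the $H_{0}$-cluster, pairwise at distance at least $p^{-m}$ for some fixed $m$, and confine each outside coordinate (and the cluster) to the ball of radius $p^{-m-1}$ around its point. Then every edge not internal to $H_{0}$ with $e_{u}e_{v}=+1$ contributes at least $p^{-m\beta}$, every edge with $e_{u}e_{v}=-1$ contributes at least $1$, and the shell constant is positive for every prime $p$; finiteness of that constant is irrelevant for the lower bound, so even the minimality of $H_{0}$ is not strictly needed.
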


\section{\label{Section_5A}The thermodynamic limit in star graphs}

The study of the thermodynamic limit in general graphs is a difficult matter
since it requires explicit formulas for the partition functions. In this
section we study the thermodynamic limit in star graphs, see Example
\ref{Example_Star}. We consider a neutral gas\ of $M=\left\vert V(S_{M}%
)\right\vert $ particles contained in a ball $B_{k}=p^{-k}\mathbb{Z}_{p}$, for
$k\in\mathbb{N}$. We assume that $M$ is even, and label the vertices of
$S_{M}$ as $V(S_{M})=\left\{  1,2,\ldots,\frac{M}{2},\frac{M}{2}%
+1,\ldots,M\right\}  $, where the vertex $1$ is the center of the star. We
assume a charge distribution $\mathbf{e}=\left\{  e_{i}\right\}  _{1\leq i\leq
M}$\ of the form%
\[
e_{i}=+1\text{ for }i=1,\ldots,\frac{M}{2}\text{ and \ }e_{i}=-1\text{ for
}i=\frac{M}{2}+1,\ldots,M.
\]
We label the edges as $V(E_{M})=\left\{  \left\{  1,2\right\}  ,\ldots
,\left\{  1,\frac{M}{2}\right\}  ,\left\{  1,\frac{M}{2}+1\right\}
,\ldots,\left\{  1,M\right\}  \right\}  $, and attach the complex variable
$s_{i}=s\left(  \left\{  1,i\right\}  \right)  $ of the edge $\left\{
1,i\right\}  $. Now we take $s_{i}=e_{1}e_{i}\beta$ \ for $i=1,\ldots,M$. \ We
denote the partition function attached to $\left(  M,\beta,p,\mathbb{Z}%
_{p},\mathbf{e}\right)  $ as $\mathcal{Z}_{M,0\text{ }}\left(  \beta\right)
$, and by $\mathcal{Z}_{M,k\text{ }}\left(  \beta\right)  $ the partition
function attached \ to $\left(  M,\beta,p,p^{-k}\mathbb{Z}_{p},\mathbf{e}%
\right)  $. Then by using Example \ \ref{Example_Star}, we get that%
\begin{gather}
\mathcal{Z}_{M,0\text{ }}\left(  \beta\right)  =Z(\boldsymbol{s};S_{N}%
)\mid_{s_{i}=e_{1}e_{i}\beta}=\int\limits_{\mathbb{Z}_{p}^{M}}\frac
{\prod\limits_{i=2}^{\frac{M}{2}}\left\vert x_{1}-x_{i}\right\vert _{p}%
^{\beta}}{\prod\limits_{i=1+\frac{M}{2}}^{M}\left\vert x_{1}-x_{i}\right\vert
_{p}^{\beta}}\prod\limits_{i=1}^{M}dx_{i}\label{Partition_function_0}\\
=\frac{\left(  1-p^{-1}\right)  ^{M-1}}{\left(  1-p^{-1-\beta}\right)
^{\frac{M}{2}-1}\left(  1-p^{-1+\beta}\right)  ^{\frac{M}{2}}}.\nonumber
\end{gather}
Notice that the integral in (\ref{Partition_function_0}) converges for
$\beta\in\left(  -1,1\right)  $. By Proposition \ref{Theorem4} $\ $there is a
phase transition at $\beta=1$. In order to determine $\mathcal{Z}_{M,k}\left(
\beta\right)  $ \ we use the Boltzmann factor $%
{\textstyle\prod\nolimits_{i=2}^{\frac{M}{2}}}
\left\vert x_{1}-x_{i}\right\vert _{p}^{\beta}%
{\textstyle\prod\nolimits_{i=1+\frac{M}{2}}^{M}}
\left\vert x_{1}-x_{i}\right\vert _{p}^{-\beta}$, then
\[
\mathcal{Z}_{M,k}\left(  \beta\right)  =\int\limits_{p^{-k}\mathbb{Z}_{p}^{M}%
}\frac{\prod\limits_{i=2}^{\frac{M}{2}}\left\vert x_{1}-x_{i}\right\vert
_{p}^{\beta}}{\prod\limits_{i=1+\frac{M}{2}}^{M}\left\vert x_{1}%
-x_{i}\right\vert _{p}^{\beta}}\prod\limits_{i=1}^{M}dx_{i}=p^{k\left(
-\beta+M\right)  }\mathcal{Z}_{M,0}\left(  \beta\right)  \text{, for }%
k\in\mathbb{N}\text{.}%
\]
The calculation of a thermodynamic limit requires to consider $M=\left\vert
V(S_{M})\right\vert $ $\rightarrow\infty$, $vol\left(  B_{k}\right)
=p^{k}\rightarrow\infty$, with $\frac{M}{p^{k}}=\rho$ fixed.

\subsection{The dimensionless free energy per particle}

Following the canonical formalism of statistical mechanics, we define the
total dimensionless free energy $\beta\mathfrak{F}$ as%
\[
\beta\mathfrak{F}=-\ln\left(  \frac{1}{\left\{  \left(  \frac{M-1}{2}\right)
!\right\}  ^{2}}\mathcal{Z}_{M,k}\left(  \beta\right)  \right)  .
\]
Notice that we use $\left\{  \left(  \frac{M-1}{2}\right)  !\right\}  ^{2}$
instead of $M!$. This is the cardinality of the elements of Aut($S_{N}$)
preserving the charge distribution on $S_{N}$. The dimensionless free energy
per particle $\beta\mathfrak{f}$ is defined as%
\[
\beta\mathfrak{f=}\lim_{\substack{M,vol(B_{k})\rightarrow\infty\\\frac
{M}{vol(B_{k})}=\rho}}\frac{1}{M}\beta\mathfrak{F.}%
\]
Then by using the Stirling formula we have $\frac{2\ln\left(  \frac{M-1}%
{2}\right)  !}{M}\sim-1+\ln\left(  \frac{M}{2}\right)  $, and%
\begin{align*}
\beta\mathfrak{f}  &  =\lim_{\substack{M,vol(B_{k}^{M})\rightarrow
\infty\\M=p^{k}\rho}}\frac{-1}{M}\ln\left(  \frac{1}{\left(  M-1\right)
!}p^{k\left(  -\beta+M\right)  }\frac{\left(  1-p^{-1}\right)  ^{M-1}}{\left(
1-p^{-1-\beta}\right)  ^{\frac{M}{2}-1}\left(  1-p^{-1+\beta}\right)
^{\frac{M}{2}}}\right) \\
&  =\lim_{\substack{M,vol(B_{k})\rightarrow\infty\\M=p^{k}\rho}}\left\{  -\ln
p^{k}-\ln\left(  1-p^{-1}\right)  +\frac{1}{2}\ln\left(  1-p^{-1-\beta
}\right)  \right. \\
&  \left.  +\frac{1}{2}\ln\left(  1-p^{-1+\beta}\right)  +\ln\left(  \frac
{M}{2}\right)  -1\right\} \\
&  =\ln\rho-\ln\left(  1-p^{-1}\right)  +\frac{1}{2}\ln\left(  1-p^{-1-\beta
}\right)  +\frac{1}{2}\ln\left(  1-p^{-1+\beta}\right)  -1-\ln2.
\end{align*}
Now, in the high temperature limit $\beta\rightarrow0$ and the $\beta
\mathfrak{f}$ tends to $\ln\rho-1$. There is phase transition \ at $\beta=1$,
since $\lim_{\beta\rightarrow1}\beta\mathfrak{f}=-\infty$.

The mean energy per particle%
\begin{align*}
\overline{E}  &  =\frac{\partial}{\partial\beta}\beta\mathfrak{f}%
=\lim_{\substack{M,vol(B_{k}^{M})\rightarrow\infty\\M=p^{kM}\rho}}\left(
\frac{1}{M}\left\langle H_{S_{M}}\right\rangle \right) \\
&  =\frac{\ln p}{2}\left(  \frac{p^{-1-\beta}}{1-p^{-1-\beta}}-\frac
{p^{-1+\beta}}{1-p^{-1+\beta}}\right)  .
\end{align*}

\subsection{The grand-canonical potential}

The grand-canonical distribution is the generating function for $\mathcal{Z}%
_{M,k}\left(  \beta\right)  $, for $k\in\mathbb{N}$, i.e.%
\begin{gather*}
\mathcal{Z}_{\beta,k}(X):=\sum\limits_{L=0}^{\infty}\mathcal{Z}_{2L,\beta
,k}X^{2L}:=1+\sum\limits_{L=1}^{\infty}\mathcal{Z}_{2L,\beta,k}X^{2L}\\
=1+\frac{\left(  1-p^{-1-\beta}\right)  }{\left(  1-p^{-1}\right)  }%
p^{-k\beta}\sum\limits_{L=1}^{\infty}\left\{  \frac{p^{k}\left(
1-p^{-1}\right)  X}{\sqrt{\left(  1-p^{-1-\beta}\right)  \left(
1-p^{-1+\beta}\right)  }}\right\}  ^{2L},
\end{gather*}
for $0\leq\beta<1$. Notice that $\mathcal{Z}_{\beta,k}(X)$ has a \ pole at
$\beta=1$, i.e. there is phase transition at $\frac{1}{k_{B}}$. Assuming that
\[
\frac{p^{k}\left(  1-p^{-1}\right)  \left\vert X\right\vert }{\sqrt{\left(
1-p^{-1-\beta}\right)  \left(  1-p^{-1+\beta}\right)  }}<1\text{, for }%
\beta\in\left[  0,1\right)  ,
\]
we have%
\begin{equation}
\mathcal{Z}_{\beta,k}(X)=1+\frac{p^{-k\left(  \beta-2\right)  }\left(
1-p^{-1}\right)  X^{2}}{\left(  1-p^{-1+\beta}\right)  }\left(  \frac
{1}{1-\left\{  \frac{p^{k}\left(  1-p^{-1}\right)  }{\sqrt{\left(
1-p^{-1-\beta}\right)  \left(  1-p^{-1+\beta}\right)  }}\right\}  ^{2}X^{2}%
}\right)  , \label{Eq_meromorphic}%
\end{equation}
for $\beta\in\left[  0,1\right)  $. We now use the meromorphic continuation
given \ (\ref{Eq_meromorphic}) to interpolate the values of $\mathcal{Z}%
_{\beta,k}(X)$ for any $X$, $k$, with $\beta\in\left[  0,1\right)  $. Now, we
compute the grand-canonical potential, by using that $\ln(1+z)\sim z$ as
$z\rightarrow0$,%
\[
\mathcal{X}_{\beta}(X)=\lim_{k\rightarrow\infty}\frac{1}{p^{k}}\ln
\mathcal{Z}_{\beta,k}(X)=0\text{ for }\beta\neq1\text{.}%
\]

\section{\label{Section_5}Local zeta functions for rational functions}

In the 70s Igusa developed a uniform theory for local zeta functions and
oscillatory integrals attached to polynomials with coefficients in a local
field of characteristic zero, \cite{Igusa}, \cite{Igusa-old}. In
\cite{Veys-Zuniga} this theory is extended to the case of rational functions.
We review some results of this article that are require here.

\subsection{Local fields of characteristic zero}

We take $\mathbb{K}$ to be a non-discrete$\ $locally compact field of
characteristic zero. Then $\mathbb{K}$ is $\mathbb{R}$, $\mathbb{C}$, or a
finite extension of $\mathbb{Q}_{p}$, the field of $p$-adic numbers. If
$\mathbb{K}$ is $\mathbb{R}$ or $\mathbb{C}$, we say that $\mathbb{K}$ is an
$\mathbb{R}$\textit{-field}, otherwise we say that $\mathbb{K}$ is a
$p$\textit{-field}.

For $a\in\mathbb{K}$, we define the \textit{modulus} $\left\vert a\right\vert
_{\mathbb{K}}$ of $a$ by%
\[
\left\vert a\right\vert _{K}=\left\{
\begin{array}
[c]{l}%
\text{the rate of change of the Haar measure in }(\mathbb{K},+)\text{ under
}x\rightarrow ax\text{ }\\
\text{for }a\neq0,\\
\\
0\ \text{ for }a=0\text{.}%
\end{array}
\right.
\]
It is well-known that, if $\mathbb{K}$ is an $\mathbb{R}$-field, then
$\left\vert a\right\vert _{\mathbb{R}}=\left\vert a\right\vert $ and
$\left\vert a\right\vert _{\mathbb{C}}=\left\vert a\right\vert ^{2}$, where
$\left\vert \cdot\right\vert $ denotes the usual absolute value in
$\mathbb{R}$ or $\mathbb{C}$, and, if $\mathbb{K}$ is a $p$-field, then
$\left\vert \cdot\right\vert _{\mathbb{K}}$ is the normalized absolute value
in $\mathbb{K}$.

\subsubsection{Structure of the $p$-fields}

A non-Archimedean local field $\mathbb{K}$ (or $p$-field) is a locally compact
topological field with respect to a non-discrete topology, which comes from a
norm $\left\vert \cdot\right\vert _{\mathbb{K}}$ satisfying
\[
\left\vert x+y\right\vert _{\mathbb{K}}\leq\max\left\{  \left\vert
x\right\vert _{\mathbb{K}},\left\vert y\right\vert _{\mathbb{K}}\right\}  ,
\]
for $x,y\in\mathbb{K}$. A such norm is called an \textit{ultranorm or
non-Archimedean}. Any non-Archimedean local field $\mathbb{K}$ of
characteristic zero is isomorphic (as a topological field) to a finite
extension of $\mathbb{Q}_{p}$, and it is called a $p$-adic field. The field
$\mathbb{Q}_{p}$ is the basic example of non-Archimedean local field of
characteristic zero.

The \textit{ring of integers} of $\mathbb{K}$ is defined as
\[
R_{\mathbb{K}}=\left\{  x\in\mathbb{K};\left\vert x\right\vert _{\mathbb{K}%
}\leq1\right\}  .
\]
Geometrically $R_{\mathbb{K}}$ is the unit ball of the normed space $\left(
\mathbb{K},\left\vert \cdot\right\vert _{\mathbb{K}}\right)  $. This ring is a
domain of principal ideals having a unique maximal ideal, which is given by
\[
P_{\mathbb{K}}=\left\{  x\in\mathbb{K};\left\vert x\right\vert _{\mathbb{K}%
}<1\right\}  .
\]
We fix a generator $\mathfrak{p}$ of $P_{\mathbb{K}}$ i.e. $P_{\mathbb{K}%
}=\mathfrak{p}R_{\mathbb{K}}$. A such generator is also called a \textit{local
uniformizing parameter of} $\mathbb{K}$, and it plays the same role as $p$ in
$\mathbb{Q}_{p}.$

The \textit{group of units} of $R_{\mathbb{K}}$ is defined as
\[
R_{\mathbb{K}}^{\times}=\left\{  x\in R_{\mathbb{K}};\left\vert x\right\vert
_{\mathbb{K}}=1\right\}  .
\]
The natural map $R_{\mathbb{K}}\rightarrow R_{\mathbb{K}}/P_{\mathbb{K}}%
\cong\mathbb{F}_{q}$ is called the \textit{reduction mod} $P_{\mathbb{K}}$.
The quotient $R_{\mathbb{K}}/P_{\mathbb{K}}\cong\mathbb{F}_{q}$, is a finite
field with $q=p^{f}$ elements, and it is called the \textit{residue field} of
$\mathbb{K}$. Every non-zero element $x$ of $\mathbb{K}$ can be written
uniquely as $x=\mathfrak{p}^{ord(x)}u$, $u\in R_{\mathbb{K}}^{\times}$. We set
$ord(0)=\infty$. The normalized valuation of $\mathbb{K}$ is the mapping
\[%
\begin{array}
[c]{ccc}%
\mathbb{K} & \rightarrow & \mathbb{Z}\cup\left\{  \infty\right\} \\
x & \rightarrow & ord(x).
\end{array}
\]
Then $\left\vert x\right\vert _{\mathbb{K}}=q^{-ord(x)}$ and $\left\vert
\mathfrak{p}\right\vert _{\mathbb{K}}=q^{-1}$.

We fix $\mathfrak{S}\subset R_{\mathbb{K}}$ a set of representatives of
$\mathbb{F}_{q}$ in $R_{\mathbb{K}}$, i.e. $\mathfrak{S}$ is a set which is
mapped bijectively onto $\mathbb{F}_{q}$ by the reduction $\operatorname{mod}$
$P_{\mathbb{K}}$. We assume that $0\in\mathfrak{S}$. Any non-zero element $x$
of $\mathbb{K}$ can be written as
\[
x=\mathfrak{p}^{ord(x)}\sum\limits_{i=0}^{\infty}x_{i}\mathfrak{p}^{i},
\]
where $x_{i}\in\mathfrak{S}$ and $x_{0}\neq0.$ This series converges in the
norm $\left\vert \cdot\right\vert _{\mathbb{K}}$.

\subsection{Local zeta functions for rational functions}

If $\mathbb{K}$ is a $p$-field, resp. an $\mathbb{R}$-field, we denote by
$\mathcal{D}(\mathbb{K}^{N})$ the $\mathbb{C}$-vector space consisting of all
$\mathbb{C}$-valued locally constant functions, resp. all smooth functions, on
$\mathbb{K}^{N}$, with compact support. An element of $\mathcal{D}%
(\mathbb{K}^{N})$ is called a \textit{test function}. To simplify terminology,
we will call a non-zero test function that takes only real and non-negative
values a \textit{positive}\ test function.

Let $f,g\in\mathbb{K}\left[  x_{1},\ldots,x_{N}\right]  \smallsetminus
\mathbb{K}$ be polynomial functions such that $f/g$ is not constant. Let
$\Phi:\mathbb{K}^{N}\rightarrow\mathbb{C}$ be a test function. Then the local
zeta function attached to $\left(  f/g,\Phi\right)  $ is defined as
\begin{equation}
Z_{\Phi}(s;f/g)=\int\limits_{\mathbb{K}^{N}\smallsetminus D_{\mathbb{K}}}%
\Phi\left(  x\right)  \left\vert \frac{f\left(  x\right)  }{g\left(  x\right)
}\right\vert _{\mathbb{K}}^{s}d^{N}x, \label{zeta}%
\end{equation}
where $\ s\in\mathbb{C}$, $D_{\mathbb{K}}=f^{-1}\left\{  0\right\}  \cup
g^{-1}\left\{  0\right\}  $ and $d^{N}x$ is the normalized Haar measure on
$\left(  \mathbb{K}^{N},+\right)  $. The convergence of the integral in
(\ref{zeta}) is not a straightforward matter; in particular the convergence
does not follow from the fact that $\Phi$ has compact support.

\subsubsection{Numerical data}

For any local field $\mathbb{K}$ of characteristic zero, there exits a finite
set of pair of integers depending on $\left(  f/g,\Phi\right)  $ of the form
\[
\left\{  \left(  n_{i},v_{i}\right)  ;i\in T_{+},n_{i}>0\right\}  \cup\left\{
\left(  n_{i},v_{i}\right)  ;i\in T_{-},n_{i}<0\right\}  ,
\]
where $T_{+}$ and $T_{-}$ are finite sets. We now define
\[
\alpha:=\alpha_{\Phi}=\left\{
\begin{array}
[c]{ll}%
\min_{i\in T_{-}}\left\{  \frac{v_{i}}{|n_{i}|}\right\}  & \text{if }T_{-}%
\neq\emptyset\\
& \\
+\infty & \text{if }T_{-}=\emptyset,
\end{array}
\right.
\]
and%
\[
\gamma:=\gamma_{\Phi}=\left\{
\begin{array}
[c]{ll}%
\max_{i\in T_{+}}\left\{  -\frac{v_{i}}{n_{i}}\right\}  & \text{if }T_{+}%
\neq\emptyset\\
& \\
-\infty & \text{if }T_{+}=\emptyset.
\end{array}
\right.
\]

\subsubsection{Meromorphic continuation: $p$-field\ case\textrm{ }}

Let $\mathbb{K}$ be a $p$-field. Then the following assertions hold: (1)
$Z_{\Phi}\left(  s;f/g\right)  $ converges for $\gamma<\operatorname{Re}%
(s)<\alpha$; (2) $Z_{\Phi}\left(  s;f/g\right)  $ has a meromorphic
continuation to $\mathbb{C}$ as a rational function of $q^{-s}$, and its poles
are of the form
\[
s=-\frac{v_{i}}{n_{i}}+\frac{2\pi\sqrt{-1}}{n_{i}\ln q}k,\ k\in\mathbb{Z},
\]
for $i\in T_{+}\cup T_{-}$. In addition, the order of any pole is at most $N$,
cf. \cite[Theorem 3.2]{Veys-Zuniga}.

\subsubsection{Meromorphic continuation: $\mathbb{R}$-field\ case}

Let $K$ be an $\mathbb{R}$-field. Then the following assertions hold: (1)
$Z_{\Phi}\left(  s;f/g\right)  $ converges for $\gamma<\operatorname{Re}%
(s)<\alpha$; (2) $Z_{\Phi}\left(  s,f/g\right)  $ has a meromorphic
continuation to $\mathbb{C}$, and its poles are of the form
\[
s=-\frac{v_{i}}{n_{i}}-\frac{k}{[\mathbb{K}:\mathbb{R}]n_{i}},\quad
k\in\mathbb{Z}_{\geq0},
\]
for $i\in T_{+}\cup T_{-}$, where $[\mathbb{K}:\mathbb{R}]=1$ if
$\mathbb{K}=\mathbb{R}$ and $[\mathbb{K}:\mathbb{R}]=2$ if $\mathbb{K}%
=\mathbb{C}$. In addition, the order of any pole is at most $N$, cf.
\cite[Theorem 3.5]{Veys-Zuniga}.

\subsection{Existence of poles, largest and smallest poles}

The theorems \cite[Theorems 3.5, 3.2]{Veys-Zuniga} above mentioned provide a
list of the possible poles for the local zeta $Z_{\Phi}\left(  s,f/g\right)  $
in terms of a list (the numerical data of resolution of singularities), which
is not unique neither intrinsic. Now, if $\gamma_{\Phi}\neq-\infty$, say it is
equal to $-\frac{v_{i}}{n_{i}}$ precisely for $i\in T_{\beta}\,(\subset
T_{+})$, then by choosing a suitable positive $\Phi$, $\beta_{\Phi}$ is a pole
of $Z_{\Phi}\left(  s;f/g\right)  $. And, if we assume that $\alpha_{\Phi}%
\neq+\infty$, say it is equal to $\frac{v_{i}}{|n_{i}|}$ precisely for $i\in
T_{\alpha}\,(\subset T_{-})$, then by choosing a suitable positive $\Phi$,
$\alpha_{\Phi}$ is a pole of $Z_{\Phi}\left(  s;f/g\right)  $, cf.
\cite[Theorem 3.9]{Veys-Zuniga}. This implies that $\alpha_{\Phi}$ and
$\gamma_{\Phi}$ do not depend on \ the numerical data used to compute them, if
we choose the test function $\Phi$ conveniently.

In \cite[Theorem 3.9]{Veys-Zuniga} some criteria for the existence of positive
and negative poles were developed. We review those criteria for the existence
of positive poles, since we use them later on. Let $U$ be an open subset of
$\mathbb{K}^{N}$. We assume that $\Phi$ is a test function with support
contained in $U$. If there exists a point $x_{0}\in U$ such that $f\left(
x_{0}\right)  \neq0$ and $g\left(  x_{0}\right)  =0$. Then, for any positive
test function $\Phi$ with support in a small enough neighborhood of $x_{0}$,
the zeta function $Z_{\Phi}\left(  s;f/g\right)  $ has a positive pole. In
particular, if $K=\mathbb{C}$ and $f$ and $g$ are polynomials, then $Z_{\Phi
}\left(  s;f/g\right)  $ always has a positive pole for an appropriate
positive test function $\Phi$, cf. \cite[Corollary 3.12]{Veys-Zuniga}.

\section{\label{Section_6}Phase transitions at finite temperature II}

In this section we consider a gas with \ $\left\vert V(G)\right\vert $
particles confined in a compact subset of $\mathbb{K}^{\left\vert
V(G)\right\vert }$, which is the support of a positive test function $\Phi$.
We assume that the charge distribution $\boldsymbol{e}=\left\{  e_{v}\right\}
_{v\in V(G)}$ satisfies the following hypothesis:%

\begin{equation}
\left\{
\begin{array}
[c]{l}%
e_{v}\in\left\{  +1,-1\right\}  \text{ for any }v\in V(G);\\
\\
\left\{  e_{v}e_{u};v,u\in V(G)\text{, }u\sim v\right\}  =\left\{
+1,-1\right\}  .
\end{array}
\right.  \tag{H1}%
\end{equation}
The Hamiltonian of this log-Coulomb gas has the form:%
\[
H_{\mathbb{K}}(\boldsymbol{x};\boldsymbol{e})=-%
{\textstyle\sum\limits_{\substack{u,v\in V\left(  G\right)  \\u\sim v}}}
\ln\left\vert x_{u}-x_{v}\right\vert _{\mathbb{K}}^{e_{u}e_{v}}-\frac{1}%
{\beta}\ln\Phi\left(  \boldsymbol{x}\right)  .
\]
The Boltzmann factor is $\exp\left(  -\beta H_{\mathbb{K}}(\boldsymbol{x}%
;\boldsymbol{e})\right)  $ and the partition function is%
\[
\mathcal{Z}_{G,\mathbb{K},\Phi,\mathbf{e}}\left(  \beta\right)  =\int
\limits_{\mathbb{K}^{\left\vert V(G)\right\vert }}\Phi\left(  \left\{
x_{v}\right\}  _{v\in V(G)}\right)  \frac{\prod\limits_{\substack{u,v\in
V(G)\\u\sim v;\text{ }e_{u}e_{v}=+1}}\left\vert x_{u}-x_{v}\right\vert
_{\mathbb{K}}^{\beta}}{\prod\limits_{\substack{u,v\in V(G)\\u\sim v;\text{
}e_{u}e_{v}=-1}}\left\vert x_{u}-x_{v}\right\vert _{\mathbb{K}}^{\beta}}%
\prod\limits_{u\in V(G)}dx_{u}.
\]
We now set
\[
f_{G,\mathbf{e}}(\boldsymbol{x}):=\prod\limits_{\substack{u,v\in V(G)\\u\sim
v;\text{ }e_{u}e_{v}=+1}}\left(  x_{u}-x_{v}\right)  \text{ and }%
g_{G,\mathbf{e}}(\boldsymbol{x}):=\prod\limits_{\substack{u,v\in V(G)\\u\sim
v;\text{ }e_{u}e_{v}=-1}}\left(  x_{u}-x_{v}\right)  .
\]
Assume that%
\begin{equation}
\text{there exists }\boldsymbol{x}_{0}\in\mathbb{K}^{\left\vert
V(G)\right\vert }\text{ such that }f_{G,\mathbf{e}}(\boldsymbol{x}_{0}%
)\neq0\text{ and }g_{G,\mathbf{e}}(\boldsymbol{x}_{0})=0. \tag{H2}%
\end{equation}
We pick a positive test function $\Phi$ supported in a small enough
neighborhood of $\boldsymbol{x}_{0}$. Then there exits $\beta_{UV}%
=\beta\left(  \Phi,G\right)  >0$ such that the integral $\mathcal{Z}%
_{G,\mathbb{K},\Phi,\mathbf{e}}\left(  \beta\right)  $ converges for $\beta
\in\left(  0,\beta_{UV}\right)  $, and the meromorphic continuation \ of
$\mathcal{Z}_{G,\mathbb{K},\Phi,\mathbf{e}}\left(  \beta\right)  $ has a pole
at $\beta=\beta_{UV}$.

\begin{theorem}
\label{Theorem5}Assume that $G,\mathbb{K},\Phi,\mathbf{e}$ are given, and the
Hypotheses H1, H2 hold. Then $\mathcal{Z}_{G,\mathbb{K},\Phi,\mathbf{e}%
}\left(  \beta\right)  $ has a phase transition at the temperature $\frac
{1}{k_{B}\beta_{UV}}$.
\end{theorem}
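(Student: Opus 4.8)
The plan is to reduce the statement directly to the criteria for the existence of positive poles of local zeta functions for rational functions, as recalled from \cite{Veys-Zuniga} in Section \ref{Section_5}. First I would observe that, with the charge distribution satisfying Hypothesis H1, the Boltzmann factor $\exp(-\beta H_{\mathbb{K}}(\boldsymbol{x};\boldsymbol{e}))$ equals $\Phi(\boldsymbol{x})\,|f_{G,\mathbf{e}}(\boldsymbol{x})/g_{G,\mathbf{e}}(\boldsymbol{x})|_{\mathbb{K}}^{\beta}$, so that
\[
\mathcal{Z}_{G,\mathbb{K},\Phi,\mathbf{e}}\left(\beta\right)=Z_{\Phi}(\beta;f_{G,\mathbf{e}}/g_{G,\mathbf{e}})
\]
is exactly a local zeta function of the form \eqref{zeta} with $f=f_{G,\mathbf{e}}$, $g=g_{G,\mathbf{e}}$, and complex parameter $s=\beta$. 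The second clause of H1 guarantees that both $f_{G,\mathbf{e}}$ and $g_{G,\mathbf{e}}$ are nonconstant polynomials (each product is nonempty), hence $f_{G,\mathbf{e}}/g_{G,\mathbf{e}}$ is a nonconstant rational function, so the theory of Section \ref{Section_5} applies verbatim.

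Next I would invoke Hypothesis H2: there is a point $\boldsymbol{x}_{0}\in\mathbb{K}^{\left\vert V(G)\right\vert}$ with $f_{G,\mathbf{e}}(\boldsymbol{x}_{0})\neq0$ and $g_{G,\mathbf{e}}(\boldsymbol{x}_{0})=0$. By the criterion for positive poles \cite[Theorem 3.9, Corollary 3.12]{Veys-Zuniga} reviewed at the end of Section \ref{Section_5}, for any positive test function $\Phi$ supported in a sufficiently small neighborhood $U$ of $\boldsymbol{x}_{0}$ the zeta function $Z_{\Phi}(s;f_{G,\mathbf{e}}/g_{G,\mathbf{e}})$ has a positive pole; moreover, by the convergence statement for the $p$-field and $\mathbb{R}$-field cases, it converges on the strip $\gamma_{\Phi}<\operatorname{Re}(s)<\alpha_{\Phi}$ with $\gamma_{\Phi}<0$ (since shrinking the support around a point where $f_{G,\mathbf{e}}$ does not vanish removes all contributions forcing a nonnegative lower bound, and in particular the origin is in the convergence region, giving convergence for $\operatorname{Re}(s)$ near $0$ from below as well, so $\gamma_{\Phi}<0<\alpha_{\Phi}$). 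Setting $\beta_{UV}:=\alpha_{\Phi}$, we get that $\mathcal{Z}_{G,\mathbb{K},\Phi,\mathbf{e}}(\beta)$ converges for $\beta\in(0,\beta_{UV})$, and that $\beta_{UV}>0$ is a genuine pole of its meromorphic continuation, for an appropriate choice of $\Phi$. (When $\mathbb{K}=\mathbb{C}$, Corollary 3.12 makes the existence of such a positive pole automatic.)

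Finally, to conclude the phase transition statement, I would argue exactly as in the discussion preceding Proposition \ref{Theorem4}: since $\mathcal{Z}_{G,\mathbb{K},\Phi,\mathbf{e}}(\beta)$ has a pole at $\beta=\beta_{UV}>0$, the function $\ln\mathcal{Z}_{G,\mathbb{K},\Phi,\mathbf{e}}(\beta)$ — hence any canonical free energy built from it — fails to be analytic at $\beta=\beta_{UV}$, and this non-analyticity is attained at a \emph{finite} positive $\beta$, without passing to a thermodynamic limit. Therefore $\mathcal{Z}_{G,\mathbb{K},\Phi,\mathbf{e}}(\beta)$ has a phase transition at temperature $\frac{1}{k_{B}\beta_{UV}}$. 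The only genuinely delicate step is the second one: checking that the numerical data extracted from an embedded resolution of singularities of $f_{G,\mathbf{e}}/g_{G,\mathbf{e}}$ actually produce $\alpha_{\Phi}<+\infty$ and that this value is realized as an honest pole rather than being cancelled — but this is precisely the content of \cite[Theorem 3.9]{Veys-Zuniga} applied with $\Phi$ localized near $\boldsymbol{x}_{0}$, so no new work beyond citing it is required. I would close by remarking that the interval $(0,\beta_{UV})$ and the value $\beta_{UV}$ depend on the chosen $\Phi$, in contrast with the $p$-adic ball case of Section \ref{Section_4}.
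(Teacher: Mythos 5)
Your proposal is correct and follows essentially the same route as the paper: the paper also identifies $\mathcal{Z}_{G,\mathbb{K},\Phi,\mathbf{e}}(\beta)$ with the rational-function zeta function $Z_{\Phi}(s;f_{G,\mathbf{e}}/g_{G,\mathbf{e}})$ at $s=\beta$, invokes the positive-pole criterion of \cite[Theorem 3.9, Corollary 3.12]{Veys-Zuniga} at the point $\boldsymbol{x}_{0}$ furnished by H2 for a positive $\Phi$ with small enough support, and concludes the phase transition from the non-analyticity of $\ln\mathcal{Z}_{G,\mathbb{K},\Phi,\mathbf{e}}(\beta)$ at the resulting pole $\beta_{UV}>0$, exactly as in the discussion preceding Proposition \ref{Theorem4}.
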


\textbf{Data Availability}. No new data were created or analyzed in this
study. Data sharing is not applicable to this article.

\begin{acknowledgement}
The first \ was partially supported by the Debnath Endowed Professorship,
UTRGV. The third author was partially supported by Conacyt Grant No. 286445.
On behalf of all authors, the corresponding author states that there is no
conflict of interest.
\end{acknowledgement}

\bigskip


\begin{thebibliography}{99}                                                                                               %


\bibitem {A-K-S}Albeverio S., Khrennikov A. Yu. and Shelkovich V. M.,
\textit{Theory of }$p$\textit{-adic distributions: linear and nonlinear
models} (Cambridge University Press, 2010).

\bibitem {Anashin}Anashin Vladimir, Khrennikov Andrei, \textit{Applied
algebraic dynamics}. De Gruyter Expositions in Mathematics, 49 (Walter de
Gruyter \& Co., 2009).

\bibitem {Av-4}V. A. Avetisov, A. Kh. Bikulov, V. A. Osipov, $\ $%
\textquotedblleft$p$-Adic description of characteristic relaxation in complex
systems,\textquotedblright\ \textit{J. Phys. A} \textbf{36}(15), 4239--4246 (2003).

\bibitem {Av-5}V. A. Avetisov, A. H. Bikulov, S. V. Kozyrev, V. A. Osipov,
\textquotedblleft$p$-Adic models of ultrametric diffusion constrained by
hierarchical energy landscapes,\textquotedblright\ \textit{J. Phys. A}
\textbf{35}(2), 177--189 (2002).

\bibitem {Balakrishnan et al}R. Balakrishnan and K. Ranganathan, \textit{A
textbook of graph theory. Second edition }(Springer, 2012).

\bibitem {Biggs}Biggs Norman, \textit{Algebraic graph theory. Second edition}
(Cambridge University Press, 1993).

\bibitem {Bocardo-Gaspar et al}Miriam Bocardo-Gaspar, Hugo
Garc\'{\i}a-Compe\'{a}n, Edgar Y. L\'{o}pez, and Wilson A.
Z\'{u}\~{n}iga-Galindo, \textquotedblleft Local Zeta Functions and
Koba--Nielsen String Amplitudes,\textquotedblright\ \textit{Symmetry}
\textbf{13}(6) 967 (2021). https://doi.org/10.3390/sym13060967.

\bibitem {Bourbaki}Bourbaki N. \textit{\'{E}l\'{e}ments de math\'{e}matique.
Fasc. XXXVI. Vari\'{e}t\'{e}s diff\'{e}rentielles et analytiques. Fascicule de
r\'{e}sultats (Paragraphes 8 \`{a} 15)} (Actualit\'{e}s Scientifiques et
Industrielles, No. 1347 Hermann, 1971).

\bibitem {Clair et al}B. Clair and S. Mokhtari-Sharghi, \textquotedblleft Zeta
functions of discrete groups acting on trees,\textquotedblright\ \textit{J.
Algebra} \textbf{237}, 591--620 (2001).

\bibitem {Connes-1}Alain \ Connes, \textquotedblleft Trace formula in
noncommutative geometry and the zeros of the Riemann zeta
function,\textquotedblright\ \textit{Selecta Math. (N.S.)} \textbf{5}(1),
29--106 (1999 ).

\bibitem {Connes-2}Alain Connes, \textquotedblleft Noncommutative geometry and
the Riemann zeta function,\textquotedblright\ \textit{Mathematics:}
\textit{Frontiers and Perspectives, 35--54}, Amer. Math. Soc., Providence, RI. 2000

\bibitem {Connes-Marcolli}Alain Connes and Matilde Marcolli,
\textit{Noncommutative geometry, quantum fields and motives} (Hindustan Book
Agency, 2008).

\bibitem {Denef}J. Denef, \textquotedblleft Report on Igusa's Local Zeta
Function,\textquotedblright\textit{ S\'{e}minaire Bourbaki}, Vol. 1990/91,
Exp. No.730-744 \textit{Ast\'{e}risque} 201-203, 359-386 (1991).

\bibitem {DL1}Jan \ Denef and Fran\c{c}ois Loeser, \textquotedblleft Motivic
Igusa zeta functions,\textquotedblright\ \textit{J. Algebraic Geom.}
\textbf{7}(3), 505--537 (1998).

\bibitem {Dorjevic et al}Goran S. Djordjevi\'{c}, Branko Dragovich,
Ljubi\v{s}a Ne\v{s}i\'{c}, \textquotedblleft Adelic path integrals for
quadratic Lagrangians,\textquotedblright\ \textit{Infin. Dimens. Anal. Quantum
Probab. Relat. Top.} \textbf{6}(2), 179--195 (2003).

\bibitem {D-K-K-V}B. Dragovich, A. Yu. Khrennikov, S. V. Kozyrev and I. V.
Volovich, \textquotedblleft On $p-$adic mathematical
physics,\textquotedblright\ $p-$\textit{adic Numbers} \textit{Ultrametric
Anal. Appl.} \textbf{1}(1), 1--17 (2009).

\bibitem {DysonFreeman}F. J. Dyson, \textquotedblleft An Ising ferromagnet
with discontinuous long-range order,\textquotedblright\ \textit{Comm. Math.
Phys.} \textbf{21}, 269--283 (1971).

\bibitem {Forrester et al}Peter J. Forrester and S. Ole Warnaar,
\textquotedblleft The importance of the Selberg integral,\textquotedblright%
\ \textit{Bull. Amer. Math. Soc. (N.S.)} \textbf{45}(4), 489--534 (2008).

\bibitem {Forrester}P. J. Forrester, \textit{Log-gases and random matrices
}(Princeton University Press, 2010).

\bibitem {G-S}I. M. Gel'fand and G.E. Shilov, \textit{Generalized Functions,
vol 1.} (Academic Press, 1997).

\bibitem {Grigorchuk et al}R. I. Grigorchuk and A. Zuk, \textquotedblleft The
Ihara zeta function of infinite graphs, the KNS spectral measure and
integrable maps,\textquotedblright\ \textit{in Random Walks and Geometry},
Walter de Gruyter, Berlin, pp. 141--180 (2004).

\bibitem {Gubser et al}S. S. Gubser, Ch. Jepsen, Z. Ji and B. Trundy,
\textquotedblleft Continuum limits of sparse coupling
patterns,\textquotedblright\ \textit{Phys. Rev. D} \textbf{98}(4), 045009, (2018).

\bibitem {Guido et al}Daniele Guido, Tommaso Isola and Michel L. Lapidus,
\textquotedblleft Ihara zeta functions for periodic simple
graphs,\textquotedblright\ \textit{C*-algebras and elliptic theory II}, Trends
Math., Birkh\"{a}user, Basel. pp 103--121 (2008).

\bibitem {Knauf}Andreas Knauf, \textquotedblleft Number theory, dynamical
systems and statistical mechanics,\textquotedblright\ \textit{\ Rev. Math.
Phys.} \textbf{11}(8), 1027--1060 (1999).

\bibitem {Igusa-old}J.-I. Igusa, \textit{Forms of higher degree} (Narosa
Publishing House, 1978).

\bibitem {Igusa}J.-I., Igusa, \textit{An introduction to the theory of local
zeta functions} (American Mathematical Society, International Press, 2000).

\bibitem {Kang et al}Li Kang Ming-Hsuan, Winnie Wen-Ching and Chian-Jen Wang,
\textquotedblleft The zeta functions of complexes from PGL(3): a
representation-theoretic approach,\textquotedblright\ \textit{Israel J. Math.}
\textbf{177}, 335--348 (2010).

\bibitem {Khrennikov et al}A. Yu. Khrennikov, F. M. Mukhamedov and J. F. F.
Mendes, \textquotedblleft On $p$-adic Gibbs measures of the countable state
Potts model on the Cayley tree,\textquotedblright\ \textit{Nonlinearity}
\textbf{20}(12), 2923--2937 (2007).

\bibitem {Khrenikov1}A. Khrennikov, \textit{Information Dynamics in Cognitive,
Psychological, Social and Anomalous Phenomena} (Springer, 2004).

\bibitem {Khrennikov2}A. Yu. Khrennikov, \textit{Non-Archimedean Analysis:
Quantum Paradoxes, Dynamical Systems and Biological Models} (Kluwer Academic
Publishers, 1997).

\bibitem {Khrennikov-Kozyrev}A. Yu. Khrennikov, S.V. Kozyrev,
\textquotedblleft Replica symmetry breaking related to a general ultrametric
space I: Replica matrices and functionals,\textquotedblright\ \textit{Physica
A: Statistical Mechanics and its Applications}, \textbf{359}, 222-240 (2006).

\bibitem {KKZuniga}A. Khrennikov, S. Kozyrev and W. A. Z\'{u}\~{n}iga-Galindo,
\textit{Ultrametric Equations and its Applications} (Cambridge University
Press, 2018).

\bibitem {Lerner-1}\`{E}. Yu Lerner, \textquotedblleft Feynman integrals of a
$p$-adic argument in a momentum space. I. Convergence,\textquotedblright%
\ \textit{Theoret. and Math. Phys.} \textbf{102}(3), 267--274 (1995).

\bibitem {Lerner-Missarov}\`{E}. Yu. Lerner and M. D. Missarov,
\textquotedblleft$p$-adic Feynman and string amplitudes,\textquotedblright%
\ \textit{Comm. Math. Phys.} \textbf{121}(1), 35--48 (1989).

\bibitem {Lerner-Missarov-2}\`{E}. Yu. Lerner and M. D. \ Missarov,
\textquotedblleft Scalar models of $p$-adic quantum field theory, and a
hierarchical model,\textquotedblright\ \textit{Theoret. and Math. Phys.}
\textbf{78}(2), 177--184 (1989).

\bibitem {Loeser}F. Loeser, \textquotedblleft Fonctions z\^{e}ta locales
d'Igusa \`{a} plusieurs variables, int\'{e}gration dans les fibres, et
discriminants,\textquotedblright\ \textit{Ann. Sci. \'{E}cole Norm. Sup.}
\textbf{22}(3), 435--471 (1989).

\bibitem {Mis}M. D. Missarov, \textquotedblleft The continuum limit in the
fermionic hierarchical model,\textquotedblright\ \textit{Theoret. and Math.
Phys.} \textbf{118}(1), 32--40 (1999).

\bibitem {Mukhamedov-1}F. Mukhamedov and H. Ak\i n, \textquotedblleft Phase
transitions for $p$-adic Potts model on the Cayley tree of order
three,\textquotedblright\ \textit{J. Stat. Mech. Theory Exp.} \textbf{7},
P07014, (2013).

\bibitem {Mukhamedov-2}F. Mukhamedov, \textquotedblleft On the strong phase
transition for the one-dimensional countable state $p$-adic Potts
model,\textquotedblright\ \textit{J. Stat. Mech. Theory Exp.} \textbf{1},
P01007, (2014).

\bibitem {Mukhamedov-3}F. Mukhamedov, M. Saburov and O. Khakimov,
\textquotedblleft On $p$-adic Ising-Vannimenus model on an arbitrary order
Cayley tree,\textquotedblright\ \textit{J. Stat. Mech. Theory Exp.}
\textbf{5}, P05032, (2015).

\bibitem {M-P-V}M. M\'{e}zard, G. Parisi, M. A. Virasoro, \textit{Spin glass
theory and beyond} (World Scientific, 1987).

\bibitem {Parisi}G. Parisi, \textquotedblleft On p-adic functional
integrals,\textquotedblright\ \textit{Modern Physics Letters A},
\textbf{3}(6), 639-643 (1988).

\bibitem {Parisi-Sourlas}G. Parisi and N. Sourlas, \textquotedblleft$p$-Adic
numbers and replica symmetry breaking,\textquotedblright\ \textit{Eur. Phys.
J. B} \textbf{14}, 535--542 (2000). https://doi.org/10.1007/s100510051063.

\bibitem {R-T-V}R. Rammal, G. Toulouse and M. A. Virasoro, \textquotedblleft
Ultrametricity for physicists,\textquotedblright\ \textit{Rev. Modern Phys.}
\textbf{58}(3), 765--788 (1986).

\bibitem {Rodriguez-Zuniga-2010}J. J. Rodr\'{\i}guez-Vega and W. A.
Z\'{u}\~{n}iga-Galindo, \textquotedblleft Elliptic pseudodifferential
equations and Sobolev spaces over p-adic fields,\textquotedblright%
\ \textit{Pacific J. Math. }\textbf{246}(2), 407--420 (2010).

\bibitem {Simon}Simon Barry, \textit{The statistical mechanics of lattice
gases. Vol. I.} (Princeton University Press, 1993)

\bibitem {Sinai}Ya. G. Sina\u{\i} , \textit{Theory of phase transitions:
rigorous results} (Pergamon Press 1982).

\bibitem {Sinclair}Ch. D. Sinclair and J. D. Vaaler, \textquotedblleft The
distribution of non-archimedean absolute Vandermonde
determinants,\textquotedblright\ \textit{Preprint, (}2019).

\bibitem {Taibleson}M. H. Taibleson, \textit{Fourier analysis on local fields}
(Princeton University Press, 1975).{}

\bibitem {Terras}Audrey Terras, \textit{Zeta functions of graphs}. \textit{A
stroll through the garden }(Cambridge University Press, 2011).

\bibitem {Veys-Zuniga}Willem Veys and W. A. Z\'{u}\~{n}iga-Galindo,
\textquotedblleft Zeta functions and oscillatory integrals for meromorphic
functions,\textquotedblright\ \textit{Adv. Math.} \textbf{31}, 295--337 (2017).

\bibitem {V-V-Z}V. S. Vladimirov, I. V. Volovich and E. I. Zelenov,
$p$\textit{-adic analysis and mathematical physics} (World Scientific, 1994).

\bibitem {Zuniga-JpA}W. A. Z\'{u}\~{n}iga-Galindo, \textquotedblleft
Non-archimedean replicator dynamics and Eigen's paradox,\textquotedblright%
\ \textit{J. Phys. A} \textbf{51}(50), 505601 (2018).

\bibitem {Zuniga-Nonlinearity}W. A. Z\'{u}\~{n}iga-Galindo, \textquotedblleft
Non-Archimedean reaction-ultradiffusion equations and complex hierarchic
systems,\textquotedblright\ \textit{Nonlinearity} \textbf{31}(6), 2590--2616 (2018).

\bibitem {Zuniga-LNM-2016}W. A. Z\'{u}\~{n}iga-Galindo,
\textit{Pseudodifferential equations over non-Archimedean spaces} (Springer
Nature, 2016 ).

\bibitem {Zuniga-Torba}W. A. Z\'{u}\~{n}iga-Galindo and Sergii M. Torba,
\textquotedblleft Non-Archimedean Coulomb gases,\textquotedblright\textit{ J.
Math. Phys.} \textbf{6}(1), 013504 (2020).
\end{thebibliography}
\end{document}